\newtheorem{theorem}{Theorem}
\newtheorem{lemma}{Lemma}
\newtheorem{definition}{Definition}
\newcommand{\TLLL}{T_{\operatorname{LLL}}}
\newcommand{\acc}{\mathsf{accept}}
\newcommand{\rej}{\mathsf{reject}}
\newcommand{\err}{\mathsf{error}}
\newcommand{\ignore}[1]{}
\newcommand{\floor}[1]{\lfloor #1 \rfloor}
\newcommand{\dist}{\operatorname{dist}}
\newcommand{\ID}{\operatorname{ID}}
\newcommand{\LOCAL}{\mathsf{LOCAL}}
\newcommand{\Lovasz}{Lov\'{a}sz}
\newcommand{\Pset}{\mathscr{P}}
\newcommand{\Hset}{\mathscr{H}}
\newcommand{\Tset}{\mathscr{T}}
\newcommand{\Sset}{\mathscr{S}}
\newcommand{\GC}[1]{{G}_{#1}^{\sf C}}
\newcommand{\GR}[1]{{G}_{#1}^{\sf R}}
\newcommand{\VC}[1]{{V}_{#1}^{\sf C}}
\newcommand{\VR}[1]{{V}_{#1}^{\sf R}}
\newcommand{\RRC}[1]{{\mathcal{R}}_{#1}^{\sf C}}
\newcommand{\RRR}[1]{{\mathcal{R}}_{#1}^{\sf R}}
\newcommand{\TTC}[1]{{\mathcal{T}}_{#1}^{\sf C}}
\newcommand{\TTR}[1]{{\mathcal{T}}_{#1}^{\sf R}}
\newcommand{\rake}{{\sf Rake}}
\newcommand{\compress}{{\sf Compress}}
\newcommand{\sk}{{\sf skel}}
\newcommand{\vi}{{\sf virt}}
\newcommand{\type}{\textsf{Type}}
\newcommand{\class}{\textsf{Class}}
\newcommand{\replace}{\textsf{Replace}}
\newcommand{\extend}{\textsf{Extend}}
\newcommand{\labelling}{\textsf{Label}}
\newcommand{\ext}{\textsf{Label}\mbox{-}\textsf{Extend}}
\newcommand{\pump}{\textsf{Pump}}
\newcommand{\truncate}{\textsf{Change}}
\newcommand{\cut}{\textsf{Duplicate}\mbox{-}\textsf{Cut}}
\newcommand{\bottom}{\perp}
\newcommand{\LabelIn}{\Sigma_{\operatorname{in}}}
\newcommand{\LabelOut}{\Sigma_{\operatorname{out}}}
\newcommand{\Lpump}{ {\ell_{\operatorname{pump}}} }
\newcommand{\simm}{{\overset{\star}{\sim}}}
\newcommand{\GG}{\mathcal{G}}
\newcommand{\HH}{\mathcal{H}}
\newcommand{\TT}{\mathcal{T}}
\newcommand{\LL}{\mathcal{L}}
\newcommand{\XX}{\mathcal{X}}
\newcommand{\YY}{\mathcal{Y}}
\newcommand{\ZZ}{\mathcal{Z}}
\newcommand{\MM}{\mathcal{M}}
\newcommand{\EXPTIME}{\mathsf{EXPTIME}}
\newcommand{\PSPACE}{\mathsf{PSPACE}}
\title{The Complexity Landscape of Distributed \\ Locally Checkable Problems on Trees\footnote{We thank Lickan Dang, Sebastian Brandt, and Chiara Piombi for pointing out errors in an earlier version of Lemma~\ref{lem-decomp-2}.}}
\author{Yi-Jun Chang\footnote{National University of Singapore. ORCID: 0000-0002-0109-2432. Email: cyijun@nus.edu.sg}}
\begin{document}
\date{}
\maketitle 

\begin{abstract}
Recent research revealed the existence of \emph{gaps} in the  complexity landscape of  {\em locally checkable labeling} (LCL) problems in the $\LOCAL$ model of distributed computing.  For example, the deterministic round complexity of any LCL problem on bounded-degree graphs is  either $O(\log^\ast n)$ or $\Omega(\log n)$ {[}Chang, Kopelowitz, and Pettie, FOCS 2016{]}. The complexity landscape of LCL problems is now quite well-understood, but a few questions remain open. 

For bounded-degree trees, there is an LCL problem with round complexity $\Theta(n^{1/k})$ for each positive integer $k$ {[}Chang and Pettie, FOCS 2017{]}. It is conjectured that 
no LCL problem has round complexity $o(n^{1/(k-1)})$ and $\omega(n^{1/k})$ on bounded-degree trees. As of now, only the case of $k = 2$ has been proved {[}Balliu et al., DISC 2018{]}.

In this paper, we show that for  LCL problems on bounded-degree trees, there is indeed a gap  between   $\Theta(n^{1/(k-1)})$ and $\Theta(n^{1/k})$ for each $k \geq 2$. Our proof is \emph{constructive} in the sense that it offers a  sequential algorithm that decides which side of the gap a given   LCL problem belongs to.
We also show that 
 it is $\EXPTIME$-hard to distinguish between $\Theta(1)$-round and $\Theta(n)$-round LCL problems on bounded-degree trees.  This improves upon a previous $\PSPACE$-hardness result {[}Balliu et al., PODC 2019{]}.
\end{abstract}

\thispagestyle{empty}
\newpage
\thispagestyle{empty}
\tableofcontents
\newpage
\pagenumbering{arabic}

\section{Introduction}

In this paper, we consider Linial's $\LOCAL$ model of distributed computing~\cite{Linial92,Peleg00}, where the input graph $G=(V,E)$ and the communication network are identical. Each vertex $v \in V$ corresponds to a processor, each edge $e \in E$ corresponds to a communication link, and the computation proceeds in
synchronized rounds. There is no restriction on the local computation power and the message size. The main complexity measure for an algorithm is the number of rounds.
We assume that the number of vertices $n = |V|$ and the maximum degree $\Delta = \max_{v \in V} \deg(v)$ are global knowledge.

There is a recent line of research~\cite{BrandtEtal16,balliu2019hardness,ChangKP16,ChangP17,FischerG17,GhaffariHK18,Balliu18,ChangHLPU18,Balliu2018disc,BalliuBOS19randomness,RozhonG19arxiv} aiming to systematically understand the round complexity of distributed graph problems, with a focus on the {\em locally checkable labelings} (LCL) problems~\cite{NaorS95}, which is the class of distributed problems whose solution is locally verifiable by examining a constant-radius neighborhood of each vertex.
The class of LCL problems is sufficiently general that it encompasses many well-studied problems  in the $\LOCAL$ model, such as maximal matching, maximal independent set, $(\Delta+1)$-vertex coloring, and sinkless orientation.

For example, in the $(\Delta+1)$-vertex coloring problem, the output of each vertex $v$ is a color $c(v) \in \{1, 2, \ldots, \Delta + 1\}$. The output is a \emph{legal solution} if $c(u) \neq c(v)$ for each edge $e = \{u, v\} \in E$.  Each vertex $v$ can locally check if it has a neighbor $u \in N(v)$ with $c(u) = c(v)$ by examining the output within its radius-1 neighborhood.

By restricting ourselves to LCLs, we can avoid dealing with
  uninteresting artificial problems, such as the problem that asks each vertex to gather the IDs of all vertices within radius $\sqrt{n}$.

\subsection{The Spectrum of Distributed Complexities}

Different from the sequential setting such as the Turing machine or the RAM model, in the complexity landscape of LCL problems in the $\LOCAL$ model,   several large \emph{gaps} exist in the complexity landscape.

\begin{figure}[h!]
\centering
{\includegraphics[width=1\textwidth]{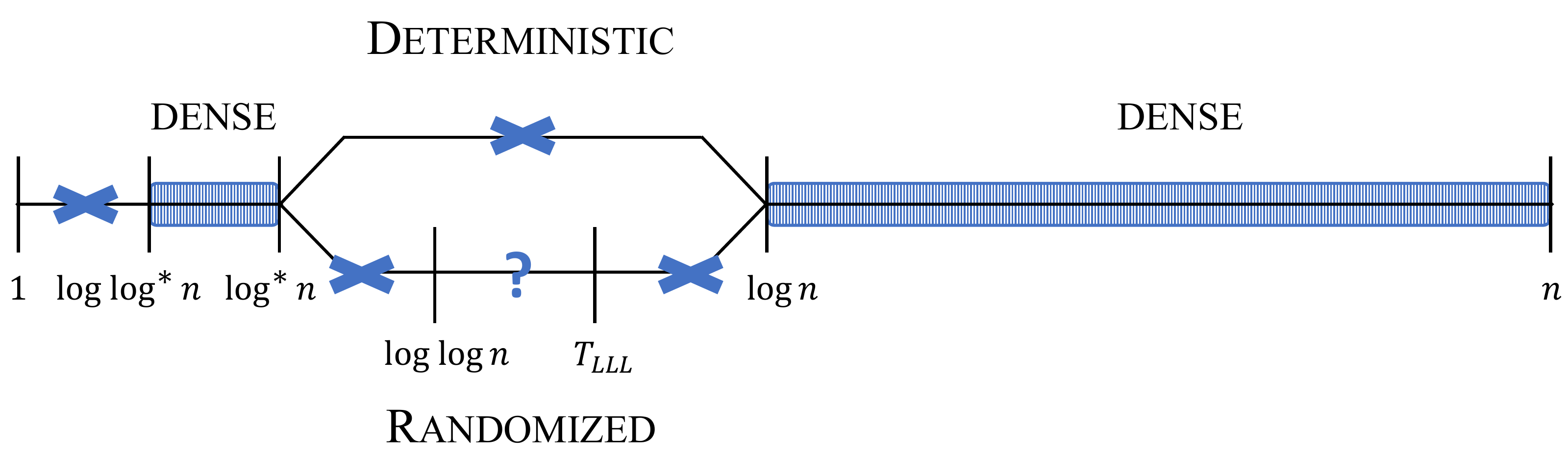}}
\caption{\label{fig:LCL-landscape-general} Complexity landscape of LCLs on bounded-degree general graphs}
\end{figure}

\paragraph{General graphs.}
Chang, Kopelowitz, and Pettie~\cite{ChangKP16} showed that for any LCL problem on bounded-degree graphs, its deterministic round complexity is either $O(\log^\ast n)$ or $\Omega(\log n)$, and its randomized round complexity is either $O(\log^\ast n)$ or $\Omega( \log \log n)$. Chang and Pettie~\cite{ChangP17} showed that any $o(\log n)$-round randomized algorithm for an LCL problem can be accelerated to run in $O(\TLLL)$ rounds, where $\TLLL$ is the randomized  complexity of the  {distributed constructive \Lovasz\ Local Lemma} (LLL)~\cite{ChungPS17} under a polynomial criterion $p d^{c} = O(1)$ for any positive constant $c$. It was conjectured in~\cite{ChangP17} that  $\TLLL = \Theta(\log \log n)$.
Chang and Pettie also showed that the gap $\omega(1)$--$o(\log \log^\ast n)$ can be derived using the approach of Naor and Stockmeyer~\cite{NaorS95} which is based on Ramsey theory.
Balliu~et~al.~showed that the two remaining regions $[\Theta(\log \log^\ast n), \Theta(\log^\ast n)]$ and $[\Theta(\log  n), \Theta(n)]$ are \emph{dense} in that many round complexity functions within these ranges can be realized by LCL problems~\cite{Balliu18}. See Figure~\ref{fig:LCL-landscape-general} for an illustration of the complexity landscape of LCLs on bounded-degree general graphs. 

\begin{figure}[h!]
\centering
{\includegraphics[width=1\textwidth]{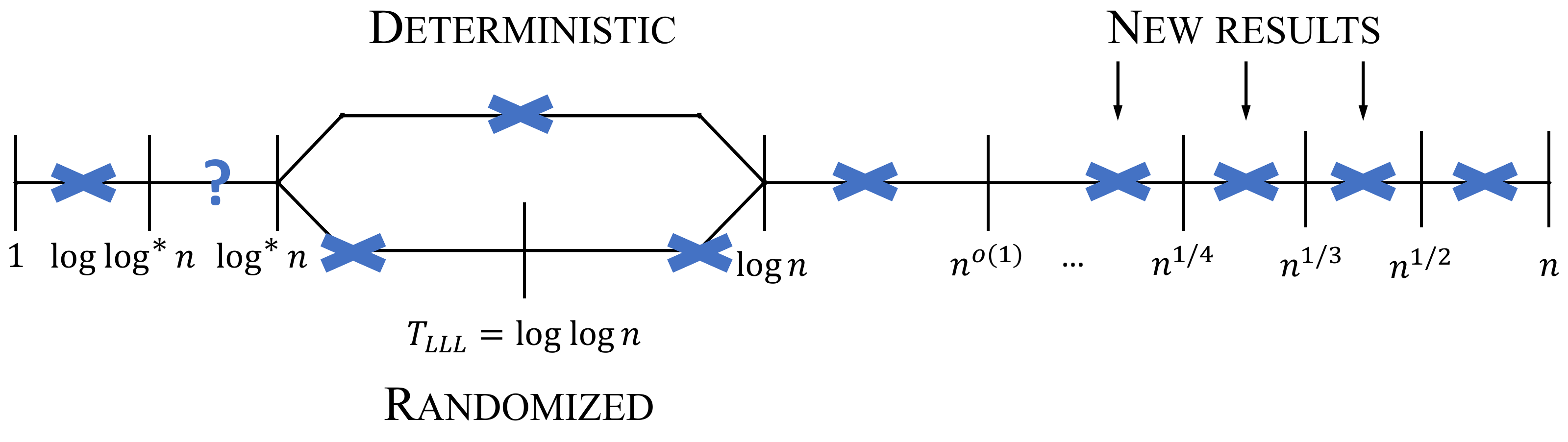}}
\caption{\label{fig:LCL-landscape-tree} Complexity landscape of LCLs on bounded-degree trees}
\end{figure}

\paragraph{Trees.} For bounded-degree trees, the four gaps in the lower end of the spectrum $[\Theta(1), \Theta(\log n)]$ are the same as that of general graphs. Chang~et~al.~\cite{ChangHLPU18} showed that $\TLLL = O(\log \log n)$ on bounded-degree trees.
It was conjectured in~\cite{ChangP17} that the $\omega(1)$--$o(\log \log^\ast n)$ gap on bounded-degree trees can be extended to  $\omega(1)$--$o(\log^\ast n)$. So far this conjecture was proved only for the special case of \emph{homogeneous} problems~\cite{balliu2019hardness}.
For the higher end of the spectrum  $[\Theta(\log n), \Theta(n)]$, Chang and Pettie~\cite{ChangP17} showed that any distributed algorithm that takes $n^{o(1)}$ rounds on bounded-degree trees can be accelerated to run in just $O(\log n)$ rounds, and there exists an LCL problem with complexity $\Theta(n^{1/k})$ for each $k \geq 1$. It was left as an open problem to decide if there are gaps between them.  Recently, Balliu et al.~\cite{Balliu2018disc} showed that there is indeed a gap between $\Theta(\sqrt{n})$ and $\Theta(n)$, but the other cases are still open. 
See Figure~\ref{fig:LCL-landscape-tree} for an illustration of the complexity landscape of LCLs on bounded-degree trees.

\paragraph{New result.} We prove the existence of the gap $\omega(n^{1/k})$--$o(n^{1/(k-1)})$ of LCL problems on bounded-degree trees, for each  integer $k \geq 2$. This result is obtained by unifying the approach of Balliu et al.~\cite{Balliu2018disc} and the approach of Chang and Pettie~\cite{ChangP17} using a generalized tree decomposition algorithm. 

\begin{theorem}\label{thm-gap-main}
For any integer $k \geq 2$, for any LCL problem $\mathcal{P}$ on bounded-degree trees, either one of the following holds.
\begin{itemize}
    \item The deterministic and randomized round complexities of $\mathcal{P}$ are $\Omega(n^{1/(k-1)})$.
    \item The deterministic and randomized round complexities of $\mathcal{P}$ are $O(n^{1/k})$.
\end{itemize}
Furthermore, there is a sequential algorithm that given an integer $k \geq 2$ and a description of $\mathcal{P}$ decides which side of the gap $\omega(n^{1/k})$--$o(n^{1/(k-1)})$ the problem $\mathcal{P}$  belongs to.
\end{theorem}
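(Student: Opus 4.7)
The plan is to establish the dichotomy by a speedup argument: starting from any LCL $\mathcal{P}$ with an algorithm $A$ of round complexity $T(n) = o(n^{1/(k-1)})$, I would construct a new algorithm $A'$ of complexity $O(n^{1/k})$. The construction unifies the rake-and-compress speedup of Chang and Pettie (which accelerates $n^{o(1)}$ algorithms to $O(\log n)$) with the boundary-extension technique of Balliu et al.\ (which handles the case $k=2$). The randomized case is treated in the same way as the deterministic case, since the per-layer subroutine is used as a black box.

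First, I would compute a \emph{generalized rake-and-compress decomposition} of the input tree $T$ into a constant number $K = K(k,\Delta)$ of layers $V_1, \ldots, V_K$, such that every connected component of the subgraph induced by any $V_i$ has at most $O(n^{(k-1)/k})$ vertices and diameter $O(n^{1/k})$. The decomposition is produced by iterating, for $i = 1, \ldots, K$, a rake-and-compress step that removes all subtrees of depth at most $n^{1/k}$ and contracts all degree-$2$ paths of length at most $n^{1/k}$. Each step runs in $O(n^{1/k})$ rounds in the $\LOCAL$ model, and a counting argument on the tree structure, together with the threshold $n^{1/k}$, shows that $K$ steps reduce the tree to a constant-sized residual.

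Next I would process the layers from the innermost $V_K$ outward to $V_1$, committing labels one layer at a time. When labeling a component $C$ of layer $V_i$, the labels on vertices in later layers $V_{i+1}, \ldots, V_K$ adjacent to $C$ have already been committed. I would construct a \emph{virtual tree} $\tilde T_C$ of size $O(n^{(k-1)/k})$ that contains $C$ and has bounded-size gadgets attached along the boundary of $C$ realizing the committed labels, in a way that $\tilde T_C$ globally satisfies the local constraints of $\mathcal{P}$. Running $A$ on $\tilde T_C$ takes $T(|\tilde T_C|) = o((n^{(k-1)/k})^{1/(k-1)}) = o(n^{1/k})$ rounds, and the labels it produces on $C$ satisfy $\mathcal{P}$ together with the committed boundary. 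Summed over the $K$ constant many layers, $A'$ runs in $O(n^{1/k})$ rounds.

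The main obstacle is the construction of the virtual extension $\tilde T_C$: the boundary labels committed in earlier phases must be realizable as labels on bounded-size gadgets rooted at the boundary of $C$ and consistent with $\mathcal{P}$. This demands a pumping-style argument over the finite set of local rules defining $\mathcal{P}$, showing that each boundary ``type'' can be grown into a constant-size tree compatible with $\mathcal{P}$. The pumping succeeds precisely when a certain finite combinatorial criterion on $\mathcal{P}$ is satisfied, and this criterion is algorithmically checkable by enumerating boundary types and candidate gadgets up to bounded radius. This yields the constructive decision procedure: if the criterion holds, $\mathcal{P}$ is solvable in $O(n^{1/k})$ rounds by the construction above; otherwise, a failing boundary type can be stitched into an adversarial tree whose every legal labeling requires a vertex to compute information over distance $\Omega(n^{1/(k-1)})$, yielding the matching lower bound.
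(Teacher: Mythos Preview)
Your proposal has a genuine gap in the core speedup step. You run the given algorithm $A$ on a \emph{small} virtual tree $\tilde T_C$ obtained from a component $C$ by attaching bounded-size gadgets ``realizing the committed labels,'' and assert that the output of $A$ on $C$ is consistent with those committed labels. But $A$ solves the original LCL: it takes an instance with \emph{no} output labels and returns \emph{some} legal labeling. Nothing forces its output near $\partial C$ to match what was committed in earlier phases. Gadgets that force a prescribed output at an interface vertex do not exist for general LCLs (take $(\Delta+1)$-coloring), so ``realizing the committed labels'' via constant-size attachments cannot be done in general, and the inductive step breaks.

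The paper does the opposite of shrinking: it \emph{pumps} each compress-layer path to length $\Theta(w)$ for a huge $w$, producing a virtual graph with $\Theta(w^{k-1})$ vertices. On that instance the given $o(N^{1/(k-1)})$-round algorithm runs in $o(w)$ rounds and therefore cannot see across any pumped path. The algorithm $A$ is invoked only inside the operation $\ext$, to commit output labels near the \emph{middle} of each pumped path; those labels are precisely what $A$ outputs, so consistency is automatic, and the $\cut$ operation together with the equivalence relation $\simm$ (types and classes) lets one carry a legal labeling of the top virtual graph $\RRR{k}$ back down to $G$. The decidable criterion is then the existence of a \emph{feasible labeling function} $f$ for $\ext$, and the lower bound is the contrapositive of the speedup (no feasible $f$ $\Rightarrow$ no $o(n^{1/(k-1)})$ algorithm), not a direct adversarial construction. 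A smaller issue: with both rake and compress thresholds set to $n^{1/k}$ as you describe, each iteration shrinks the tree by only a constant factor, yielding $\Theta(\log n)$ layers rather than $O(1)$; the paper instead takes the compress threshold to be the constant $\ell = 2(r+\Lpump)$ and shows that exactly $k$ iterations suffice when $\gamma=\Theta(n^{1/k})$.
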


Theorem~\ref{thm-gap-main} implies that randomness does not help for LCL problems on bounded-degree trees in the regime of polynomial round complexity.

\subsection{The Complexity of Classification}

The complexity gaps in Figures~\ref{fig:LCL-landscape-general} and~\ref{fig:LCL-landscape-tree} classify the distributed problems into complexity classes.
A natural question to ask is whether this classification is \emph{decidable}.
Unfortunately, even for grids and tori, it is \emph{undecidable} whether a given LCL problem can be solved in $O(1)$ rounds~\cite{NaorS95,Brandt17}, as 
LCL problems on grids are expressive enough to simulate   Turing machines.
This undecidability result does not apply to special graph classes such as paths, cycles, and trees.
In fact, the proof of the $\omega(\log n)$--$n^{o(1)}$ gap on bounded-degree trees by Chang and Pettie~\cite{ChangP17} is \emph{constructive} in the sense that it offers an algorithm that  decides whether a given LCL problem on bounded-degree trees has complexity $O(\log n)$ or $n^{\Omega(1)}$.

Much progress has recently been made in  understanding to what extent the design of distributed algorithms and the proof of distributed lower bounds can be automated~\cite{Brandt17,Balliu2019a,Balliu2019decidable,Brandt19speedup,Chang2020DistributedGP,Olivetti20}.  
On paths or cycles, with or without input labels,   only three complexity classes are possible: $\Theta(1)$, $\Theta(\log^\ast n)$, and $\Theta(n)$.  Balliu et al.~\cite{Balliu2019decidable} showed that for any given LCL problem $\mathcal{P}$ on paths or cycles, it is \emph{decidable} to check which class $\mathcal{P}$ belongs to, and there is a sequential algorithm that automates the design of an asymptotically optimal distributed algorithm for $\mathcal{P}$.
For comparison, the previous proofs~\cite{ChangKP16,ChangP17,NaorS95} establishing this classification did not offer such results.

 On the negative side, Balliu et al.~\cite{Balliu2019decidable} showed that the problem of determining the optimal asymptotic distributed complexity of an LCL problem is $\PSPACE$-hard, even for paths and cycles with input labels. Since trees can be used to encode input labels, the same $\PSPACE$-hardness result extends to the case of bounded-degree trees without input labels.

\paragraph{New result.}
Our proof of the existence of the gap $\omega(n^{1/k})$--$o(n^{1/(k-1)})$  offers a  sequential algorithm that decides which side of the gap   a given   LCL problem $\mathcal{P}$ belongs to. When the locality radius $r$ of the LCL is a constant independent of the  description length $N$ of the LCL,   the runtime $2^{2^{N^{O(1)}}}$ of our sequential algorithm is \emph{doubly exponential} in $N^{O(1)}$.
To complement this result, we show that this problem is inherently very hard  by proving that 
 this problem is $\EXPTIME$-hard.
Specifically, we say that a round complexity function $T(n)$ is \emph{realizable} if there exists an LCL problem $\mathcal{P}$ whose  round complexity is $\Theta(T(n))$ on bounded-degree trees. We prove the following theorem.

\begin{theorem}\label{thm-hardness}
Let $T_1(n) \ll T_2(n)$ be two realizable round complexity functions. Given an LCL problem $\mathcal{P}$ that is promised to have round complexity either $\Theta(T_1(n))$ or $\Theta(T_2(n))$ on bounded-degree trees, it is $\EXPTIME$-hard to decide the round complexity of $\mathcal{P}$.
\end{theorem}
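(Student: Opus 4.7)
The plan is to reduce from an $\EXPTIME$-complete problem to the promise problem of distinguishing $\Theta(T_1(n))$ from $\Theta(T_2(n))$ for LCLs on bounded-degree trees. A convenient target is the acceptance problem for alternating polynomial-space Turing machines, which is $\EXPTIME$-complete by Chandra--Kozen--Stockmeyer ($\mathsf{APSPACE}=\EXPTIME$); alternation maps naturally onto the branching of the input tree, and a polynomial-space configuration fits into a single polynomial-size label, so the simulation can be carried out by a polynomial-size LCL.

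Given $(M,x)$ with $M$ an alternating polynomial-space TM, the construction builds in polynomial time an LCL $\mathcal{P}(M,x)$ on bounded-degree trees with two solution modes. Using the hypothesis that $T_1$ and $T_2$ are realizable, fix template LCLs $\mathcal{P}_1, \mathcal{P}_2$ achieving $\Theta(T_1(n))$ and $\Theta(T_2(n))$. Mode $A$ demands a locally consistent embedding of an accepting alternation tree of $M$ on $x$ into a subregion of the input tree---the embedded root holds the initial configuration, existential vertices are followed by some valid successor, universal vertices by all successors, and leaves hold accepting configurations---together with a valid $\mathcal{P}_1$-labeling; vertices outside the embedded subregion carry a passive placeholder subject only to the $\mathcal{P}_1$ constraints. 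Mode $B$ simply demands a valid $\mathcal{P}_2$-labeling. Each mode-$A$ constraint is constant-radius over a polynomial-size alphabet, so $\mathcal{P}(M,x)$ has description length polynomial in $|M|+|x|$.

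If $M$ accepts $x$, then on every sufficiently large bounded-degree input tree one can embed the accepting alternation tree into an appropriate subtree (allowing stall-states to accommodate mismatched branching) and solve $\mathcal{P}_1$ elsewhere, giving complexity $\Theta(T_1(n))$. If $M$ rejects $x$, no locally consistent embedding exists anywhere, so mode $A$ becomes globally infeasible and any algorithm is forced into mode $B$, giving $\Theta(T_2(n))$. Since the reduction runs in polynomial time, the claimed $\EXPTIME$-hardness follows.

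The main obstacle is the combinatorial design of the mode-$A$ gadget to ensure simultaneously that (i) mode $A$ is globally feasible on every sufficiently large tree in the yes-case, via flexible anchoring and passive placeholders that let the alternation tree be grown into whatever shape the input provides; (ii) mode $A$ is globally infeasible in the no-case, which requires forcing the simulation to be pinned at some designated initial-configuration vertex so that locally consistent patches cannot be pasted together to fake acceptance; and (iii) the two-mode combination respects the promise, so that no algorithm can achieve complexity strictly below $T_1$ in the yes-case or strictly between $T_1$ and $T_2$ in the no-case. These design tasks are delicate but follow the template of the $\PSPACE$-hardness construction of Balliu et al., strengthened by the extra power of tree branching that enables direct alternation simulation rather than mere polynomial-space simulation on paths.
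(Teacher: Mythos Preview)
Your high-level reduction target (acceptance of alternating linear/polynomial-space machines) matches the paper's, but the proposed gadget has a structural gap that the paper's construction is specifically designed to avoid.

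The problem is the tension between your conditions (i) and (ii). For mode $A$ to be globally infeasible in the no-case, the embedding cannot be allowed to be empty and must be anchored; but for mode $A$ to be feasible on \emph{every} bounded-degree tree in the yes-case, the input tree must contain a subtree of the right shape to host an accepting alternation tree of $M$ on $x$. Since universal configurations force genuine binary branching in any accepting witness, a long path---or more generally any tree whose branching pattern does not dominate that of the witness---simply cannot host the embedding, regardless of how you design stall-states. On such inputs your algorithm is forced into mode $B$ even when $M$ accepts $x$, and the worst-case complexity of $\mathcal{P}(M,x)$ is no longer $\Theta(T_1(n))$. Conversely, if you relax mode $A$ so that an all-placeholder labeling counts, then mode $A$ becomes feasible in the no-case as well, and the reduction collapses. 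Your proposal names this tension but does not resolve it; ``pinning'' the root without input labels, while keeping mode $A$ feasible on arbitrary trees, is exactly the step that does not go through.

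The paper sidesteps this by placing the computation tree in the \emph{input labeling} rather than asking the algorithm to produce it in the output. Each main-part vertex $v$ may carry an auxiliary subtree $T_v$ supplied as input; the output only has to certify either that $T_v$ faithfully encodes the unique computation tree $T_{\mathcal{M},x}$ with an accepting root, or that $T_v$ contains a local error. If $T_v \neq T_{\mathcal{M},x}$ (including the cases where $T_v$ is absent or has the wrong shape), an error is locally detectable and can be propagated to $v$ in $O(1)$ rounds, making $v$ ``good''. Hence the only hard instances are those where the adversary attaches the exact tree $T_{\mathcal{M},x}$ as input---and such instances force the harder problem $\mathcal{P}_2$ precisely when $M$ rejects $x$. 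This makes the shape of the underlying network irrelevant, which is the obstacle your output-side embedding cannot overcome.
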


\subsection{Organization}
In Section~\ref{sect-prelim}, we  overview  the basics of  LCL problems and review the pumping lemma of Chang and Pettie~\cite{ChangP17}.  In Section~\ref{sect-gap-review},  we review the proof of the $\omega(n^{1/2})$--$o(n)$ gap by Balliu et al.~\cite{Balliu2018disc}.
In Section~\ref{sec:decomp}, we consider a generalized version of the tree decomposition of Miller and Reif~\cite{MillerR89} that allows us to unify the approach of Balliu et al.~\cite{Balliu2018disc} and the approach of Chang and Pettie~\cite{ChangP17}. In Section~\ref{sect:gap-extend}, we prove Theorem~\ref{thm-gap-main} for the case of deterministic algorithms, and the complete proof of Theorem~\ref{thm-gap-main} is left to the appendix. 
In Section~\ref{sect:hardness}, we prove Theorem~\ref{thm-hardness}.

\section{Preliminaries}\label{sect-prelim}

In the deterministic variant of the  $\LOCAL$ model, each vertex $v$  has a distinct $O(\log n)$-bit identifier $\ID(v)$.
In the randomized variant of the $\LOCAL$ model, there are no distinct identifiers, but each vertex  has access to a stream of unbiased random bits, and the maximum tolerable global probability of failure is $1/n$.
Note that a $t$-round $\LOCAL$ algorithm can be seen as a function that maps a radius-$t$ subgraph  centered at $v$ to an output label assigned to $v$.

\subsection{Locally Checkable Labeling} A distributed graph problem is locally checkable if there is some constant $r$ such that the validity of a solution can be checked locally by having each vertex examine its radius-$r$ neighborhood. For example, the maximal independent set  problem is locally checkable with locality radius $r=1$, but the maximum independent set problem is not locally checkable.

\paragraph{Formal definition.} Formally, an LCL problem $\mathcal{P}$ is specified by the following parameters: the locality radius $r$, the set of input labels $\LabelIn$, the set of output labels  $\LabelOut$, and the set of allowed configurations $\mathcal{C}$. Each member of $\mathcal{C}$ is a radius-$r$ subgraph $H$ centered at a specific vertex $v$, where each vertex  in $H$ is assigned an input label from $\LabelIn$ and an output label from $\LabelOut$. Note that $|\LabelIn| = 1$ corresponds to the special case where there is no input label.

 An \emph{instance} of an LCL problem $\mathcal{P}$ is a graph $G=(V,E)$ where each vertex is assigned an input label from  $\LabelIn$. A \emph{solution} for  $\mathcal{P}$  on $G$ is a labeling function $\phi_{\text{out}}$ that assigns to each vertex in $G$ an output label from  $\LabelOut$. 
We say that $\phi_{\text{out}}$ is \emph{locally consistent} for a vertex $v \in V$ if  its radius-$r$ neighborhood $N^r(v)$ is an allowed configuration in $\mathcal{C}$ under the given input labeling and the output labeling $\phi_{\text{out}}$. The output labeling $\phi_{\text{out}}$ is \emph{legal} if it is locally consistent everywhere.

\paragraph{Graph terminology.} Unless otherwise stated, all vertices in all graphs in this paper are assigned input labels from $\LabelIn$, and the term \emph{label} refers to \emph{output label}. An \emph{unlabeled graph} is a graph where no vertex is assigned an output label from $\LabelOut$. A \emph{partially labeled graph} is a graph with a labeling function $\LL$ that maps each vertex $v$ to an element of $\LabelOut \cup \{\bot\}$. A \emph{completely labeled graph} 
is a graph with a labeling function $\LL$ that maps each vertex $v$ to an element of $\LabelOut$.

\paragraph{Description length.} We assume that any given LCL problem $\mathcal{P}$ is specified by representing  $\mathcal{C}$ as a truth table.
Specifically, a \emph{centered graph} is a graph $G=(V,E)$ with a distinguished vertex $s \in V$, and the \emph{radius} of $G$ is defined by $\max_{v \in V} \dist(v,s)$.  
The truth table representation of $\mathcal{P}$ is a mapping \[\mathscr{G}_{r,\Delta,\LabelIn,\LabelOut} \mapsto \{0, 1\},\] where $\mathscr{G}_{r,\Delta,\LabelIn,\LabelOut}$ is the set of all  centered graphs $G=(V,E)$ of radius at most $r$ with maximum degree $\Delta$ where each vertex $v \in V$ is equipped with an input label from $\LabelIn$ and an output label from $\LabelOut$.

If the graph class under consideration is the set of trees of maximum degree $\Delta$, then the description length of $\mathcal{P}$ can be upper bounded by  \[\left(1 + |\LabelIn| \cdot |\LabelOut|\right)^{1 + \Delta^r}.\]

To derive this upper bound, consider the rooted tree $T_r$ of height $r$ where the root $v$ has $\Delta$ children, all vertices $u$ with $1 \leq \dist(u,v) \leq r-1$ have $\Delta-1$ children, and all vertices  $u$ with $\dist(u,v) = r$ are leaf vertices.
The number of trees in $\mathscr{G}_{r,\Delta,\LabelIn,\LabelOut}$ is at most the number of distinct labeling of the vertices in $T_r$ by $(\LabelIn \times \LabelOut) \cup \{\fullmoon\}$, where $\fullmoon$ is a special symbol indicating the non-existence of a vertex.
Hence the description length can be upper bounded by $\left(1 + |\LabelIn| \cdot |\LabelOut|\right)^{n_r}$, where $n_r$ is the number of vertices in $T_r$. We have  $n_0 = 1$, $n_1 = 1+ \Delta$, and $n_r = 1 + \Delta + \Delta \sum_{i=1}^{r-1} (\Delta-1)^{r-1}$ for each $r \geq 2$. It is clear that $n_r \leq 1 + \Delta^r$ for all $r$. 

\paragraph{Remarks on edge labeling and orientation.} In general, an LCL might have edge labels and edge orientations. It is straightforward to encode edge labels and edge orientations as vertex labels. For example, given an input graph $G$, consider the following pre-processing. For each edge $e = \{u,v\} \in E$, subdivide it into a length-3 path $(u, x_{e,u}, x_{e,v}, v)$ by adding two new vertices $x_{e,u}$ and $x_{e,v}$. Each  newly added vertex is assigned a special input label $\sf{e}$ indicating that it represents a half of an edge. 
Now an edge orientation $u \rightarrow v$ can be encoded as $\phi(x_{e,u}) = 0$ and $\phi(x_{e,v}) = 1$.

\subsection{Pumping Lemma}\label{sect:pumping-prelim}

We review the pumping lemma of Chang and Pettie~\cite{ChangP17}, which plays a crucial role in establishing complexity gaps on trees. 

\paragraph{Notation for partially labeled graphs.}
A \emph{partially labeled graph} $\GG = (G,\LL)$ is a graph $G=(V,E)$ together with a function
$\LL : V \rightarrow \LabelOut \cup \{\bottom\}$.  The vertices in $\LL^{-1}(\bottom)$ are \emph{unlabeled}.
A \emph{complete labeling} $\LL' : V(G)\rightarrow \LabelOut$ for $\GG$
is one that labels all vertices and is consistent with the partial labeling of $\GG$, i.e., $\LL'(v) = \LL(v)$ whenever $\LL(v)\neq\: \bottom$.
A \emph{legal labeling} is a complete labeling that is \emph{locally consistent} for all $v\in V(G)$,
i.e., the labeled subgraph induced by $N^r(v)$ is consistent with the given LCL problem $\mathcal{P}$.
Here $N^r(v)$ is the set of all vertices within distance $r$ of $v$.
A subgraph of a partially labeled graph $\GG=(G,\LL)$ is a pair $\HH=(H,\LL')$
such that $H$ is a subgraph of $G$, and $\LL'$ is $\LL$ restricted to the domain $V(H)$.
With a slight abuse of notation, we usually write $\HH=(H,\LL)$.

\paragraph{An equivalence relation.} A tree $\HH$ with two distinguished vertices $s, t \in V(H)$ is called a \emph{bipolar} tree. We call $s$ and $t$  the two \emph{poles} of $\HH$. We consider the equivalence relation  $\simm$ on bipolar trees defined in~\cite{ChangP17}. We write $\type(\HH)$ to denote the equivalence class of the bipolar tree $\HH$. The exact definition of  $\simm$ is not important. What is crucial is the following property of   $\simm$. 

\begin{framed} 
\noindent Suppose we are given the following.
\begin{itemize}
    \item $\GG$ is a graph.
    \item $\HH$ is a bipolar subtree of $\GG$ with two poles $s$ and $t$ such that the removal of $s$ and $t$ disconnects $\HH$ from the rest of $\GG$. 
    \item $\HH'$ is another bipolar subtree  with two poles $s'$ and $t'$ such that $\type(\HH) = \type(\HH')$.
    \item $\LL_\diamond$ is a complete legal labeling of $\GG$. 
\end{itemize} 
Define the graph $\GG'$ as the result of
replacing the subgraph $\HH$ of $\GG$ with  $\HH'$. Then there exists a legal labeling $\LL'$ of $\HH'$ meeting the following conditions.

\begin{itemize}
    \item 
The following complete labeling $\LL_\diamond'$ of $\GG'$ is a legal labeling.
\[
\LL_\diamond'(v) =
\begin{cases}
\LL'(v)  &\text{if $v \in \HH$,}\\
\LL_\diamond(v)  &\text{if $v \in \GG \setminus \HH$.}
\end{cases}\]
\item  
Such a labeling $\LL'$ of $\HH'$ can be computed \emph{solely} from $\HH'$ and the given labeling $\LL_\diamond$ restricted to $\HH$.
\end{itemize}
\end{framed}

In view of the above, the vertices in $\HH'$ can compute their $\LL'$-labels using only information within $\HH'$ and the given labeling $\LL_\diamond$ restricted to $\HH$, without communicating with the vertices outside of $\HH'$.
Intuitively, this allows us to reduce the task of finding a legal labeling $\LL_\diamond'$ of $\GG'$ to  the task of finding a legal labeling $\LL_\diamond$ of $\GG$. 

\paragraph{A pumping lemma for bipolar trees.}
 The unique path $(s = u_1, u_2, \ldots, u_k = t)$   connecting the two poles $s$ and $t$ of a bipolar tree $\HH$ is called the \emph{core path} of $\HH$.
 The tree $\HH$ can be viewed as a string of subtrees $\TT_1, \TT_2, \ldots, \TT_k$, where $\TT_i$ is the subtree of $\HH$ rooted at $u_i$. 
 For  convenience, we use the   string notation  $\HH=(\TT_1, \TT_2, \ldots, \TT_k)$ to describe a bipolar tree $\HH$. 
 Viewing bipolar trees as strings, the following \emph{pumping lemma} was proved in~\cite{ChangP17}.

 \begin{framed}
\noindent  There exists a number $\Lpump$ depending only on the given LCL problem $\mathcal{P}$ such that as long as $k \geq \Lpump$, any bipolar tree $\HH =(\TT_1, \TT_2, \ldots, \TT_k)$ can be decomposed into three substrings $\HH = x \circ y \circ z$  meeting the following conditions.
\begin{itemize}
    \item $|xy| \leq \Lpump$.
    \item $|y|\geq 1$.
    \item  $\type(x \circ y^j \circ z) = \type(\HH)$ for each non-negative integer $j$.
\end{itemize}
 \end{framed}

 Intuitively, the pumping lemma allows us to extend the length of $\HH =(\TT_1, \TT_2, \ldots, \TT_k)$ to arbitrarily long without changing its type, as long as $k \geq \Lpump$. 

\section{A Review of the \texorpdfstring{$\omega(n^{1/2})$--$o(n)$}{omega(sqrt(n))--o(n)} Gap}\label{sect-gap-review}
We briefly review the proof of the $\omega(n^{1/2})$--$o(n)$ gap by Balliu et al.~\cite{Balliu2018disc}.
Given an $o(n)$-round randomized or deterministic $\LOCAL$ algorithm $\mathcal{A}$ for the given LCL problem $\mathcal{P}$, the goal is to design a new randomized or deterministic $\LOCAL$ algorithm $\mathcal{A}'$ with round complexity $O(\sqrt{n})$.

Within this section, we only apply the pumping lemma on unlabeled graphs, but we will see that when we extend the proof to other gaps, we need to deal with partially labeled graphs.

\paragraph{The skeleton tree.} Let the tree $G=(V,E)$ be the underlying network.
Let $\tau = \Theta(\sqrt{n})$ be a threshold to be determined. Define the {\em skeleton tree} $G_{\sk}$ as the result of iteratively removing all leaf vertices of $G$ for $\tau$ iterations. Specifically,  start with $G_0 = G$,  and let $G_i$ be the result of removing all leaf vertices of $G_{i-1}$ for each $1 \leq i \leq \tau$, and then we have $G_{\sk} = G_{\tau}$.

If $G_{\sk}$ is empty, then we are already done, since this implies that the diameter of $G$ is $O(\tau) = O(\sqrt{n})$, so $\mathcal{P}$ can be solved trivially in $O(\sqrt{n})$ rounds. In the subsequent discussion we assume that $G_{\sk}$ is not empty. We will identify a set of disjoint paths $\Pset$ of $G_{\sk}$ meeting the following conditions.
\begin{enumerate}
    \item \label{cc1} Each  path $P=(v_1, v_2, \ldots, v_x) \in \Pset$ satisfies the following requirements. 
    \begin{enumerate}
        \item $x \in [\Lpump, 2\Lpump]$.
        \item Each $v_i$ is of degree-2 in $G_{\sk}$.
    \end{enumerate}
    \item \label{cc2} Let $G'$ be the subgraph of $G_{\sk}$ resulting from removing all paths in $\Pset$. Let $\Sset$ denote the set of connected components in $G'$. Then each connected component $S \in \Sset$ in $G_{\sk}$ has diameter $O(\sqrt{n})$.
\end{enumerate}

The proof of the existence of $\Pset$ can be found in~\cite{Balliu2018disc}. We will also provide a proof in Section~\ref{sec:decomp}.
Here we only need to use the fact that the skeleton tree $G_{\sk}$ and the set of paths  $\mathcal{P}$  can be computed in $O(\sqrt{n})$ rounds on $G$.

Since $G_{\sk}$ is constructed by iteratively removing all leaf vertices of $G$ for $\tau$ iterations, each vertex $v$ in $G \setminus G_{\sk}$ is reachable to a \emph{unique} vertex $u$ in $G_{\sk}$ via the vertices $G \setminus G_{\sk}$.

For any vertex subset  $U$ in  $G_{\sk}$, we define $U^\ast \supseteq U$ as the set of vertices in $G$ resulting from adding to $U$ all vertices in $G \setminus G_{\sk}$ reachable to $U$ via the vertices in $G \setminus G_{\sk}$. A crucial consequence of Condition~\ref{cc2} is that the diameter of  $S^\ast$ is $O(\tau + \sqrt{n}) = O(\sqrt{n})$, for each $S \in \Sset$.


\paragraph{The virtual tree.} Consider the {\em virtual tree} $G_{\vi}$ defined as the result of applying the pumping lemma on $P^\ast$ for each $P=(v_1, v_2, \ldots, v_x) \in \Pset$ to the graph $G$. The definition of $P^\ast$ is in the paragraph above.  Here  $P^\ast$ is seen as a bipolar tree with the poles $s=v_1$ and $t=v_x$. 
Specifically, the pumping lemma allows us to replace each bipolar tree $P^\ast=(T_1, T_2, \ldots, T_k)$ is by some other bipolar tree $P' = (T_1', T_2', \ldots, T_{x'}')$ such that $\type(P') = \type(P^\ast)$, and   $x' \in [w, w + \Lpump]$, where $w$ is some very large number to be determined. 

\paragraph{The $O(\sqrt{n})$-round algorithm $\mathcal{A}'$.} We are ready to describe the $O(\sqrt{n})$-round algorithm $\mathcal{A}'$. The first step of the algorithm is to compute the skeleton tree $G_{\sk}$ and the set of paths $\Pset$ in $O(\sqrt{n})$ rounds. 
After that, we can simulate the virtual tree  $G_{\vi}$ by having the vertices in each $P \in \Pset$ simulate the virtual bipolar tree $P'$ resulting from the pumping lemma. We compute a legal labeling $\LL_\vi$  of $G_{\vi}$ by a simulation of $\mathcal{A}$ on  $G_{\vi}$. We will later see that the simulation can also be done in $O(\sqrt{n})$ rounds. Finally, we will show that the labeling $\LL_\vi$ can be transformed into a legal labeling $\LL$ of $G$ using another $O(\sqrt{n})$ rounds.

\paragraph{Simulation of $\mathcal{A}$ on the virtual tree.} 
It is clear that the number of vertices in $G_{\vi}$ can be upper bounded by $O(n^2 w)$, since $|\Pset| \leq n$ and the number of vertices in each bipolar tree $P'$ is $O(nw)$. We simulate the given algorithm $\mathcal{A}$ on the virtual tree  $G_{\vi}$  assuming that the number of vertices is $n' = O(n^2 w)$.

Since the round complexity of $\mathcal{A}$ on an $n'$-vertex graph is $o(n')$, by selecting $w$ as a sufficiently large number depending on $n$, the round complexity of $\mathcal{A}$ can be made much smaller than $0.1 w$.
Therefore, to simulate $\mathcal{A}$ on  $G_{\vi}$, each vertex $v$ in $G_{\vi}$ only needs to gather all information within radius $0.1 w$ to $v$. We make the following observations.
\begin{itemize}
    \item For each $S \in \Sset$, the subgraph $S^\ast$  has diameter $O(\sqrt{n})$.
    \item For each  $P\in \Pset$, the number of vertices in the core path of the bipolar subtree  $P'$ is within $[w, w + \Lpump]$. 
\end{itemize}
By these facts, it is straightforward to see that each vertex $v$ in $G_{\vi}$ is able to gather all information within radius $0.1 w$ to $v$ in $O(\sqrt{n})$ rounds of communication in the underlying network $G$.
For example, if $v \in S^\ast$ for some $S \in \Sset$, then $v$ only need to learn the following.
\begin{itemize}
    \item The subgraph induced by the set $S^\ast$.
    \item The virtual bipolar tree $P'$, for each  path $P \in \Pset$ adjacent to $S$.
\end{itemize}
Remember that $P'$ can be computed from $P^\ast$.
Since the diameter of $S^\ast$ (for each $S \in \Sset$)   and the diameter of $P^\ast$ (for each $P \in \Pset$) are  $O(\sqrt{n})$,
 this information gathering can be done in $O(\sqrt{n})$ rounds in $G$.

\paragraph{Computing a legal labeling of $G$.} Suppose we have computed a legal labeling $\LL_\vi$  of $G_{\vi}$. We show how to use this legal labeling $\LL_\vi$ to obtain a legal labeling $\LL$ of $G$ in $O(\sqrt{n})$ rounds. 
For each $S \in \Sset$, the labeling of the vertices in $S^\ast$ is unchanged, i.e., $\LL(v) = \LL_{\vi}(v)$.
For each path $P \in \Pset$, the $\LL$-labels of the vertices in $P^\ast$ are computed as follows. 

Remember that $G_{\vi}$ is the result of replacing $P^\ast$ with $P'$, for each $P \in \Pset$, and the two bipolar trees $P'$ and $P^\ast$ have the same type. 
In view of the property of $\simm$ described in Section~\ref{sect:pumping-prelim}, there exists a labeling $\LL'$ of $P^\ast$  such that if we replace the bipolar subtree $P'$ (labeled with  $\LL_{\vi}$) by the bipolar subtree $P^\ast$ (labeled with  $\LL'$), the legality of the labeling of the underlying graph is maintained. 
Moreover, such a labeling $\LL'$ of $P^\ast$  can be computed from the  labeling $\LL_{\vi}$ restricted to $P'$, without using any information outside of $P'$.
Thus, we can carry out this procedure  in parallel for each $P \in \Pset$, and this takes   $O(\sqrt{n})$ rounds, since the diameter of $P^\ast$ is at most $2\tau + 2\Lpump - 1 = O(\sqrt{n})$ for each $P \in \Pset$.
After that, we obtain a desired legal labeling
 $\LL$ of $G$.

\section{A Generalized Tree Decomposition} \label{sec:decomp}

Miller and Reif~\cite{MillerR89} considered the following decomposition algorithm. Start with a tree $G=(V,E)$, and then remove the vertices in $V$ by repeatedly doing the following two operations alternately: \rake\ (removing all leaf vertices) and \compress\ (removing all degree-2 vertices). It is known that $O(\log n)$ iterations suffice to remove all vertices in the tree~\cite{MillerR89}.
Variants of this decomposition have turned out to be useful in the design of $\LOCAL$ algorithms~\cite{ChangP17,ChangHLPU18}.

In this section, we consider a generalized version of this decomposition, which allows us to show the   existence of $\Pset$ needed in Section~\ref{sect-gap-review}, and to extend the proof idea in Section~\ref{sect-gap-review} to other gaps.

We begin with a formal definition of our decomposition, which is parameterized by two integers $\ell \geq 1$ and $\gamma \geq 1$, and it decomposes the vertices in the tree $G$ into \[V = \VR{1} \cup \VC{1} \cup \VR{2} \cup \VC{2} \cup \VR{3} \cup \VC{3} \cup \cdots.\] 

Let $L$ denote the largest index $i$ such that $\VC{i} \cup \VR{i+1} \cup \VC{i+1} \cup \cdots$ is empty.
We define $\GC{i}$ as the subgraph induced by the vertices  $\left(\bigcup_{j=i+1}^{L} \VR{j}\right) \cup \left( \bigcup_{j=i}^{L-1} \VC{j}\right)$, which is the set of all vertices that are in $\VC{i}$ or higher layers. Similarly, we define $\GR{i}$ as the subgraph induced by the vertices  
$\left(\bigcup_{j=i}^L \VR{j}\right) \cup \left( \bigcup_{j=i}^{L-1} \VC{j}\right)$.

We require the sets $\VR{i}$ and $\VC{i}$ to satisfy some requirements. Each connected component of the subgraph induced by $\VR{i}$ must be a rooted tree with height at most $\gamma-1$, and only the root can possibly have neighbors in $\VC{i} \cup \VR{i+1} \cup \VC{i+1} \cup \cdots$.  Each connected component of the subgraph induced by $\VC{i}$ must be a path with  $x \in [\ell, 2\ell]$ vertices, and only the endpoints can  possibly have neighbors in $\VR{i+1} \cup \VC{i+1} \cup \VR{i+2} \cup \cdots$. 
See Figure~\ref{fig:treedecomp} for an illustration, where each triangle represents a rooted tree.  The precise requirements are as follows.

\begin{figure}[ht!]
\centering
{\includegraphics[width=1\textwidth]{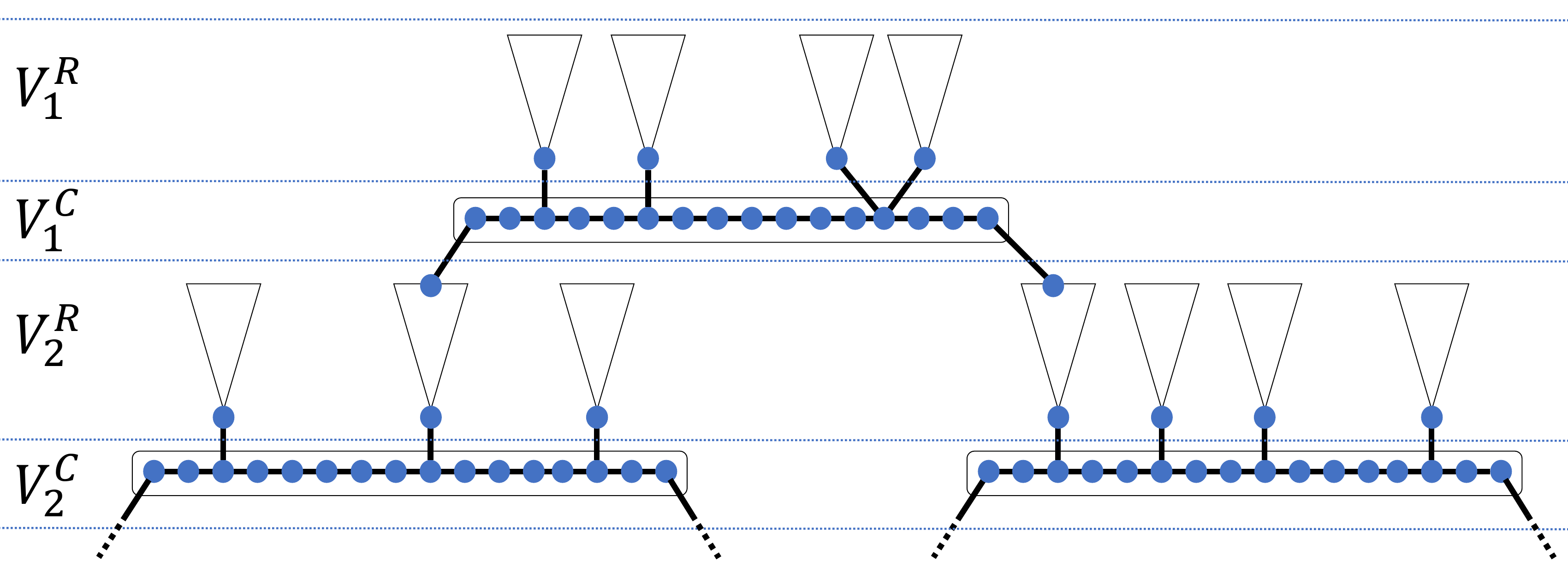}}
\caption{\label{fig:treedecomp} Top layers in a generalized tree decomposition}
\end{figure}

\paragraph{Requirements for $\VR{i}$.}
Let $S$ be a connected component of the subgraph induced by $\VR{i}$. Then there is a root vertex $z \in S$ such that the following conditions are met.
\begin{itemize}
    \item $z$ has at most one neighbor in $\GC{i}$, and each $v \in S \setminus \{z\}$ has no neighbor in $\GC{i}$.
    \item Each $v \in S \setminus \{z\}$ satisfies $\dist(v,z) \leq \gamma -1$.
\end{itemize}  
 Note that for the special case of $\gamma = 1$, the set $\VR{i}$ is an independent set.

\paragraph{Requirements for $\VC{i}$.}
Let $S$ be a connected component of the subgraph induced by $\VC{i}$. Then $S$ is a path $(u_1, u_2, \ldots, u_x)$ with $x \in [\ell, 2\ell]$ such that the following is true for each $u_j \in S$.
\begin{itemize}
    \item For the case $1 < j < x$ (i.e., $u_j$ is an intermediate vertex), $u_j$ has no neighbor in $\GR{i+1}$.
    \item Consider the case $j = 1$ or $j = x$ (i.e., $u_j$ is an endpoint). If $x \geq 2$, then $u_j$ has exactly one neighbor in $\GR{i+1}$. If $x = 1$, then $u_j$ has exactly two neighbors in $\GR{i+1}$. 
\end{itemize} 
Note that each $u_j \in S$ is of degree-2 in $\GC{i}$.

A decomposition $V = \VR{1} \cup \VC{1} \cup \VR{2} \cup \VC{2} \cup \VR{3} \cup \VC{3} \cup \cdots$ satisfying the above requirements is called a \emph{($\gamma, \ell$)-decomposition}.
We will see in  Lemma~\ref{lem-decomp-2} that for any positive integers $k = O(1)$  and $\ell = O(1)$, and for any parameter $\gamma$ satisfying \[ n^{1/k}(2\ell)^{1-1/k} \leq \gamma = O(n^{1/k}),\] an ($\gamma, \ell$)-decomposition with $L = k$  can be computed in $O(n^{1/k})$ rounds deterministically. 

\subsection{The Decomposition Algorithm}
Our algorithm constructing the above decomposition uses the following modified \rake\ and \compress\ operations defined in~\cite{ChangP17}. Here $U$ is a subset of $V$ representing the set of vertices that are not yet removed.
\begin{description}
    \item[Rake:]  Each  $v \in U$ removes itself if one of the following conditions is met.
    \begin{enumerate}
        \item \label{xx1} $\deg_U(v)  = 0$.
         \item  \label{xx2} $\deg_U(v)  = 1$ and the unique neighbor $u$ of $v$ in $U$ has $\deg_U(u) > 1$.
          \item   \label{xx3} $\deg_U(v)  = 1$ and the unique neighbor $u$ of $v$ in $U$ has $\deg_U(u) =1$ and $\ID(v)  > \ID(u)$. 
    \end{enumerate}
    \item[Compress:]  Each $v \in U$ removes itself if $v$ belongs to a path $P$ such that $|V(P)|\geq \ell$ and $\deg_U(u) = 2$ for each $u \in V(P)$.
\end{description}

The purpose of Condition~\ref{xx3} in the \rake\ operation is to break tie for the special case where $v$ is in a component of $U$ that is a length-1 path. This is to ensure that we remove an independent set of vertices in a \rake\ operation.

\begin{definition}[\cite{ChangP17}~]
Let $P$ be a path.  A subset $I\subset V(P)$ is called an $(\alpha,\beta)$-independent set
if the following conditions are met:
(i) $I$ is an independent set that does not contain either endpoint of $P$, and
(ii) each connected component of the subgraph induced by $V(P) \setminus I$ has at least $\alpha$ vertices and at most $\beta$ vertices,
unless $|V(P)|<\alpha$, in which case $I=\emptyset$.
\end{definition}

It is a folklore~\cite{Balliu2019decidable,ChangP17,Linial92} that an
{$(\ell,2\ell)$-independent set} of a path graph can be computed in $O(\log^* n)$ rounds deterministically when $\ell = O(1)$.

\paragraph{The algorithm.}
The decomposition algorithm begins with $U=V(G)$ and $i=1$.
In iteration $i$, we do the following.
\begin{enumerate}
    \item Do $\gamma$  \rake\ operations.
    \item Do one \compress\ operation.
    \item Update the iteration number $i \leftarrow i+1$.
\end{enumerate}
 We repeatedly do this until $U=\emptyset$, and then we proceed to the following post-processing step.

\paragraph{The post-processing step.}
Let $R_i$ (resp., $C_i$) be the set of vertices removed during a \rake\ (resp., \compress) operation in the $i$th iteration. For each path $P$ that is a connected component of the subgraph induced by $C_i$, Find an $(\ell,2\ell)$-independent set $I_P$ of $P$.
Define $C_i^\ast$ as the subset of $C_i$ that is the union of $I_P$ over all connected components  $P$ of the subgraph induced by $C_i$.

Let $L$ be the largest index $i$ such that $R_i \cup C_{i-1} \neq \emptyset$. Then a partition $V= \left(\bigcup_{i=1}^L \VR{i}\right) \cup \left( \bigcup_{i=1}^{L-1} \VC{i}\right)$ is defined by setting $\VR{i} = R_i \cup C_{i-1}^\ast$ and  $\VC{i} = C_i \setminus C_{i}^\ast$.

What we have done in the post-processing step is  promoting each vertex in the independent set  $I_P$ to the next layer, and this ensures that the requirement on the size of paths for $\VC{i}$ is met.

\paragraph{Analysis.}
We analyze the decomposition  $V= \left(\bigcup_{i=1}^L \VR{i}\right) \cup \left( \bigcup_{i=1}^{L-1} \VC{i}\right)$  produced using the above algorithm.
The proofs of the following two lemmas follow immediately from the description of the decomposition algorithm.

\begin{lemma}[Properties of $\VR{i}$] \label{lem:prop-of-vr}
Let $S$ be a connected component of the subgraph induced by $\VR{i}$. Then there is a root vertex $z \in S$ such that the following conditions are met.
\begin{itemize}
    \item $z$ has at most one neighbor in $\GC{i}$, and each $v \in S \setminus \{z\}$ has no neighbor in $\GC{i}$.
    \item Each $v \in S \setminus \{z\}$ satisfies $\dist(v,z) \leq \gamma -1$.
\end{itemize}  
\end{lemma}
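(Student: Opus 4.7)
My plan is to unfold the definitions and first observe that $\GC{i}$ coincides with $U_{i-1}\setminus R_i$, the set of vertices still alive at the start of the $i$-th \compress\ operation. Since $\VR{i}=R_i\cup C_{i-1}^{\ast}$, every connected component $S$ of the induced subgraph on $\VR{i}$ is of exactly one of two shapes: either $S\subseteq R_i$, or $S=\{v\}$ is a singleton with $v\in C_{i-1}^{\ast}$. The singleton case is immediate: a vertex $v\in C_{i-1}^{\ast}$ is an \emph{interior} vertex of some path component $P\subseteq C_{i-1}$ (endpoints of $P$ are excluded from $I_P$); the two $T$-neighbors of $v$ on $P$ both lie in $C_{i-1}\setminus C_{i-1}^{\ast}=\VC{i-1}$ because $I_P$ is independent in $P$; and its remaining $T$-neighbors were removed in iterations strictly before $i-1$, since $v$ had $U$-degree exactly $2$ at the moment of its \compress. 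None of these $T$-neighbors is in $\VR{i}$ and none is in $\GC{i}$, so taking $z=v$ trivially fulfills the lemma.

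For the main case $S\subseteq R_i$, I would stratify $R_i$ into the rake layers $R_i^{(1)},\dots,R_i^{(\gamma)}$ and prove the pivotal claim that any $T$-edge $\{v,v'\}$ with both endpoints in $R_i$ joins two distinct rake layers. I would verify this by a short case analysis on the three \rake-conditions applied to the earlier-raked endpoint $v$: conditions $2$ and $3$ each leave $v$ with a unique live neighbor, which must then be $v'$, and in either sub-case the hypothesis on that neighbor (degree $>1$ in condition~$2$, or smaller $\ID$ in condition~$3$) disqualifies $v'$ from being raked in the same step; condition~$1$ is vacuous since $v$ would then have no live neighbor at all. This lets me define, for each non-isolated $v\in R_i$, its \emph{rake parent} $p(v)$ as its unique live $T$-neighbor at the moment $v$ was raked, where $p(v)$ sits either in a strictly higher rake layer of $R_i$ or in $\GC{i}$. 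Every edge of $S$ (as a subgraph of the tree $T$) is thus a parent-edge, orienting the tree $S$ toward a unique root: combining the identity $|E(S)|=|S|-1$ with the fact that each $v\in S$ contributes an outgoing parent-edge exactly when $p(v)\in S$ shows that precisely one vertex $z\in S$ has $p(z)\notin S$, and layer monotonicity then forces $z$ to be the unique maximum-layer vertex of $S$.

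The rest is routine accounting. Strict layer decrease along root-to-leaf chains in $S$ forces $\dist(v,z)\leq s_z-1\leq\gamma-1$ for every $v\in S$, where $s_z$ denotes the rake layer of $z$. For any non-root $v\in S$, its rake parent and rake children lie in $R_i\subseteq\VR{i}$, and every other $T$-neighbor of $v$ was removed in some iteration $<i$; by the definition of $\GC{i}$ none of these categories intersects $\GC{i}$. For the root $z$, the same case distinction shows that the only possible $\GC{i}$-neighbor is $p(z)$, which is present (and lies in $\GC{i}$) when $z$ is non-isolated and absent when $z$ is isolated via condition~$1$, giving at most one neighbor in $\GC{i}$ either way. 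The main obstacle I anticipate is the case analysis behind the ``distinct rake layers'' claim: the ID-based tie-breaker in condition~$3$ is the unique mechanism preventing two symmetric adjacent degree-$1$ vertices from being raked simultaneously, and I would treat that case explicitly rather than rely on any implicit symmetry argument.
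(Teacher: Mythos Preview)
Your proposal is correct and follows essentially the same approach as the paper: the same two-case split (singleton from $C_{i-1}^{\ast}$ versus $S\subseteq R_i$), and in the second case the same choice of $z$ as the last-removed vertex of $S$. The paper's proof is extremely terse (it literally says ``it is straightforward to see'' for the neighbor-in-$\GC{i}$ claim), whereas you have carefully unpacked the rake stratification, the distinct-layers claim via the three \rake\ conditions, and the parent-pointer counting argument that pins down $z$ uniquely; all of this is correct and is exactly the detail the paper omits.
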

\begin{proof}
The first case is when $S$ contains a vertex $u$ that is in $I_P$ for some $P$ during the post-processing step,  we must have $S = \{u\}$, and $u$ has no neighbor in $\GC{i}$. In this case, setting $z = u$ works.

The second case is when $S \subseteq R_i$. We select $z \in S$ as the last vertex removed from $U$ during the decomposition algorithm, among all vertices in $S$. 
Since we do $\gamma$ \rake\ operations in each iteration, each $v \in S \setminus \{z\}$ satisfies $\dist(v,z) \leq \gamma -1$. It is straightforward to see that $z$ is the only vertex in $S$ that may have a neighbor in $\GC{i}$; and $z$ can have at most one such neighbor.
\end{proof}


\begin{lemma}[Properties of $\VC{i}$] \label{lem:prop-of-vc}
Let $S$ be a connected component of the subgraph induced by $\VC{i}$. Then $S$ is a path $(u_1, u_2, \ldots, u_x)$ with $x \in [\ell, 2\ell]$ such that the following is true for each $u_j \in S$.
\begin{itemize}
    \item For the case $1 < j < x$ (i.e., $u_j$ is an intermediate vertex), $u_j$ has no neighbor in $\GR{i+1}$.
    \item Consider the case $j = 1$ or $j = x$ (i.e., $u_j$ is an endpoint). If $x \geq 2$, then $u_j$ has exactly one neighbor in $\GR{i+1}$. If $x = 1$, then $u_j$ has exactly two neighbors in $\GR{i+1}$. 
\end{itemize} 
\end{lemma}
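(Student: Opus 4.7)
The plan is to unpack the construction and perform a routine case analysis on where $u_j$ sits inside the underlying maximal degree-$2$ path from which $S$ was carved. By definition, $\VC{i} = C_i \setminus C_i^\ast$, where $C_i$ is the set of vertices removed by the single compress operation in iteration $i$, and $C_i^\ast$ is the union of the $(\ell, 2\ell)$-independent sets $I_P$ chosen during the post-processing step. Since $G$ is a tree, each connected component of the subgraph induced by $C_i$ is exactly a maximal degree-$2$ path $P = (w_1, \ldots, w_m)$ in $G[U]$ at the compress moment, with $m \geq \ell$. Deleting $I_P$ from $V(P)$ breaks $V(P)$ into sub-paths whose sizes lie in $[\ell, 2\ell]$, by the defining property of an $(\ell, 2\ell)$-independent set; $S$ is one such sub-path, so $S = (u_1, \ldots, u_x)$ with $x \in [\ell, 2\ell]$.

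For the neighborhood claim, I would split the neighbors of $u_j = w_c$ in $G$ into those that had already been removed from $U$ before the compress step of iteration $i$, and those that were still in $U$ at that moment. The former all lie in $\VR{i'}$ or $\VC{i'}$ for some $i' \leq i$, hence none of them are in $\GR{i+1}$. By the condition imposed by compress, there are exactly two neighbors of the latter kind. If $u_j$ is not a pole of $P$, these two neighbors are its path-neighbors $w_{c-1}$ and $w_{c+1}$; if $u_j$ is a pole of $P$, one is the unique inside-$P$ path-neighbor and the other is an outside-$P$ vertex $y$ that is in $U$ but cannot itself be in $C_i$, because otherwise the maximality of the degree-$2$ path $P$ in the tree $G[U]$ would be violated.

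The case analysis then follows. If $u_j$ is intermediate in $S$, both $w_{c-1}, w_{c+1}$ belong to $S \subseteq \VC{i}$, giving zero neighbors in $\GR{i+1}$. If $u_j$ is an endpoint of $S$ with $x \geq 2$, exactly one of the two candidate neighbors lies outside $S$: either it is the adjacent $I_P$-vertex sitting just before or after $S$ on $P$, which lies in $C_i^\ast \subseteq \VR{i+1}$; or, when the endpoint of $S$ coincides with an endpoint of $P$, it is the outside-$P$ vertex $y$, which remains in $U$ at the end of iteration $i$ and is therefore removed in some later iteration, landing in $\VR{i'}$ or $\VC{i'}$ for some $i' \geq i+1$. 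In either situation the count is exactly one. For $x = 1$, both candidate neighbors fall outside $S$ and land in $\GR{i+1}$ by the same reasoning, yielding exactly two.

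I do not anticipate any serious difficulty; the bookkeeping is elementary once the description of $C_i$ and $C_i^\ast$ is unwound. The only delicate point is the overlap between an endpoint of $S$ and an endpoint of $P$: one must verify that the outside-$P$ neighbor $y$ of a pole of $P$ is in $U$ at the compress step but has degree in $U$ different from $2$, and hence is not removed by that compress operation. This is precisely where the maximality of $P$ as a degree-$2$ path in a tree is used, and it is the one place where the proof would need to be written carefully rather than by pure unwinding.
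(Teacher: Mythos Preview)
Your proposal is correct and follows the same approach as the paper: direct verification from the definitions of the \compress\ operation and the $(\ell,2\ell)$-independent set. The paper's own proof is much terser---it simply observes that the size bound follows from the post-processing step and the definition of an $(\alpha,\beta)$-independent set, and declares the remaining conditions ``straightforward to verify''---so what you have written is essentially a careful unpacking of exactly that verification, including the one genuinely subtle point (that the outside-$P$ neighbor of a pole of $P$ cannot itself have degree~$2$ in $U$, by maximality of $P$).
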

\begin{proof}
In view of the post-processing step and the definition of an $(\alpha,\beta)$-independent set, $S$ is a path $(u_1, u_2, \ldots, u_x)$ with $x \in [\ell, 2\ell]$. It is straightforward to verify that the conditions specified in the lemma are met. 
\end{proof}

Next, we analyze the round complexity of the decomposition algorithm and the number $L$ in the decomposition. We remark that the  case of $\gamma = 1$ is considered and analyzed in~\cite{ChangP17}.

\begin{lemma}[\cite{ChangP17}~] \label{lem:decomp-1}
Suppose $\gamma = 1$ and  $\ell \geq 1$ is a constant. An ($\gamma, \ell$)-decomposition with $L = O(\log n)$ of a tree $G$ can be computed in $O(\log n)$ rounds deterministically.
\end{lemma}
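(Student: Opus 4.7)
The plan is to verify two facts: (a) each iteration of the decomposition loop runs in $O(1)$ rounds, and (b) the number of iterations $L$ is $O(\log n)$. Together with the $O(\log^* n)$ post-processing step (computing an $(\ell,2\ell)$-independent set on each maximal $\VC{i}$-path), these yield the claimed $O(\log n)$ round bound.

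For (a), with $\gamma = 1$ an iteration consists of a single \rake\ followed by a single \compress. The three \rake\ conditions depend only on the $1$-neighborhood of $v$ in $U$, so one round suffices. The \compress\ rule asks whether $v$ lies on an all-degree-$2$ subpath of $U$ of length at least $\ell$, which requires $v$ to explore its $\ell$-hop neighborhood along the induced subgraph on $U$; since $\ell = O(1)$, this is $O(1)$ rounds as well.

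For (b), the plan is the classical Miller--Reif contraction argument adapted to the modified operations. The target is to show that for some absolute constant $c > 1$, $|U_{i+2}| \leq |U_i|/c$ for every $i$. The outline is as follows. First, in any forest with $c$ components the leaf count $n_1$ satisfies $n_1 \geq n_{\geq 3} + 2c$, and \rake\ deletes every leaf whose unique neighbor has $U$-degree $> 1$, plus the higher-ID endpoint of every isolated edge; so at least half the leaves are removed. Second, after \rake\ the surviving vertices all have $U$-degree $\geq 2$ except for the low-ID residues of isolated edges, which become isolated and are removed by Condition~\ref{xx1} on the very next \rake. Ignoring those stragglers, the surviving subgraph decomposes into junction vertices (degree $\geq 3$) joined by chains of degree-$2$ vertices, and in a forest the number of chains is $O(\text{junctions})$. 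Third, \compress\ deletes every chain of length $\geq \ell$ entirely; the short chains contribute at most $\ell - 1 = O(1)$ vertices per chain, hence $O(m)$ vertices overall, where $m$ denotes the junction count. Combining, $|U_{i+1}| = O(m) + O(\text{stragglers})$, and an elementary counting inequality then forces $m \leq |U_i|/2$, yielding the constant-factor contraction (possibly after amortizing over two steps).

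The main obstacle I expect is the bookkeeping around the Condition~\ref{xx3} stragglers, which break the clean ``junctions plus chains'' picture after a single \rake\ and therefore prevent a one-iteration contraction bound. The cleanest remedy is to spread the accounting over two consecutive iterations, charging each straggler to its removal by Condition~\ref{xx1} in the next \rake. With this amortization in place, summing the geometric decrease over $O(\log n)$ iterations gives $L = O(\log n)$, and the total round complexity is $O(\log n) \cdot O(1) + O(\log^* n) = O(\log n)$, completing the proof.
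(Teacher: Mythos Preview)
The paper itself does not prove this lemma; it is stated with a citation to \cite{ChangP17} and no argument is given. Your overall plan (show that each \rake+\compress\ iteration shrinks $|U|$ by a constant factor, hence $O(\log n)$ iterations suffice) is the standard Miller--Reif argument and is the right approach.

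There is, however, a concrete error in your step (b). You assert that ``after \rake\ the surviving vertices all have $U$-degree $\geq 2$ except for the low-ID residues of isolated edges.'' This is false: a single \rake\ creates \emph{new} leaves. On the path $a\text{--}b\text{--}c\text{--}d$, one \rake\ removes $a$ and $d$ and leaves $b\text{--}c$; both surviving vertices have degree $1$ and are neither stragglers nor junctions. Your decomposition of the post-\rake\ graph into ``junction vertices joined by degree-$2$ chains'' therefore omits an entire class of vertices, and the downstream claims ``number of chains is $O(\text{junctions})$'' and ``$|U_{i+1}| = O(m) + O(\text{stragglers})$'' do not follow. The obstacle you single out (Condition~\ref{xx3} stragglers) is minor; the real bookkeeping issue is these new leaves.

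The repair is short once you account for them. Writing $L,L'$ for the leaf counts of $U_i,U_i'$ and $D_{\ge 3},D_{\ge 3}'$ for the branching-vertex counts, observe that every new leaf of $U_i'$ had at least one leaf neighbor in $U_i$ that was raked away, and distinct new leaves are charged to distinct removed leaves; hence $L'\le L$. Also $D_{\ge 3}'\le D_{\ge 3}\le L$. The number of maximal degree-$2$ chains in $U_i'$ is at most $L'+D_{\ge 3}'$, so after \compress\ at most $\ell(L'+D_{\ge 3}')\le 2\ell L$ vertices remain. Now split into cases: if $L\ge |U_i|/(4\ell)$ then \rake\ alone already removed a $1/(4\ell)$ fraction of $U_i$; otherwise $|U_{i+1}|\le 2\ell L < |U_i|/2$. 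Either way one iteration contracts by a factor $1-\Omega(1/\ell)$, yielding $L=O(\ell\log n)=O(\log n)$ as desired.
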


In this paper, we are only interested in the case of $\gamma \gg 1$.

\begin{lemma}\label{lem-decomp-2}
Suppose \(k=O(1)\), \(\ell=O(1)\), and
\[
    \gamma \geq n^{1/k}(2\ell)^{1-1/k}.
\]
A \((\gamma,\ell)\)-decomposition with \(L\leq k\) of a tree \(G\) can be computed deterministically in $O(\gamma+\log^\ast n)$ rounds. In particular, if \(\gamma=\Theta(n^{1/k})\), then the round complexity is \(O(n^{1/k})\).
\end{lemma}

\begin{proof}
Fix an arbitrary vertex \(v\in V\). For each iteration \(i\), let \(S_i\) be the connected component containing \(v\) in the subgraph induced by the current set \(U\) at the beginning of the \(i\)th iteration. If \(v\notin U\), then we set \(S_i=\emptyset\). Similarly, let \(S_i'\) be the connected component containing \(v\) at the beginning of the \compress\ operation in the \(i\)th iteration, again setting \(S_i'=\emptyset\) if \(v\notin U\). Observe that \(|S_1| = |V| = n\).

We show that \(S_k'=\emptyset\). Consider an iteration \(i\) with \(S_i'\neq \emptyset\). Let \(A\) be the number of degree-\(2\) vertices in \(S_i'\) that are not removed during the \(i\)th \compress\ operation, and let \(B\) be the number of vertices in \(S_i'\) whose degree in \(S_i'\) is not \(2\). Then
\[
    |S_{i+1}| \leq A+B.
\]

The degree-\(2\) vertices in \(S_i'\) that are not removed during the \(i\)th \compress\ operation form maximal paths of length at most \(\ell-1\). Contracting each such maximal path gives a tree whose vertex set consists of the \(B\) vertices whose degree in \(S_i'\) is not \(2\). Hence there are at most \(B-1\) such maximal paths, and so
\[
    A \leq (\ell-1)(B-1).
\]
Therefore,
\[
    |S_{i+1}| \leq A+B  < \ell B.
\]

It remains to upper bound \(B\). Let \(Q\) be the number of vertices in \(S_i'\) whose degree in \(S_i'\) is at most \(1\). Since \(S_i'\) is a tree, we have
\[
    Q \geq \frac{B}{2}.
\]
Indeed, if \(S_i'\) consists of a single vertex, then this is immediate. Otherwise, the number of leaves in a tree is at least the number of vertices of degree at least \(3\), and hence at least half of the vertices whose degree is not \(2\).

Now consider any vertex \(x\in S_i'\) whose degree in \(S_i'\) is at most \(1\). Since \(x\) survives all \(\gamma\) \rake\ operations in iteration \(i\), the vertices of \(S_i\setminus S_i'\) in the components attached to \(x\) must contain at least \(\gamma\) vertices. Otherwise, all these attached vertices would have been removed within fewer than \(\gamma\) \rake\ operations, after which \(x\) would have degree at most \(1\) and would also be removed before the \compress\ operation. For distinct such vertices \(x\), these attached sets are disjoint, since \(S_i\) is a tree. Hence
\[
    Q\gamma \leq |S_i|.
\]
Together with \(B\leq 2Q\), this gives
\[
    B \leq \frac{2|S_i|}{\gamma}.
\]
Consequently,
\[
    |S_{i+1}| < \ell B \leq \frac{2\ell}{\gamma}|S_i|.
\]

Applying this inequality for \(i=1,2,\ldots,k-1\), we obtain
\[
    |S_k|
    \leq
    \left(\frac{2\ell}{\gamma}\right)^{k-1} n.
\]
By the assumption \(\gamma \geq n^{1/k}(2\ell)^{1-1/k}\), we have
\[
    \left(\frac{2\ell}{\gamma}\right)^{k-1} n
    \leq \gamma.
\]
Thus \(|S_k|\leq \gamma\). Therefore all vertices of \(S_k\) are removed during the \(\gamma\) \rake\ operations in iteration \(k\), and so \(S_k'=\emptyset\).

Since \(v\) was arbitrary, no vertex remains at the beginning of the \compress\ operation in iteration \(k\). Hence no iteration after \(k\) is needed, and the resulting decomposition has \(L\leq k\).

The main part of the algorithm takes \(O(k(\gamma+\ell)) = O(\gamma)\) rounds. The post-processing step takes \(O(\log^\ast n)\) rounds because \(\ell=O(1)\). This proves the claimed round complexity.
\end{proof}

\paragraph{The set of paths $\Pset$.}
We revisit the proof in Section~\ref{sect-gap-review} and prove that the required set of paths $\Pset$ can be computed in $O(\sqrt{n})$ rounds.
We run our algorithm for constructing a ($\gamma, \ell$)-decomposition 
with the parameters $\gamma = \tau = n^{1/2} (2\ell)^{1/2} = \Theta(\sqrt{n})$ and $\ell = \Lpump = \Theta(1)$. Here $\tau = \Theta(\sqrt{n})$ is the parameter in the definition of the skeleton tree $G_{\sk}$ in Section~\ref{sect-gap-review}. Remember that $G_{\sk}$ is the result of iteratively removing all leaf vertices of $G$ for $\tau$ iterations.

By Lemma~\ref{lem-decomp-2}, our ($\gamma, \ell$)-decomposition   satisfies $L = 2$, and it decomposes $V$ into three sets $\VR{1}$, $\VC{1}$, and $\VR{2}$, and the decomposition can be computed in $O(\sqrt{n})$ rounds.
It is clear from the description of the algorithm that $G_{\sk} = \GC{1}$ is exactly the subgraph induced by $\VC{1} \cup \VR{2}$. Selecting $\Pset$ as the set of all connected components of $\VC{1}$ satisfies all the requirements of  $\Pset$ stated in Section~\ref{sect-gap-review}.

\section{Extension to Other Gaps \label{sect:gap-extend}}
In this section, we  prove   Theorem~\ref{thm-gap-main}  by extending the  proof of the $\omega(n^{1/2})$--$o(n)$ gap by Balliu et al.~\cite{Balliu2018disc} reviewed in Section~\ref{sect-gap-review}. Here we only focus on the case of deterministic algorithms. The complete proof of Theorem~\ref{thm-gap-main} is left to Appendix~\ref{sect-gap-details}.


\subsection{Proof Idea}\label{sect-high-level}
Let $k \geq 2$ be any positive constant,
For any
given  $o(n^{1/(k-1)})$-round deterministic  algorithm $\mathcal{A}$ for a  given LCL problem $\mathcal{P}$, our goal is to design a new deterministic  algorithm $\mathcal{A}'$ with round complexity $O(n^{1/k})$.

We compute a ($\gamma, \ell$)-decomposition, with $\gamma = \Theta(n^{1/k})$ and $\ell \geq \Lpump$, to decompose $V$ into the subsets \[\VR{1}, \VC{1}, \VR{2}, \ldots, \VR{k-1}, \VC{k-1}, \VR{k},\]  and  then apply the pumping lemma to extend each path in
$\VC{1}, \VC{2} \ldots, \VC{k-1}$ to a path of length within $[w, w+\Lpump]$,
in order to produce a virtual tree  with $O(w^{k-1})$ vertices, omitting the dependence on $n$. 
By Lemma~\ref{lem-decomp-2},  such a decomposition can be computed in $O(n^{1/k})$ rounds.

If we select $w$ to be sufficiently large, the execution of a given  $o(n^{1/(k-1)})$-round algorithm $\mathcal{A}$ takes less than $0.1w$ rounds on the virtual tree. As each connected component induced by $\VR{i}$ is a rooted tree with diameter $O(\gamma) = O(n^{1/k})$, the simulation of $\mathcal{A}$ can be done in $O(n^{1/k})$ rounds in the underlying network $G$. Hence we obtain
an $O(n^{1/k})$-round algorithm $\mathcal{A}'$ for the same problem.

The above high-level approach does not  work immediately, as we will encounter some issues described below, but these issues can be overcome using the graph operations defined in~\cite{ChangP17}.

\paragraph{An issue in pumping bipolar subtrees.} 
The reason that we can apply the pumping lemma for $\VC{1}$ in Section~\ref{sect-gap-review} is that  each connected component $P$ of $\VC{1}$ is naturally associated with a bipolar tree $P^\ast$.
We do not have this property for the connected components of $\VC{i}$ for $i > 1$, since each vertex $v \in   \VR{1} \cup \VC{1} \cup \cdots \cup \VR{i} =  V(G) \setminus V(\GC{i})$ might be reachable to more than one connected component of $\VC{i}$ via the vertices in $V(G) \setminus V(\GC{i})$. See Figure~\ref{fig:treedecomp}.

Let us recall the virtual tree construction in  Section~\ref{sect-gap-review}.
Define $G'$ as the graph resulting from pumping the paths in $\VC{1}$. Formally,  for each connected component $P$ of  $\VC{1}$,  replace $P^\ast$ by a much longer bipolar tree $P'$ with $\type(P^\ast) = \type(P')$, where $P^\ast$ is the bipolar subtree of $G$ induced by the vertices in $P$ and all vertices in  $\VR{1}$ that are reachable to a vertex in $P$ via the vertices in  $\VR{1}$.
Note that $G'$ is the same as the virtual tree $G_\vi$  in Section~\ref{sect-gap-review}, if we use $G_{\sk} = \GC{1}$  and let $\Pset$  be the set of connected components of $\VC{1}$.

For the $k > 2$ case,  we would like to also pump the paths in $\VC{2}$ in this graph $G'$ in a way similar to the case of $\VC{1}$.
As discussed above, a difference between $\VC{1}$ and $\VC{2}$ is that it is possible that a vertex $v \in V(G') \setminus V(\GC{2})$ is reachable to multiple connected components in $\VC{2}$ via the vertices in $V(G') \setminus V(\GC{2})$, so we are unable to associate a bipolar tree $P^\ast$ to each connected component $P$ of $\VC{2}$.


Recall that in our high-level proof idea we will ultimately simulate an algorithm $\mathcal{A}$ on a virtual tree, and the runtime of $\mathcal{A}$ will be less than $0.1w$. Again consider the graph $G'$ and one of its bipolar subtree $P'$ resulting from pumping $P^\ast$ for some    connected component $P$ of $\VC{1}$. The virtual bipolar tree  $P'$ separates the graph $G'$ into two parts, and the vertices in one part does not need to communicate with the vertices in the other part in the  simulation of $\mathcal{A}$, as its runtime is less than $0.1w$. Recall that the core path of   $P'$ has at least $w$ vertices.

Motivated by the above discussion, we consider the  graph $G''$ defined as the result of applying the following operation on $G'$ for each virtual bipolar tree $P'$. Let $u$ and $v$ be the two vertices in $V(G') \setminus V(P')$ adjacent to the two poles $s$ and $t$ of $P'$ via the edges $\{u,s\}$ and $\{v,t\}$. We \emph{duplicate} $P'$ into two identical bipolar subtrees, one is attached to $u$ via $\{u,s\}$, the other is attached to $v$ via $\{v,t\}$. Note that this is the $\cut$ operation defined in~\cite{ChangP17}. See Figure~\ref{fig:split} for an illustration.

\begin{figure}[ht!]
\centering
{\includegraphics[width=1\textwidth]{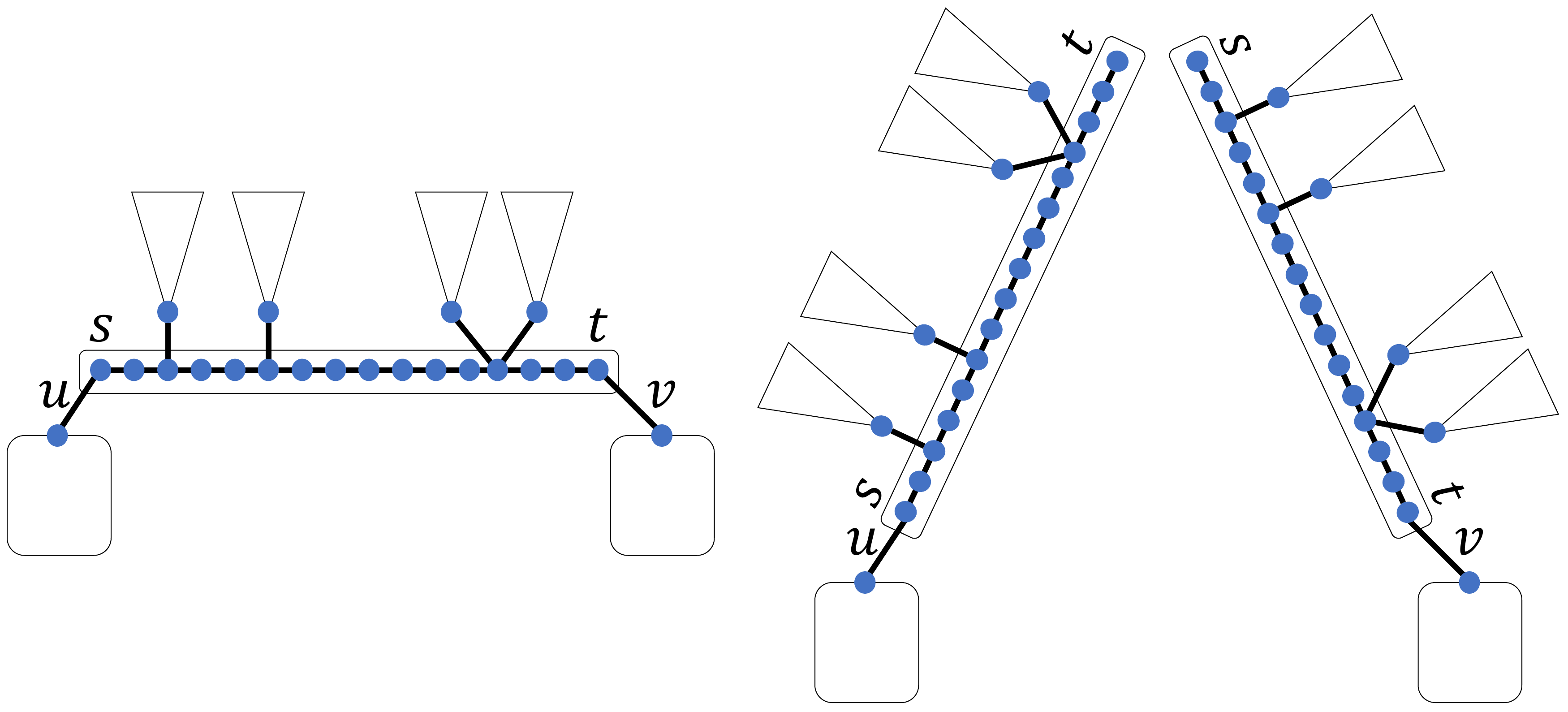}}
\caption{\label{fig:split} The $\cut$ operation}
\end{figure}

Let $P$ be a connected component of $\VC{2}$ in the graph $G''$. With respect to $G''$, we are able to define $P^\ast$ in the same way as the case of $\VC{1}$.
Specifically, we define $P^\ast$
as the bipolar subtree of $G''$ induced by the vertices in $P$ and all vertices in   $V(G'') \setminus V(\GC{2})$ that are reachable to a vertex in $P$ via the vertices in   $V(G'') \setminus V(\GC{2})$. 
Using this approach recursively, we can pump the paths in all layers $\VC{1}, \VC{2}, \ldots, \VC{k-1}$.

\paragraph{An issue caused by duplicating bipolar trees.} 
The duplication of bipolar subtrees in $\cut$ also causes an issue. Consider the graphs $G$, $G'$, and $G''$ defined above. As discussed in Section~\ref{sect-gap-review},  given a legal labeling of $G'$, we can obtain a legal labeling of $G$ using  a property  of $\simm$ and the fact that pumping does not alter the type of a bipolar tree. 
However, when we try to obtain a legal labeling $\LL'$ of $G'$ from a  given  legal labeling $\LL''$ of $G''$, we encounter an issue that the two copies of a bipolar subtree $P'$ resulting from applying $\cut$ in $G'$ might be labeled differently in  $\LL''$.

To resolve this issue, before the duplication of $P'$ in the construction of $G''$ from $G'$, we  let some vertices near the middle of $P'$ to first commit to a certain  labeling. Such a labeling is computed by simulating the given $o(n^{1/(k-1)})$-round algorithm $\mathcal{A}$, pretending that the number of vertices is $O(w^{k-1})$, omitting the dependence on $n$. We can  assume that the runtime of $\mathcal{A}$ on $P'$ is at most $0.1w$ by selecting $w$ to be sufficiently large.

Specifically, let $P^\ast$ be a bipolar tree that we would like to apply the pumping lemma. 
We give a different way of constructing $P'$ from $P^\ast$.
We write $P^\ast$ as a string of subtrees
 $(\TT_1, \TT_2,  \ldots, \TT_x)$. 
Let $(v_1, v_2, \ldots, v_x)$ be the core path of $P^\ast$
and $e = \{v_{\lfloor x/2 \rfloor}, v_{\lfloor x/2 \rfloor+1}\}$ be the middle edge of the core path.
Consider the decomposition $P^\ast = \XX \circ \YY \circ \ZZ$, where
$\YY = (\TT_{\lfloor x/2 \rfloor-r+1}, \ldots, \TT_{\lfloor x/2 \rfloor + r})$ is the middle part.
We apply the pumping lemma on $\XX$ and $\ZZ$ to extend them to longer bipolar trees whose size of core path is within $[w, w+\Lpump]$, and then we assign output labels to the vertices in $N^{r-1}(e) = N^{r-1}(v_{\lfloor x/2 \rfloor}) \cup N^{r-1}(v_{\lfloor x/2 \rfloor+1})$ by simulating the algorithm $\mathcal{A}$, whose runtime is less than $0.1w$. The resulting partially labeled bipolar tree is $P'$. Note that this  construction of $P'$ from $P^\ast$ is the same as the one in~\cite{ChangP17} using the operations $\labelling$ and $\extend$. In this paper, we call this operation $\ext$. See Figure~\ref{fig:labelextend} for an illustration.

\begin{figure}[ht!]
\centering
{\includegraphics[width=1\textwidth]{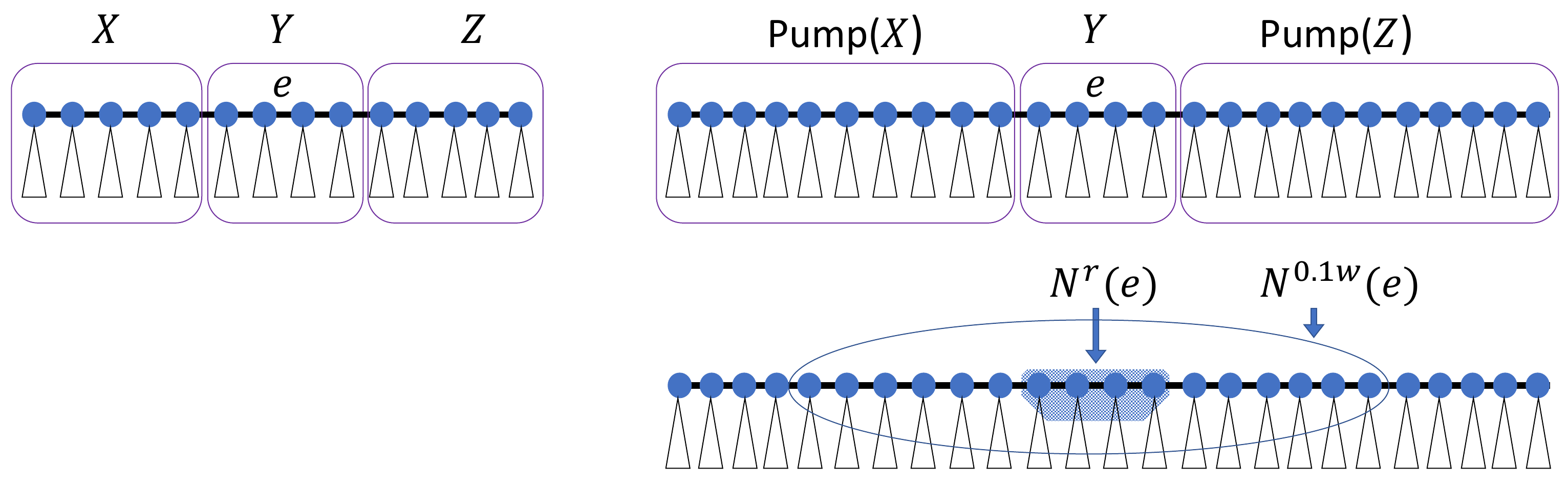}}
\caption{\label{fig:labelextend} The $\ext$ operation}
\end{figure}

We briefly explain why doing this labeling of middle vertices resolves the issue. Suppose $G_a$ is the result of apply $\cut$ to some bipolar subtree $P'$ in $G_b$, where this bipolar tree $P'$ is constructed as above and its middle vertices have been assigned output labels. In a given legal labeling of $G_a$, the two copies of $P'$ might be labeled differently, but their middle vertices must be labeled the same. We decompose $P'= P_s \circ P_t$ into two parts by cutting along the middle edge $e$. The pole $s$ is in $P_s$, and the other pole $t$ is in $P_t$. We name the two copies of $P'$ in $G_a$ by $P_s'$ and  $P_t'$ based on the poles they use to connect to the rest of the graph.
To obtain a legal labeling of $G_b$ from a given legal labeling of $G_a$,
we  simply label $P_s$ by adapting the labeling of  $P_s'$ in $G_a$, and label $P_t$ by adapting the labeling of  $P_t'$ in $G_a$. The legality of the resulting labeling of $G_b$ is easy to verify.

See Figure~\ref{fig:recoverlabel} for an example. The top-left figure illustrates the bipolar subtree $P'$ in $G_b$, where its middle vertices have been assigned output labels. The top-right figure illustrates the graph $G_a$, which results from applying $\cut$ to $P'$ in $G_b$. The down-right figure illustrates the given legal labeling of $G_a$. The down-left figure shows the   legal labeling of $G_b$ obtaining from the given legal labeling of $G_a$.

\begin{figure}[ht!]
\centering
{\includegraphics[width=1\textwidth]{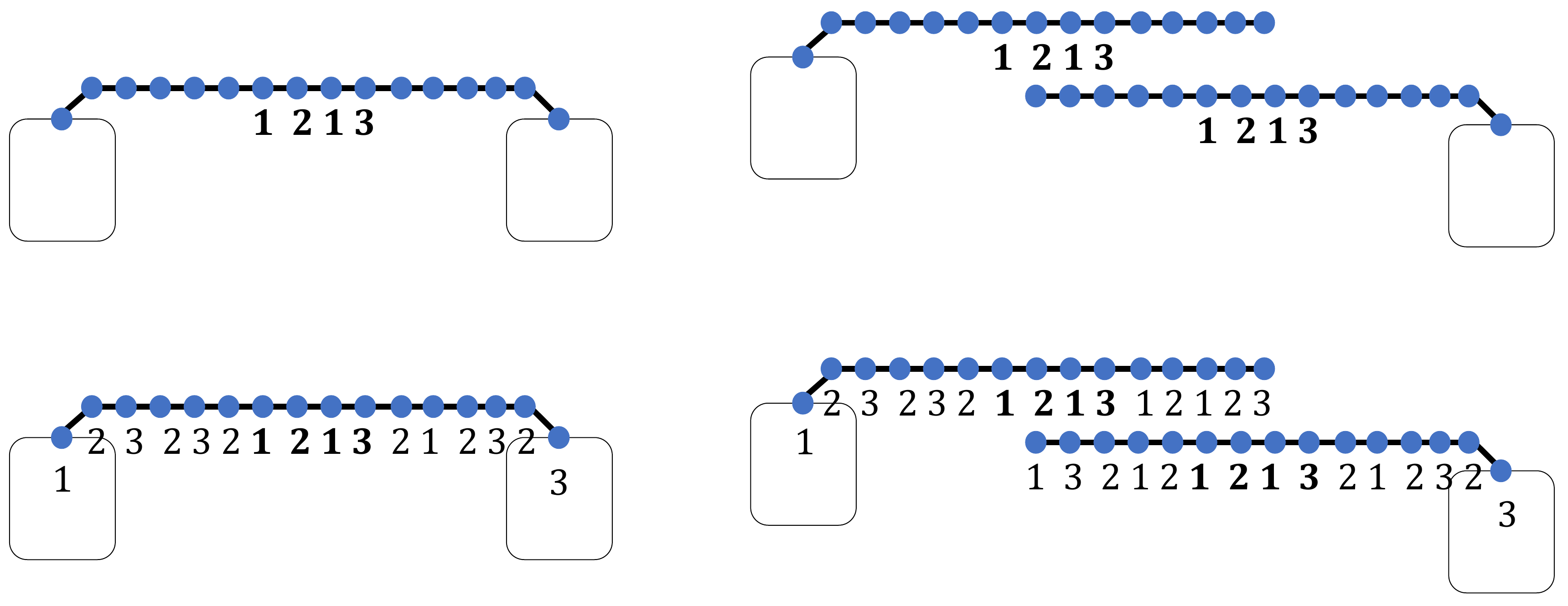}}
\caption{\label{fig:recoverlabel} Obtaining a legal labeling}
\end{figure}

\subsection{A Sequence of Virtual Graphs}

The approach discussed above naturally leads to a sequence of partially labeled virtual graphs \[\RRR{1}, \RRC{1}, \RRR{2}, \RRC{2}, \ldots, \RRR{k}.\] Each virtual graph has a \emph{real} and an \emph{imaginary} part.  Each real vertex corresponds to a vertex in the underlying network $G$.
The graph $\RRR{i}$ will have $\GR{i}$ as its real part, and the graph $\RRC{i}$ will have $\GC{i}$ as its real part. 
The imaginary part of these graphs are subtrees attached to the real vertices. 
In the actual distributed implementation, the simulation of the imaginary subtrees attached to a real vertex $v$ are handled by $v$ in the underlying network $G$.

\paragraph{Construction of $\RRR{1}$.} The graph $\RRR{1}$ is unlabeled and it equals the underlying network $G$.

\paragraph{Construction of $\RRC{1}$.} The graph  $\RRC{1}$ is almost identical to  $G = \RRR{1}$. In $\RRC{1}$, only the vertices in  $\GC{1}$ are real.
For each connected component $S$ of $\VR{1} = \RRR{1} \setminus \GC{1}$, there is at most one vertex  $v \in \GC{1}$ that is adjacent to $S$.  If such a vertex $v$ exists, then $S$ becomes an imaginary subtree stored in the real vertex $v$. Otherwise,  $S$ is not included in $\RRC{1}$.

\paragraph{Construction of $\RRR{2}$.} The graph $\RRR{2}$ is the graph $G'$ in   Section~\ref{sect-high-level}. Formally, the graph  $\RRR{2}$ is constructed by applying the following operation to each connected component $P = (v_1, v_2, \ldots, v_x)$ of $\VC{1}$ in  $\RRC{1}$. Let $P^\ast = (\TT_1, \TT_2,  \ldots, \TT_x)$ be the bipolar subtree induced by $P$ and the imaginary subtrees therein.  
Replace $P^\ast$ by the partially labeled bipolar tree $P'$ which is the result of applying the   operation $\ext$ to
$P^\ast$, and then apply $\cut$ to $P^\ast$. 

\paragraph{Construction of $\RRC{2}$.}  
 The graph  $\RRC{2}$ is almost identical to  $\RRR{2}$. In $\RRC{2}$, only the vertices in  $\GC{2}$ are real.
Due to the $\cut$ operation, for each connected component $S$ of $\RRR{2} \setminus \GC{2}$, there is at most one vertex  $v \in \GC{2}$ that is adjacent to $S$.  If such a vertex $v$ exists, then $S$ becomes an imaginary subtree stored in the real vertex $v$. Otherwise,  $S$ is not included in $\RRC{2}$.

\paragraph{Construction of the other graphs.}
The rest of the partially labeled graphs $\RRR{3}$, $\RRC{3}$, $\RRR{4}$, $\RRC{4}$, $\ldots$, $\RRR{k}$ are constructed analogously.
In the end, $\RRR{k}$ is a virtual graph with $O(w^{k-1})$ vertices, omitting the dependence on $n$. It is straightforward to see that the sequence of virtual graphs $\RRR{1}, \RRC{1}, \RRR{2}, \RRC{2}, \ldots, \RRR{k}$  can be constructed in $O(n^{1/k})$ rounds.

\paragraph{Completing the labeling.}
Recall that the partial labelings of   $\RRR{1}, \RRC{1}, \RRR{2}, \RRC{2}, \ldots, \RRR{k}$ are computed using the operation $\ext$, which is based on simulating $\mathcal{A}$ while assuming that the number of vertices is $n' = O(w^{k-1})$, omitting the dependence on $n$. By the correctness of $\mathcal{A}$, each of these partial labelings can be completed into a complete legal labeling.


Since each connected component of the real part of $\RRR{k}$ has at most $O(n^{1/k})$ vertices, a complete legal labeling of $\RRR{k}$ can be found in $O(n^{1/k})$ rounds by a brute-force information gathering. 
Once we have a complete labeling of $\RRR{k}$, we can start from this complete labeling to obtain a complete legal labeling for  $\RRC{k-1},   \RRR{k-1}, \RRC{k-2}, \ldots, \RRR{1}=G$ in $O(n^{1/k})$ rounds  in view of the discussion in Section~\ref{sect-high-level}, as these graphs are constructed by applying $\ext$ and then applying $\cut$ to the bipolar trees resulting from $\ext$. 

The round complexity for finding a legal labeling of $G = \RRR{1}$ using this approach is $O(n^{1/k})$ because the size of each connected component of $\VR{i}$ is $O(n^{1/k})$.

Hence we have the $\omega(n^{1/k})$--$o(n^{1/(k-1)})$ gap for LCL problems on bounded-degree trees for the case of deterministic algorithms. That is,  given a deterministic  $o(n^{1/(k-1)})$-round algorithm $\mathcal{A}$ for $\mathcal{P}$, we can construct another deterministic algorithm $\mathcal{A}'$ that takes $O(n^{1/k})$ rounds.

In 
the appendix we provide a more formal proof for this result that allows us to handle the case of randomized algorithms and offers a  sequential algorithm that given an integer $k \geq 2$ and a description of an LCL problem $\mathcal{P}$, decides whether its (randomized and deterministic) round complexity is $O(n^{1/k})$ or  $\Omega(n^{1/(k-1)})$.

\paragraph{A note about unique identifiers.}  A subtle issue about the simulation of $\mathcal{A}$ is that the simulation needs distinct identifiers. Specifically, to guarantee the correctness of a deterministic $\tau$-round algorithm for an LCL problem with locality radius $r$, it suffices that any two vertices within distance $2\tau + 2r$ have distinct identifiers~\cite{ChangKP16}. 

We only simulate $\mathcal{A}$ when we apply $\ext$.
When we do the simulation of  $\mathcal{A}$, we can locally generate distinct identifiers of length $O(\log n')$ for all vertices in $N^{0.1w+r}(e)$, where $e$ is the middle edge of the core path of the bipolar tree on which we run  $\mathcal{A}$, and $n' = O(w^{k-1})$, omitting the dependence on $n$. 
This partial ID assignment satisfies the requirement that, for any two vertices $u$ and $v$ that are assigned identifiers and are within distance $2 \cdot 0.1w + 2r$, their identifiers are distinct.



\section{EXPTIME Hardness}\label{sect:hardness}

The goal of this section is to prove Theorem~\ref{thm-hardness}. We do the reduction from the following well-known $\EXPTIME$-complete problem~\cite{arora2009computational,ChandraKS81}: given a linear space-bounded alternating Turing machine $\mathcal{M}$ and a binary string $x$,  decide whether $\mathcal{M}$ accepts $x$. 

Fix two LCL problems $\mathcal{P}_1$ and $\mathcal{P}_2$ on bounded-degree trees. Assume that the round complexity of $\mathcal{P}_1$ is $T_1(n)$,  the round complexity of $\mathcal{P}_2$ is $T_2(n)$, and $T_1(n) \ll T_2(n)$. We will construct an LCL problem $\mathcal{P}_{\MM,x}$ such that its round complexity is $\Theta(T_1(n))$ if $\mathcal{M}$ accepts $x$ and  is $\Theta(T_2(n))$ otherwise. This construction is a natural extension of the  construction in the $\PSPACE$-hardness proof  in~\cite{Balliu2019decidable}.

\paragraph{Alternating Turing machines.}  An \emph{alternating Turing machine}  $\mathcal{M}$ is defined by the following parameters.
\begin{itemize}
    \item $Q$ is the set of states, and there are four possible types of a state: $\exists$, $\forall$, $\acc$, and $\rej$.
    \item $\Gamma=\{0,1\}$ is the tape alphabet.
    \item $\delta_1$ and $\delta_2$ are two transition functions mapping $Q \times \Gamma$ to  $Q \times \Gamma \times \{L,R\}$. 
    \item $q_0 \in Q$ is the initial state.
\end{itemize}

In the above definition, $\delta_i(q_1, \alpha_1) = (q_2, \alpha_2, X)$ indicates the following transition. If the machine is currently in state $q_1$ and the  head reads the symbol $\alpha_1$, then the machine does the following: (i) write $\alpha_2$ to the current cell of the tape, (ii) move  the head  left  or  right  according to $X \in\{L, R\}$, and (iii) transition to state $q_2$.

\paragraph{Configuration.} As we only consider linear space-bounded alternating Turing machines with the binary alphabet, a \emph{configuration} consists of the following.
\begin{itemize}
    \item A length-$s$ binary string $S$ representing the current tape, where $s$ is the space upper bound.  
    \item An index $i \in [s]$ indicating the  current position of the head.
    \item A state $q \in Q$ indicating the current state.
\end{itemize}

In the \emph{initial configuration} $C_0=(S,i,q)$, $S = x$ is the input string $x$, $i = 1$ is the initial position of the head, and $q = q_0 \in Q$ is the initial state.

A configuration where $q$  is an $\acc$-state is said to be \emph{accepting}.
A configuration where $q$  is a $\rej$-state is said to be \emph{rejecting}. An execution of a Turing machine terminates once it enters an $\acc$-state or a $\rej$-state.

For the case $q$ is an $\forall$-state, the configuration $C$ is accepting if the two configurations reachable from $C$ in one step  are both accepting, and $C$  is rejecting if at least one configuration reachable from $C$ in one step is rejecting.
For the case $q$ is an $\exists$-state, the configuration $C$ is accepting if at least one configuration reachable from $C$ in one step  is accepting, and $C$  is rejecting if the two configurations reachable from $C$ in one step  are both  rejecting.

To avoid infinite loop, we can modify the given $s$-space bounded alternating Turing machine $\MM$ by implementing a counter that forces $\mathcal{M}$ to enter a $\rej$-state after $s \cdot 2^{s + |Q|}$ steps, because $s \cdot 2^{s + |Q|}$ is an upper bound on the number of configurations for an $s$-space bounded alternating Turing machine. 

From now on we assume that each configuration is either accepting or rejecting.
We say that $\mathcal{M}$ accepts $x$ if the initial configuration is accepting.

\paragraph{The tree representation.} Given an $s$-space bounded alternating Turing machine $\mathcal{M}$ and a binary string $x \in \{0,1\}^s$, the execution of  $\mathcal{M}$ on $x$ can be encoded as a rooted tree $T_{\MM,x}$ with input labeling, as follows.

Let $C = (S,i,q)$ be a configuration, and let $z \in \{1,2\}$, define $P_{C,z}=(v_1, v_2, \ldots, v_s)$ as the path where $v_j$ is labeled by the 5-tuple \[\left(a = S[j],  \ q, \  b = \mathbf{1}(i=j), \ y\in\{\acc,\rej\}, \ z\right).\]
Here $b = \mathbf{1}(i=j)$ is the indicator function such that $b = 1$ if $i = j$, and $b = 0$  otherwise. The parameter $y$ indicates whether $C$ is accepting or rejecting.  The parameter $a = S[j] \in \{0,1\}$ represents the $j$th element in the tape $S$.

We define the rooted tree $T_{C,z}$ recursively as follows.
\begin{itemize}
    \item If $C$ is a configuration where $q$ is an $\acc$-state or a $\rej$-state, then $T_{C,z}$ is simply the path $P_{C,z}$, where the first vertex is the root.
    \item Suppose $C$ is a configuration where $q$ is a $\forall$-state or a $\exists$-state.  Let $C_1$ and $C_2$ be the configurations reachable from $C$ in one step via $\delta_1$ and $\delta_2$, respectively.  The rooted tree $T_{C,z}$ is the result of attaching $T_{C_1,1}$ and $T_{C_2,2}$ to the last vertex of $P_{C,z}$, and we set the first vertex of $P_{C,z}$ as the root.
\end{itemize}
Then $T_{\MM,x}$ is defined as $T_{C_0,1}$, where $C_0$ is the initial configuration. Note that $T_{\MM,x}$  is uniquely defined given $\MM$ and $x$.

\subsection{LCL Construction}
The  idea of the construction of $\mathcal{P}_{\MM,x}$ is as follows.
 We will use input labels to ensure that the input graph $G=(V,E)$ is of the following form. The vertex set $V$ consists of a \emph{main part} $V_M$ and an \emph{auxiliary part} $V_A$.
 Whether $v \in V$ belongs to $V_M$ or $V_A$ depends on its input label.
For the sake of simplicity,  we assume that each edge is initially oriented. We are allowed to make this assumption because edge orientation can also be encoded as input labels on vertices, as discussed in Section~\ref{sect-prelim}. The orientation of an edge $\{u,v\}$ from $u$ to $v$ is denoted as $u \rightarrow v$.

\paragraph{Basic requirements.}
We make the following requirements.
\begin{itemize}
    \item Each  $v \in V_A$ has at most one neighbor $u \in N(v)$ such that $v \rightarrow u$. 
    \item Each  $v \in V_M$ has at most one neighbor $u \in N(v)$ such that $u \in V_A$.
    \item For each edge $\{u,v\}$ with $u \in V_A$ and $v \in V_M$, it is oriented in the direction $u \rightarrow v$.
\end{itemize}
That is,  each vertex $v \in V_M$ in the main part is associated with \emph{at most one} \emph{auxiliary subtree} $T_v$ whose root $u$ is oriented to $v$, and all vertices in $T_v$ are in the auxiliary part $V_A$.

 The input graph might be an arbitrary tree of maximum degree $\Delta$ that is not necessarily in this form. However, whether $v \in V$ is in the correct form can be checked locally by examining its neighborhood $N(v)$, as the above basic requirements are {locally checkable}. Using radius-1 constraints, we can force the vertices that do not meet these requirements to mark themselves a special symbol $\sun$ in their output labeling, indicating that they should be treated as if they do not exist.
 From now on we assume that all vertices in the graph $G$ meet these requirements.

 
\paragraph{Good and bad vertices.}
For each vertex $v \in V_M$ in the main part, it belongs to one of the following two cases. Note that if $\MM$ accepts $x$, then all vertices $v \in V_M$ must be good. 
\begin{itemize}
    \item We say that $v \in V_M$ is \emph{good} if at least one of the following conditions is met.
    \begin{itemize}
        \item There is no auxiliary tree $T_v$ attached to $v$. 
        \item $T_v = T_{\MM, x}$ and   $\MM$ accepts $x$.
        \item $T_v \neq T_{\MM, x}$.
    \end{itemize}
    \item We say that $v \in V_M$ is \emph{bad} if $T_v = T_{\MM, x}$ and   $\MM$ rejects $x$.
\end{itemize}

Our goal is to design $\mathcal{P}_{\MM,x}$ in such a way that each good vertex $v \in V_M$ is able to mark itself a special symbol $\bigstar$ in $O(1)$ rounds of communication, and each bad vertex $v \in V_M$ is not able to
do so in the sense that if a bad vertex  $v \in V_M$  marks itself $\bigstar$, then the overall labeling must be illegal.

All vertices in $V_M$ that are marked $\bigstar$ are required to solve the easier problem  $\mathcal{P}_1$, which has round complexity $T_1(n)$. All remaining vertices in $V_M$ are required to solve the more difficult problem $\mathcal{P}_2$, which has round complexity $T_2(n)$.

If $\mathcal{M}$ accepts $x$, then   the round complexity of  $\mathcal{P}_{\MM,x}$ is $\Theta(T_1(n))$, because all vertices $v \in V_M$ are good.
If $\mathcal{M}$ rejects $x$,  then 
the round complexity of  $\mathcal{P}_{\MM,x}$ is $\Theta(T_2(n))$, because it is possible that all vertices $v \in V_M$ are bad. Therefore, the requirements of  Theorem~\ref{thm-hardness} are met.

\subsection{Detecting Errors in Tree Representations}

For each vertex $v \in V_M$, if there is no $T_v$ attached to $v$, or if the input label at the root of $T_v$ indicates $y = \acc$, then $v$ is allowed to mark itself  $\bigstar$. This can be done using a radius-1 constraint in $\mathcal{P}_{\MM,x}$.   Note that if the input label at the root of $T_v$ indicates $y = \acc$, then  either (i) $T_v = T_{\MM, x}$ and   $\MM$ accepts $x$, or (ii)  $T_v \neq T_{\MM, x}$. In both cases, $v$ is good. 

The remaining situation is when the input label at the root of $T_v$ indicates $y \neq \acc$. 
To deal with this situation, we need to design locally checkable constraints that allow $v$ to mark itself $\bigstar$ if and only if $T_v \neq  T_{\MM, x}$. Furthermore, $v$ need to be able to do this in $O(1)$ rounds deterministically if  $T_v \neq  T_{\MM, x}$.

\paragraph{The error symbol.}
To do so, we introduce a  symbol $\err$ with the following property. Suppose $v \in V_A$ is marked $\err$ and $v \rightarrow u$. If $u \in V_A$, then $u$ can also mark itself $\err$. If $u \in V_M$, then $u$ can mark itself $\bigstar$. This property can be specified as  radius-1 constraints in $\mathcal{P}_{\MM,x}$, and it allows us to mark $u \in V_M$ as $\bigstar$ when there is a vertex $v \in T_u$ marking itself $\err$.

For the rest of the proof, we focus on designing locally checkable rules that enable at least one vertex $u \in T_v$ 
to mark itself $\err$ if and only if $T_v \neq  T_{\MM, x}$. 
We will go though each possible error in the tree representation that we can have, and show how such an error can be captured using locally checkable rules.

\paragraph{Basic errors.}  We begin with the following basic errors.
\begin{itemize}
    \item Consider the case that $v \in V_A$ has no children. If its input label indicates that $q$ is not an $\acc$-state or a $\rej$-state, then $v$ can mark itself $\err$. Since $v$ is a leaf of the tree representation, if the representation is correct, then $v$ must be in a configuration where the state is an $\acc$-state or a $\rej$-state.
    \item Consider the case that $v \in V_A$ has exactly one child $u \in V_A$.  We require that the parameters $q$, $y$, and $z$ are the same in both $u$ and $v$, otherwise $v$ can mark itself $\err$. The reason is that if the representation is correct, $u$ and $v$ must belong to the same path $P_{C,z}$ representing a configuration $C$ and a number $z \in \{1,2\}$.
    \item Consider the case that $v \in V_A$ has more than two children. Then $v$ can mark itself $\err$, because in a correct tree representation all vertices have at most two children.
\end{itemize}

\paragraph{Errors in path lengths.}
To ensure that each path $P_{C,z}$ contains exactly $s$ vertices, we introduce $s$ auxiliary symbols $\ell_1, \ell_2, \ldots, \ell_s$. Any vertex $v \in V_A$ with zero or two children can mark itself $\ell_1$. For $i = 1, 2, \ldots, s-1$, any vertex $v \in V_A$ with a child marked $\ell_{i}$ can mark itself $\ell_{i+1}$. The possible errors are as follows.
\begin{itemize}
    \item Let $v \in V_A$ be a vertex marked $\ell_i$ for some $2 \leq i \leq s$. We require $v$ to have exactly one child.
    \item Let $v \in V_A$ be a vertex that has a child $u$ marked $\ell_s$. We require $v$ to have exactly two children.
\end{itemize}
  If a requirement is not met, then  $v$  can mark itself $\err$.

\paragraph{Errors in $y$-labels.}
Recall that the parameter $y \in \{\acc,\rej\}$ in an input label indicating whether the corresponding configuration is accepting or rejecting. The possible errors about $y$-labels are as follows.
\begin{itemize}
    \item Let $v \in V_A$ be a vertex. We require that its $y$-label and $q$-label are consistent in the following sense. If $q$ is an $\acc$-state, then $y = \acc$. If $q$ is a $\rej$-state, then $y = \rej$.  
    \item Let $v \in V_A$ be a vertex of two children.  We require that its $y$-label and $q$-label are consistent with the $y$-labels in the two children in the following sense. Let $y_1$ and $y_2$ be the $y$-labels of the two children. For the case $q$ is a $\forall$-state, we have $y = \acc$ if $y_1 = \acc$ and $y_2 = \acc$, otherwise $y = \rej$.    For the case $q$ is a $\exists$-state, we have $y = \rej$ if  $y_1 = \rej$ and $y_2 = \rej$, otherwise $y = \acc$.   
\end{itemize}
  If a requirement is not met, then  $v$  can mark itself $\err$.

\paragraph{Errors in $z$-labels.}
Recall that the parameter $z \in \{1, 2\}$ in an input label indicating whether the corresponding configuration is the result of applying $\delta_1$ or $\delta_2$ to the parent configuration.  For the special case of the initial configuration, we have $z = 1$.
The possible errors about $z$-labels are as follows.

\begin{itemize}
    \item Let $v \in V_A$ be a root vertex, i.e., $v \rightarrow u$ for some $u \in V_M$. We require that its $z$-label is $1$.   
    \item Let $v \in V_A$ be a vertex of two children.  We require that the two $z$-labels $z_1$ and $z_2$ in its two children are distinct.  
\end{itemize}
  If a requirement is not met, then  $v$  can mark itself $\err$.

\paragraph{Errors in $\{a,q,b\}$-labels in the initial configuration.} We consider the   three labels $a$, $q$, and $b$ in the initial configuration, i.e., the top $s$ vertices in the tree $T_v$. 

We first discuss what are the input labels that we expect to have at these vertices.
We denote these vertices by $v_1, v_2, \ldots, v_s$, where $v_1$ is the root, and $v_{i+1}$ is the child of $v_i$. 

\begin{itemize}
    \item Each of $v_1, v_2, \ldots, v_{s-1}$ has exactly one child, and $v_s$ has exactly two children.
    \item $a = x[i]$ for each $v_i$.
    \item $q = q_0$ for all of $v_1, v_2, \ldots, v_s$.
    \item $b = 1$ for $v_1$, and $b = 0$ for  $v_2, v_3, \ldots, v_s$.
\end{itemize}

To let each vertex $v \in V_A$  decide whether $v = v_i$, we introduce the following $s$ auxiliary symbols: $p_1, p_2, \ldots, p_s$.  The rules for these symbols are as follows.
\begin{itemize}
    \item If $v \in V_A$ is a root, then $v$ is required to mark itself $p_1$.
    \item If the parent of $v \in V_A$ is marked $p_i$ for some $1 \leq i < s$, then $v$ is required to mark itself $p_{i+1}$. 
\end{itemize}
With these symbols, we can now detect errors in $v_i$ by allowing a vertex $v \in V_A$ to mark itself $\err$ whenever $v$  marks itself $p_i$ and $v$ does not satisfy the above requirements for $v_i$.

\paragraph{Errors in $b$-labels in other configurations.}
Let $v \in V_A$ be a vertex that is not within distance $s-1$ to the root. The correct $b$-label at $v$ is determined by the $z$-label at $v$ and the $\{q,a,b\}$-labels in the distance-$\{s-1, s, s+1\}$ ancestors of $v$.

If the $b$-label at $v$ is incorrect, our strategy to detect such an error is to use extra symbols to allow $v$ to send its  $\{b, z\}$-labels to its distance-$s$ ancestor $u$, and then $u$ can decide whether there is indeed an error by inspecting the input labels in $u$ and the neighbors of $u$.

Specifically, we introduce the auxiliary symbols $e_{i, b, z}$, where   $b \in \{0,1\}$, $z \in \{1, 2\}$, and $0 \leq i \leq s$. The rules are the following.
\begin{itemize}
    \item Each $v \in V_A$ can mark itself $e_{0, b, z}$ if the values of $\{b, z\}$ are the same as the $\{b, z\}$-labels at $v$.
    \item For each $1 \leq i \leq s$, $v \in V_A$ can mark itself $e_{i, b, z}$ if $v$ has a child that marks itself  $e_{i-1, b, z}$.
    \item Each $v \in V_A$ can mark itself $\err$ if $v$  marks itself  $e_{s, b, z}$ and  the values of $\{b, z\}$ are inconsistent with the $\{a,q,b\}$-labels at $v$ and the neighbors of $v$ in the sense described above.
\end{itemize}

\paragraph{Errors in $a$-labels in other configurations.}
Let $v \in V_A$ be a vertex that is not within distance $s-1$ to the root. 
Let $u$ be the distance-$s$ ancestor of $v$.
The correct $a$-label at $v$ is determined by the  $z$-label at $v$ and the $\{q,a,b\}$-labels at $u$. An incorrect $a$-label at $v$ can be detected in a way similar to the case of $b$-labels.

\paragraph{Errors in $q$-labels in other configurations.}
As we already force all vertices in a path $P_{C,z}$ to have the same $q$-label, here we only need to focus on the vertices with $b=1$.
Let $v \in V_A$ be a vertex with $b=1$ that is not within distance $s-1$ to the root.
Similar to the case of $b$-labels, the correct $q$-label at $v$ is determined by  the  $z$-label at $v$ and  the $\{q,a,b\}$-labels in the distance-$\{s-1, s, s+1\}$ ancestors of $v$.
Hence an incorrect $q$-label at $v$ can be detected similarly.




\paragraph{Summary.} The number of input labels and output labels needed is polynomial in the description length of $(\MM, x)$. Note that we only use radius-1 constraints in the above construction. Since $\Delta =O(1)$ and $r = O(1)$, the description length of $\mathcal{P}_{\MM,x}$ is polynomial in the description length of $(\MM, x)$, and so this is indeed a polynomial-time reduction.

 As the tree  $T_{\MM, x}$ is independent of the underlying network $G$, the height of $T_{\MM, x}$ can be seen as a constant independent of $n$. If $T_v \neq T_{\MM, x}$, then $v$ is able to mark itself $\bigstar$ by  examining and marking its distance-$O(1)$ neighborhood in $T_v$, and this can be done in $O(1)$ rounds deterministically in the $\LOCAL$ model.


\newpage

\appendix
\section*{\LARGE Appendix}
\vspace{0.5cm}
\section{Basic Graph Operations \label{sect-operations}}
In Section~\ref{sect-operations} we review the graph operations considered in~\cite{ChangP17}.
These operations will be used in the proof of Theorem~\ref{thm-gap-main} in Section~\ref{sect-gap-details}.

\subsection{Graph Surgery}
Let $\GG=(G,\LL)$ be a partially labeled graph, and let $\HH=(H,\LL)$ be a subgraph of $\GG$.
Define the \emph{poles} of $\HH$ as those vertices in $V(H)$ that are adjacent to vertices in $V(G) \setminus V(H)$.
The operation $\replace$  removes $\HH$ and replaces it with some other graph $\HH'$.
\medskip

\begin{description}
\item[$\replace$] Let $S=(v_1, v_2, \ldots,v_p)$ be an ordered list of the poles of $\HH$
and let $S=(v_1', v_2', \ldots,v_p')$ be a designated ordered list of poles in some   graph $\HH'$.
The   graph $\GG' = \replace(\GG,(\HH,S),(\HH',S'))$
is constructed as follows.
Starting with $\GG$, replace $\HH$ with $\HH'$, and replace each edge $\{u,v_i\}$ with $u\in V(G) \setminus  V(H)$ by the new edge $\{u,v_i'\}$.  
\end{description}

If   $S$ and $S'$ are clear, then
we  simply write $\GG' = \replace(\GG,\HH,\HH')$.
Writing $\GG'=(G',\LL')$ and $\HH' =(H',\LL')$, there is a natural 1-1 correspondence between  $V(G) \setminus V(H)$ and $V(G') \setminus V(H')$.

\subsection{An Equivalence Relation on Graphs}
The following two theorems are crucial properties of the equivalence relation $(\HH,S) \simm (\HH',S')$ on partially labeled graphs with a fixed number of poles  defined in~\cite{ChangP17}. 

\begin{theorem} \label{thm:rel-1-copied}
Let $\GG = (G,\LL)$, and let $\HH=(H,\LL)$ be a subgraph $\GG$.
Suppose $\HH'$ is a graph for which $(\HH,S)\simm (\HH',S')$
and let $\GG' = \replace(\GG,(\HH,S),(\HH',S'))$.
We write $\GG' = (G',\LL')$ and $\HH'=(H',\LL')$.
Let $\LL_\diamond$ be a complete labeling of $\GG$ that is locally consistent for all vertices in $H$.
Then there exists a complete labeling $\LL_\diamond'$ of $\GG'$ such that the following conditions are met.
\begin{itemize}
\item  For each $v \in V(G) \setminus V(H)$ and its corresponding $v' \in V(G') \setminus V(H')$, we have $\LL_\diamond(v) = \LL_\diamond'(v')$. Moreover, if $\LL_\diamond$ is locally consistent for $v$, then $\LL_\diamond'$ is locally consistent for $v'$.
\item  $\LL_\diamond'$ is locally consistent for all vertices in $H'$.
\end{itemize}
Moreover, the labeling $\LL_\diamond'$ of $\HH'$ can be computed from $\HH'$, $\HH$, and the labeling $\LL_\diamond$ restricted to $\HH$.
\end{theorem}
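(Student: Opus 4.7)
The plan is to unpack the defining property of the equivalence relation $\simm$, which was tailored exactly so that this replacement theorem holds. Informally, the type of a partially labeled graph with an ordered tuple of poles records all information that the outside world can ever ``see'' about the graph through the poles (up to the locality radius $r$). Two graphs that have the same type are therefore interchangeable inside any larger host graph, in the sense that any legal continuation through one must admit a legal counterpart through the other.

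First I would construct $\LL_\diamond'$ in two pieces. On $V(G') \setminus V(H')$, set $\LL_\diamond'(v') := \LL_\diamond(v)$ using the natural bijection $v \leftrightarrow v'$ guaranteed by the definition of $\replace$; the first bullet of the theorem is then trivially satisfied, and moreover for any $v'$ at distance greater than $r$ from every pole, local consistency at $v'$ follows from local consistency at $v$ because the radius-$r$ labeled neighborhoods are canonically isomorphic. On $V(H')$, I would invoke the defining property of $(\HH,S) \simm (\HH',S')$: since $\LL_\diamond$ is locally consistent on $H$, its restriction $\LL_\diamond|_H$ is a legal continuation of the poles $S$ of $\HH$ compatible with the outside labels at those poles, so $\simm$-equivalence produces a legal continuation $\LL'$ of the poles $S'$ of $\HH'$ that is compatible with the (identical) outside labels at those poles.

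Next I would verify local consistency at the remaining vertices, i.e.\ those lying within distance $r$ of some pole. Near a pole, the radius-$r$ neighborhood of $v'$ in $\GG'$ is the union of a piece in $\HH'$ (labeled by $\LL'$) and a piece outside (labeled by $\LL_\diamond'$, hence identical to the corresponding piece in $\GG$). The compatibility at the pole interface is precisely the condition encoded by $\simm$: the induced ``behavior'' seen from each pole at depth $r$ is the same in $(\HH,S,\LL_\diamond|_H)$ and in $(\HH',S',\LL')$. Consequently, the radius-$r$ view of $v'$ in $\GG'$ matches the radius-$r$ view of $v$ in $\GG$ in an allowed configuration, and local consistency transfers. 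Local consistency inside $H'$ away from the poles is part of the conclusion of $\simm$-equivalence and requires no further work.

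Finally, for the computability clause, the same equivalence $\simm$ comes with a constructive transfer procedure: given $\HH$, $\HH'$, and the labeling $\LL_\diamond|_H$, one reads off the local behavior at each pole of $\HH$ and ``replays'' it at the corresponding pole of $\HH'$, obtaining $\LL'$ without using any information from $V(G) \setminus V(H)$. The main obstacle I expect is the bookkeeping at the pole interface, where radius-$r$ neighborhoods straddle both $H'$ and its complement; but this is exactly what the definition of $\simm$ in~\cite{ChangP17} was designed to absorb, so once the two-piece construction is set up the verification is essentially a re-reading of that definition.
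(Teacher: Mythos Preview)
The paper does not actually prove this theorem: it is stated in the appendix as a property of the equivalence relation $\simm$ imported from~\cite{ChangP17}, with no proof given in the present paper. So there is no ``paper's own proof'' to compare your proposal against.

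That said, your outline is a faithful reconstruction of how the argument in~\cite{ChangP17} proceeds: define $\LL_\diamond'$ to agree with $\LL_\diamond$ outside $H'$, invoke the definition of $\simm$ to produce a compatible labeling on $H'$, and then check local consistency near the poles using the interface condition built into $\simm$. Since the relation $\simm$ is engineered precisely so that this replacement property holds, the proof is indeed essentially a re-reading of the definition, as you anticipate. Your proposal is correct in approach; just be aware that in the context of this paper you would simply cite~\cite{ChangP17} rather than reprove it.
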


\begin{theorem} \label{thm:rel-2-copied}
Let $\GG = (G,\LL)$, and let $\HH=(H,\LL)$ be a subgraph of $\GG$ with poles $S$.
Suppose $\HH'$ is a graph that satisfies $(\HH,S) \simm (\HH',S')$ for some pole list $S'$.
Let $\GG' = \replace(\GG,(\HH,S),(\HH',S'))$.
Designate a set $X \subseteq (V(G) \setminus V(H)) \cup S$ as the poles of $\GG$, listed in some order,
and let $X'$ be the corresponding list of vertices in $\GG'$.
Then we have $(\GG,X) \simm (\GG',X')$.
\end{theorem}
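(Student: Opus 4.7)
My plan is to establish $(\GG,X) \simm (\GG',X')$ by viewing $\simm$ via its substitution property: two pole-marked graphs are equivalent when one can be swapped for the other inside any ambient graph while preserving (and being able to extend to) legal labelings. Theorem~\ref{thm:rel-1-copied} is exactly the tool that realizes such a swap; the task is therefore to reduce the outer swap $(\GG,X) \to (\GG',X')$ to the inner swap $(\HH,S) \to (\HH',S')$, whose equivalence is a hypothesis.

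First I would fix an arbitrary ambient partially labeled graph $\mathcal{F}$ that contains $\GG$ as a subgraph with pole list $X$, and set $\mathcal{F}' := \replace(\mathcal{F},(\GG,X),(\GG',X'))$. The key structural observation is that $\mathcal{F}'$ can alternatively be produced by applying the inner replacement $\replace(\mathcal{F},(\HH,S),(\HH',S'))$: the only vertices touched by the outer swap lie in $V(H) \setminus S$, and the hypothesis $X \subseteq (V(G)\setminus V(H)) \cup S$ forces each outer attachment either to go to a vertex preserved verbatim by the inner replacement (one in $V(G)\setminus V(H)$) or to a vertex in $S$ that the inner replacement remaps to $S'$ by the designated bijection. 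Thus the two descriptions of $\mathcal{F}'$ produce the same labeled graph, and $\mathcal{F}'$ is a single-shot application of the inner replacement to $\mathcal{F}$.

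Next I would invoke Theorem~\ref{thm:rel-1-copied} on this inner replacement. Given any complete legal labeling $\LL_\diamond$ of $\mathcal{F}$, the theorem yields a complete legal labeling $\LL_\diamond'$ of $\mathcal{F}'$ that agrees with $\LL_\diamond$ on $V(\mathcal{F}) \setminus V(H)$ and that is induced on $V(H')$ purely from $\HH$, $\HH'$, and $\LL_\diamond\!\restriction\!\HH$. Consequently the $X$-labels of $\LL_\diamond$ and the $X'$-labels of $\LL_\diamond'$ agree under the pole correspondence. Running the same argument with $(\HH,S)$ and $(\HH',S')$ interchanged (which is legitimate by symmetry of $\simm$) gives the reverse transfer. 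This pair of labeling transfers is precisely what is needed to witness $(\GG,X) \simm (\GG',X')$, whatever the concrete definition of $\simm$ from~\cite{ChangP17} happens to be, since the definition is designed to be characterised by exactly this substitution/extension property.

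The main obstacle I anticipate is the bookkeeping in paragraph two: one must check that the outer pole list $X$ --- part of whose entries may lie in $S$ --- is transported to $X'$ by the inner replacement in exactly the way the outer $\replace$ prescribes. The constraint $X \subseteq (V(G)\setminus V(H)) \cup S$ is what makes this coherent, because vertices in $V(G)\setminus V(H)$ are identified across the inner swap and vertices in $S$ are mapped to $S'$ via a fixed bijection that the outer $\replace$ inherits. Once this commuting-diagram issue is nailed down, the theorem reduces cleanly to two applications of Theorem~\ref{thm:rel-1-copied}.
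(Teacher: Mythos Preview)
The paper itself does not prove this theorem; it is stated without proof in the appendix as review material imported from~\cite{ChangP17}. So there is no in-paper argument to compare your proposal against.

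Your structural reduction is the right idea and is how the proof in~\cite{ChangP17} proceeds: embedding $\GG$ in an ambient $\mathcal{F}$ and observing that $\replace(\mathcal{F},(\GG,X),(\GG',X'))$ coincides with $\replace(\mathcal{F},(\HH,S),(\HH',S'))$ is exactly how one leverages the hypothesis $(\HH,S)\simm(\HH',S')$, and the constraint $X\subseteq (V(G)\setminus V(H))\cup S$ is indeed what makes the pole bookkeeping commute.

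The gap is in your final step. You write that the bidirectional labeling transfer obtained from Theorem~\ref{thm:rel-1-copied} ``is precisely what is needed to witness $(\GG,X)\simm(\GG',X')$, whatever the concrete definition of $\simm$ from~\cite{ChangP17} happens to be.'' But Theorem~\ref{thm:rel-1-copied} is stated only as a \emph{consequence} of $\simm$, not as a characterization of it; you have not shown that the substitution property you derive is \emph{sufficient} to conclude $\simm$. The relation $\simm$ in~\cite{ChangP17} has a concrete definition (roughly, two pole-marked graphs are equivalent when they admit exactly the same set of extendable partial labelings on the radius-$r$ neighborhoods of their poles), and a rigorous proof must either verify that definition directly for $(\GG,X)$ and $(\GG',X')$ or else establish separately that the conclusion of Theorem~\ref{thm:rel-1-copied} characterizes $\simm$. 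As written, your last inference is an assumption rather than an argument.
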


The equivalence relation $\simm$  has a \emph{finite} number of equivalence classes,
for any fixed number $p$ of poles, and for any fixed LCL problem $\mathcal{P}$.


\subsection{A Pumping Lemma for Trees}

A partially labeled tree $\TT$ with one pole $S=(z)$ is called a \emph{rooted} tree with the root $z$. 
Let $\class(\TT)$ denote the equivalence class of   $(\TT, S=\{z\})$ w.r.t.~$\simm$.
By Theorem~\ref{thm:rel-1-copied}, whether a partially labeled rooted tree $\TT$ admits a legal labeling is determined by $\class(\TT)$.
The following lemma is a consequence of Theorem~\ref{thm:rel-2-copied}.


\begin{lemma} \label{lem:replace-rootedtree-copied}
Let $\TT$ be a partially labeled rooted tree, and let $\TT'$ be a rooted subtree of $\TT$,
whose leaves are also leaves of $\TT$.
Let $\TT''$ be another partially labeled rooted tree such that $\class(\TT') = \class(\TT'')$.
Then replacing $\TT'$ with $\TT''$ does not alter the class of $\TT$.
\end{lemma}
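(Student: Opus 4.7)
The plan is to derive Lemma~\ref{lem:replace-rootedtree-copied} as a direct specialization of Theorem~\ref{thm:rel-2-copied}, with $\GG = \TT$ playing the role of the ambient graph and $\HH = \TT'$ the subgraph to be swapped.

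The only observation that takes any work is that $\TT'$, viewed as a subgraph of $\TT$, has exactly one pole, namely its own root $z$: every non-root vertex $v \in V(\TT')$ has its $\TT$-parent in $V(\TT')$ (by the rooted subtree structure), and all $\TT$-children of $v$ lie in $V(\TT')$ (otherwise some vertex of $V(\TT')$ would end up as a leaf of $\TT'$ that still has a child in $\TT$, contradicting the hypothesis that leaves of $\TT'$ are leaves of $\TT$). So $S = (z)$ is the full pole list of $\TT'$, and by the definition of $\class$ the tree $\TT''$ likewise carries the single-pole list $S' = (z'')$, where $z''$ is its root.

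With the pole structures nailed down, the hypothesis $\class(\TT') = \class(\TT'')$ is by definition $(\TT',(z)) \simm (\TT'',(z''))$. Form
\[
\GG' \;=\; \replace(\TT,\,(\TT',(z)),\,(\TT'',(z''))),
\]
and designate the root $r$ of $\TT$ as the single pole $X = (r)$ of the ambient graph. Since $r$ is either the root $z$ of $\TT'$ (which can only happen when $\TT' = \TT$) or lies in $V(\TT) \setminus V(\TT')$, the containment $X \subseteq (V(\TT) \setminus V(\TT')) \cup S$ required by Theorem~\ref{thm:rel-2-copied} is satisfied. Applying the theorem delivers $(\TT,(r)) \simm (\GG',(r'))$, where $r'$ is the image of $r$ in $\GG'$; unfolding the definition of $\class$ for rooted trees gives exactly $\class(\TT) = \class(\GG')$, which is the desired conclusion.

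In short, the argument is a clean instantiation of Theorem~\ref{thm:rel-2-copied}; no genuine obstacle arises, and the only bookkeeping step is extracting the single-pole structure of $\TT'$ from the two geometric hypotheses on the rooted subtree.
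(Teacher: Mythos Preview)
Your approach is exactly the paper's: it states only that the lemma is a consequence of Theorem~\ref{thm:rel-2-copied}, and your proof is the natural instantiation of that theorem with $\GG=\TT$, $\HH=\TT'$, and $X=(r)$. One minor point on the bookkeeping: your justification that every non-root $v\in V(\TT')$ has all its $\TT$-children in $V(\TT')$ is not quite watertight as phrased (if $v$ loses one $\TT$-child but retains another, $v$ is not a leaf of $\TT'$, so no contradiction arises from the leaves hypothesis alone); however, under the intended reading of ``rooted subtree whose leaves are leaves of $\TT$''---namely that $\TT'$ consists of $z$ together with \emph{all} its descendants in $\TT$---the single-pole property is immediate and the rest of your argument goes through cleanly.
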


A partially labeled tree $\HH$ with two poles $S=(s,t)$ is called a \emph{bipolar} tree.
Let $\HH=(H,\LL)$ be a bipolar tree with poles $S=(s,t)$.
The unique directed path $(s = v_1, v_2, \ldots, v_x = t)$ in $H$ from $s$ to $t$ is called the {\em core path} of $\HH$.
 A bipolar tree $\HH=(H,\LL)$ can be naturally represented as a \emph{sequence of rooted trees} $(\TT_1, \TT_2, \ldots, \TT_x)$, as discussed in Section~\ref{sect:pumping-prelim}.
 Let   $\type(\HH)$ denote the equivalence class of  $(\HH,S=(s,t))$ w.r.t.~$\simm$.
By Theorems~\ref{thm:rel-1-copied} and~\ref{thm:rel-2-copied}, we have the following four lemmas.

\begin{lemma} \label{lem:type2class-copied}
Let $\HH$ be a partially labeled bipolar tree with poles $(s,t)$. Let $\TT$ be $\HH$, but regarded as a  tree rooted at $s$.
Then $\class(\TT)$ is determined by $\type(\HH)$.
If we write $\HH = (\TT_i)_{i \in [x]}$, then $\type(\HH)$ is determined by $\class(\TT_1), \class(\TT_2), \ldots, \class(\TT_{x})$.
\end{lemma}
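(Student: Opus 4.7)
Both parts of the lemma will follow from Theorems~\ref{thm:rel-1-copied} and~\ref{thm:rel-2-copied}, by making a careful choice of pole list so that the bipolar equivalence pushes forward to the rooted-tree equivalence (for the first claim) and, conversely, so that swapping a single subtree $\TT_i$ for one of the same class preserves $\type$ (for the second claim).

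For the first statement, suppose $\HH_1,\HH_2$ are bipolar trees with $\type(\HH_1)=\type(\HH_2)$, i.e.\ $(\HH_1,(s_1,t_1))\simm(\HH_2,(s_2,t_2))$. Take the ambient graph to be $\HH_1$ itself and the subgraph being replaced to be all of $\HH_1$; a single $\replace$ then produces $\HH_2$. Now apply Theorem~\ref{thm:rel-2-copied} with the shorter pole designation $X=(s_1)$, which is trivially a sub-list of $(s_1,t_1)$. The theorem yields $(\HH_1,(s_1))\simm(\HH_2,(s_2))$, which is exactly $\class(\TT_1)=\class(\TT_2)$.

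For the second statement, I will argue that along $\HH=(\TT_1,\ldots,\TT_x)$ one may replace each subtree $\TT_i$ by a rooted tree $\TT_i'$ with $\class(\TT_i)=\class(\TT_i')$ without altering $\type$. Let $\widetilde{\HH}$ denote the current bipolar tree with poles $(\tilde s,\tilde t)$, and view $\TT_i$ inside $\widetilde{\HH}$ as a subgraph with unique pole $u_i$ (its attachment vertex to the rest of $\widetilde{\HH}$). Since $\class(\TT_i)=\class(\TT_i')$, we may perform the $\replace$ to obtain $\widetilde{\HH}'$. Theorem~\ref{thm:rel-2-copied}, applied with pole designation $X=(\tilde s,\tilde t)$, then gives $\type(\widetilde{\HH})=\type(\widetilde{\HH}')$, provided $\{\tilde s,\tilde t\}\subseteq (V(\widetilde{\HH})\setminus V(\TT_i))\cup\{u_i\}$. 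This containment is immediate: for $1<i<x$ both $\tilde s$ and $\tilde t$ lie outside $\TT_i$, while for $i\in\{1,x\}$ the relevant endpoint coincides with $u_i$. Iterating $i=1,\ldots,x$ transforms $\HH$ into $\HH'=(\TT_1',\ldots,\TT_x')$ with $\type(\HH)=\type(\HH')$, proving that $\type(\HH)$ depends only on the sequence $\class(\TT_1),\ldots,\class(\TT_x)$.

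The only delicate point is the bookkeeping at the endpoint subtrees $\TT_1$ and $\TT_x$, where a pole of the bipolar tree sits \emph{inside} the rooted subtree being swapped; however, Theorem~\ref{thm:rel-2-copied} is stated precisely to permit such poles through the ``$\cup\,S$'' in its condition on $X$, so the argument is uniform in $i$. (The degenerate case $x=1$, should one wish to include it, is handled identically with pole list $X=(u_1,u_1)$, each entry lying in the pole set $\{u_1\}$.)
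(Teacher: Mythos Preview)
The paper does not prove this lemma in detail; it merely states that Lemmas~\ref{lem:type2class-copied}--\ref{lem:type-copied} ``follow from Theorems~\ref{thm:rel-1-copied} and~\ref{thm:rel-2-copied}''. Your write-up is exactly the natural way to unpack that claim, and your argument for the second statement is correct as written: the single pole of $\TT_i$ inside $\widetilde{\HH}$ really is $u_i$, and the containment $\{\tilde s,\tilde t\}\subseteq (V(\widetilde{\HH})\setminus V(\TT_i))\cup\{u_i\}$ holds in every case, so Theorem~\ref{thm:rel-2-copied} applies.

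There is one small formal wrinkle in your proof of the first statement. In this paper the poles of a subgraph $\HH\subseteq\GG$ are \emph{defined} to be the vertices of $\HH$ adjacent to $V(\GG)\setminus V(\HH)$. When you take $\GG=\HH=\HH_1$, that pole list is empty, not $(s_1,t_1)$; so as literally stated, Theorem~\ref{thm:rel-2-copied} only lets you choose $X\subseteq\emptyset$, and you cannot take $X=(s_1)$. The fix is routine: enlarge the ambient graph by attaching a fresh leaf to each of $s_1$ and $t_1$, so that $\HH_1$ genuinely has pole list $(s_1,t_1)$ inside the new $\GG$; after the $\replace$ you get $\HH_2$ with the same two leaves attached at $s_2,t_2$. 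Theorem~\ref{thm:rel-2-copied} with $X=(s_1)$ now yields $(\GG,(s_1))\simm(\GG',(s_2))$, and one more invocation of Theorem~\ref{thm:rel-2-copied} in the reverse direction---viewing each dummy leaf as a one-pole subgraph and replacing it by itself on the other side---collapses this to $(\HH_1,(s_1))\simm(\HH_2,(s_2))$, as desired. (Alternatively, in the original Chang--Pettie formulation of $\simm$ the ``drop a pole'' step is immediate from the definition; the issue here is purely an artifact of the paper's abstracted restatement.)
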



\begin{lemma} \label{lem:replace-copied}
Let $\GG=(G,\LL)$ be a partially labeled graph, and let $\HH=(H, \LL)$ be a
bipolar subtree of $\GG$ with poles $(s,t)$.
Let $\HH'$ be another partially labeled bipolar tree.
Consider $\GG' = \replace(\GG,\HH,\HH')$.
If $\type(\HH') = \type(\HH)$ and $\GG$ admits a legal labeling $\LL_\diamond$,
then $\GG'$ admits a legal labeling $\LL_\diamond'$ such that
${\LL_\diamond}(v) = \LL_\diamond'(v')$ for each vertex $v \in V(G) \setminus V(H)$ and its corresponding $v' \in V(G') \setminus V(H')$.
\end{lemma}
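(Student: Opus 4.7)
}
My plan is to derive this lemma as a direct consequence of Theorem~\ref{thm:rel-1-copied}, since the hypotheses of that theorem are essentially already in hand once we unpack the assumption $\type(\HH) = \type(\HH')$. First I would observe that by definition $\type(\HH)$ is the equivalence class of $(\HH, (s,t))$ under $\simm$, so the hypothesis $\type(\HH) = \type(\HH')$ is precisely the statement $(\HH, (s,t)) \simm (\HH', (s',t'))$ where $(s',t')$ is the pole pair of $\HH'$. This means Theorem~\ref{thm:rel-1-copied} is applicable to the replacement $\GG' = \replace(\GG, \HH, \HH')$.

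Next I would invoke the hypothesis that $\GG$ admits a legal labeling $\LL_\diamond$. Since a legal labeling is locally consistent at \emph{every} vertex, in particular $\LL_\diamond$ is locally consistent for all vertices of $H$, which is the technical requirement needed to apply Theorem~\ref{thm:rel-1-copied}. Applying that theorem yields a complete labeling $\LL_\diamond'$ of $\GG'$ with two guarantees: (i) for each $v \in V(G) \setminus V(H)$ and its corresponding $v' \in V(G') \setminus V(H')$ we have $\LL_\diamond(v) = \LL_\diamond'(v')$, with local consistency transferring from $v$ to $v'$; and (ii) $\LL_\diamond'$ is locally consistent at every vertex of $H'$.

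To finish, I would just combine these two guarantees. Since $\LL_\diamond$ is legal on $\GG$, it is locally consistent at every $v \in V(G) \setminus V(H)$, so guarantee (i) propagates local consistency to every $v' \in V(G') \setminus V(H')$; guarantee (ii) covers the remaining vertices, namely those in $H'$. Thus $\LL_\diamond'$ is locally consistent everywhere in $\GG'$, which is exactly what it means to be a legal labeling. The equality $\LL_\diamond(v) = \LL_\diamond'(v')$ on corresponding vertices outside the replaced subtree is delivered by guarantee (i) verbatim.

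The proof is essentially a repackaging of Theorem~\ref{thm:rel-1-copied}, so there is no real obstacle; the only substantive point is to notice that the hypothesis of a \emph{legal} labeling on $\GG$ (not just a labeling locally consistent on $H$) is what allows one to conclude legality (rather than just partial consistency) on $\GG'$. No additional graph surgery, pumping, or case analysis is required.
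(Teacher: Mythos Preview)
Your proposal is correct and matches the paper's own treatment: the paper states this lemma (together with three others) as an immediate consequence of Theorems~\ref{thm:rel-1-copied} and~\ref{thm:rel-2-copied} without giving a separate proof, and your argument is precisely the unpacking of Theorem~\ref{thm:rel-1-copied} that one needs here. In fact only Theorem~\ref{thm:rel-1-copied} is required for this particular lemma, exactly as you use it.
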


Furthermore, in Lemma~\ref{lem:replace-copied}, the labeling $\LL_\diamond'$ of vertices $V(H')$ can be computed only using the information of $\HH$, $\HH'$, and the output labeling $\LL_\diamond$ restricted to $\HH$. In other words, the knowledge of anything in  $\GG'$ outside of $\HH'$ is not needed. 

\begin{lemma} \label{lem:replace-2-copied}
Suppose that $\GG = (\TT_i)_{i \in [x]}$ is a partially labeled bipolar tree,
$\HH = (\TT_i, \TT_{i+1}, \ldots, \TT_j)$ is a bipolar subtree of $\GG$,
and $\HH'$ is some other partially labeled bipolar tree with $\type(\HH') = \type(\HH)$.
Then $\GG' = \replace(\GG,\HH,\HH')$ is a partially labeled bipolar tree
and $\type(\GG') = \type(\GG)$.
\end{lemma}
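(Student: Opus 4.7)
The plan is to break the statement into two parts: (i) verify structurally that $\GG'$ is again a bipolar tree, and (ii) invoke Theorem~\ref{thm:rel-2-copied} with a carefully chosen pole list to deduce $\type(\GG')=\type(\GG)$.

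For part (i), I would first describe $\GG$ concretely in terms of its string representation. Writing $\GG=(\TT_1,\ldots,\TT_x)$ with core path $(v_1,\ldots,v_x)$, the two poles of $\GG$ are $s=v_1$ and $t=v_x$. Since $\HH=(\TT_i,\ldots,\TT_j)$, its two poles are exactly $v_i$ and $v_j$, which lie on the core path of $\GG$ and are the only vertices in $V(H)$ adjacent to vertices outside $V(H)$. Writing $\HH'=(\TT_i',\ldots,\TT_{j'}')$ with poles $s',t'$, the $\replace$ operation excises $\HH$ from $\GG$ and attaches $\HH'$ by identifying the incident edges from $V(G)\setminus V(H)$ at $v_i$ and $v_j$ with new edges at $s'$ and $t'$. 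Since $\GG$ is a tree and removing the interior of $\HH$ leaves at most two connected fragments (one on each side of the core path), gluing in the tree $\HH'$ through its two poles reconnects them into a single tree. Thus $\GG'$ is a tree, and its string representation is $\GG'=(\TT_1,\ldots,\TT_{i-1},\TT_i',\ldots,\TT_{j'}',\TT_{j+1},\ldots,\TT_x)$, which is a bipolar tree whose poles are the natural correspondents of $s$ and $t$ (they coincide with $s',t'$ precisely when $i=1$ or $j=x$).

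For part (ii), I would apply Theorem~\ref{thm:rel-2-copied} with $X=(s,t)$, the pole list of $\GG$. The hypothesis of that theorem requires $X\subseteq (V(G)\setminus V(H))\cup S$, where $S=(v_i,v_j)$ are the poles of $\HH$. This is indeed satisfied in every case: if $i>1$ then $s\in V(G)\setminus V(H)$, and if $i=1$ then $s=v_i\in S$; the analogous dichotomy holds for $t$. Since $\type(\HH')=\type(\HH)$ by hypothesis (i.e.\ $(\HH,S)\simm(\HH',S')$), Theorem~\ref{thm:rel-2-copied} yields $(\GG,X)\simm(\GG',X')$ where $X'$ is the list of vertices in $\GG'$ corresponding to $X$. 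By the structural analysis in part (i), $X'$ is exactly the pole list of $\GG'$ as a bipolar tree, so the relation $(\GG,X)\simm(\GG',X')$ is the definition of $\type(\GG')=\type(\GG)$.

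The main obstacle I anticipate is purely bookkeeping: ensuring that the pole list $X'$ produced by Theorem~\ref{thm:rel-2-copied} really is the pole list of $\GG'$ viewed as a bipolar tree, especially in the boundary cases $i=1$ or $j=x$ where one or both poles of $\GG$ are themselves poles of $\HH$ (and hence get replaced by the correspondents $s'$ or $t'$ of the poles of $\HH'$). No genuine mathematical difficulty arises beyond this case analysis, since both the tree structure and the preservation of $\simm$ under $\replace$ have been isolated in earlier results.
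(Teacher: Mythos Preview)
Your proposal is correct and follows exactly the route the paper intends: the paper states this lemma (together with Lemmas~\ref{lem:type2class-copied}, \ref{lem:replace-copied}, and~\ref{lem:type-copied}) as an immediate consequence of Theorems~\ref{thm:rel-1-copied} and~\ref{thm:rel-2-copied} without giving any further argument, and your plan simply spells out how Theorem~\ref{thm:rel-2-copied} is applied with $X=(s,t)$, including the boundary cases $i=1$ or $j=x$. There is no discrepancy in approach.
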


\begin{lemma}  \label{lem:type-copied}
Let $\HH = (\TT_i)_{i \in [x]}$ and
$\HH' = (\TT_i)_{i \in [x+1]}$ be identical to $\HH$ in its first $x$ trees.
Then $\type(\HH')$ is a function of $\type(\HH)$ and $\class(\TT_{x+1})$.
\end{lemma}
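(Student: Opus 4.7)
The plan is to argue that $\type(\HH')$ can be computed via two successive substitutions, the first depending only on $\type(\HH)$ and the second depending only on $\class(\TT_{x+1})$.

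First, I would observe that $\HH$ appears naturally as a bipolar subtree of $\HH'$: writing the core path of $\HH'$ as $(v_1, \ldots, v_{x+1})$, the connected component of $\HH'$ containing $v_x$ after removing the edge $\{v_x, v_{x+1}\}$, regarded as a bipolar tree with poles $(v_1, v_x)$, has sequence representation $(\TT_1, \ldots, \TT_x)$, so its type equals $\type(\HH)$. Lemma~\ref{lem:replace-2-copied} then lets me substitute this subtree with any bipolar tree $\HH^*$ satisfying $\type(\HH^*) = \type(\HH)$ without altering $\type(\HH')$. Concretely, the resulting bipolar tree is built by identifying the second pole of $\HH^*$ with the vertex adjacent to $v_{x+1}$ and retaining $v_{x+1}$ as the second pole of the whole.

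Next, I would apply Theorem~\ref{thm:rel-2-copied} with a one-pole subgraph: inside the resulting bipolar tree, $\TT_{x+1}$ is a rooted subtree with its single pole being the root, so replacing it by any rooted tree $\TT'$ with $\class(\TT') = \class(\TT_{x+1})$ leaves the $\simm$-class of the enclosing bipolar tree (and hence its $\type$) unchanged. Chaining the two substitutions, $\type(\HH')$ equals the type of the bipolar tree built from an arbitrary representative $\HH^*$ of $\type(\HH)$ and an arbitrary representative $\TT'$ of $\class(\TT_{x+1})$ via this append construction, and so depends only on $\type(\HH)$ and $\class(\TT_{x+1})$.

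The main place to be careful will be verifying that the sequence representation of the bipolar subtree with poles $(v_1, v_x)$ inside $\HH'$ really is $(\TT_1, \ldots, \TT_x)$ --- that is, that the rooted tree hanging at the new pole $v_x$ does not accidentally absorb $\TT_{x+1}$. This is a direct check from the convention: in the sequence representation of $\HH'$, the rooted tree $\TT_i$ is the connected component of $v_i$ after deleting the core-path edges $\{v_{i-1}, v_i\}$ and $\{v_i, v_{i+1}\}$; in the restricted bipolar subtree with poles $(v_1, v_x)$, the edge $\{v_x, v_{x+1}\}$ is absent by construction, so $\TT_{x+1}$ is excluded and the sequence representation matches that of $\HH$. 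Once this bookkeeping is settled, the result follows immediately from the two replacement statements.
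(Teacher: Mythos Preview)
Your argument is correct and aligns with the paper's approach: the paper does not give an explicit proof but simply states that Lemmas~\ref{lem:type2class-copied}--\ref{lem:type-copied} all follow from Theorems~\ref{thm:rel-1-copied} and~\ref{thm:rel-2-copied}, and your derivation via Lemma~\ref{lem:replace-2-copied} (itself a consequence of those theorems) together with a direct application of Theorem~\ref{thm:rel-2-copied} for the one-pole replacement of $\TT_{x+1}$ is exactly such an instantiation. The bookkeeping you flag is handled correctly, and the application of Theorem~\ref{thm:rel-2-copied} in the second step is valid since $v_{x+1}$ is the sole pole of the subgraph $\TT_{x+1}$, so the pole list $X = (s^*, v_{x+1})$ of the ambient bipolar tree lies in $(V(G)\setminus V(\TT_{x+1}))\cup S$ as required.
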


Combining Lemma~\ref{lem:type-copied} with the well-known pumping lemma of finite automata, there is a \emph{pumping constant} $\Lpump$ such that the following lemma holds.


\begin{lemma} \label{thm:pump-copied}
Let $\HH = (\TT_1, \TT_2, \ldots, \TT_k)$, with $k \geq \Lpump$.
Regard each $\TT_i$ in the string notation $\HH = (\TT_1, \TT_2, \ldots, \TT_k)$ as a character.
 Then $\HH$ can be decomposed into three substrings $\HH = x \circ y \circ z$ such that (i) $|xy| \leq \Lpump$, (ii) $|y|\geq 1$, and (iii) $\type(x \circ y^j \circ z) = \type(\HH)$ for each non-negative integer $j$.
\end{lemma}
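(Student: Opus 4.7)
The plan is to recast the statement as the classical pumping lemma for deterministic finite automata, where the equivalence classes under $\simm$ play the role of states. Recall from the remark following Theorem~\ref{thm:rel-2-copied} that $\simm$ admits only finitely many equivalence classes for any fixed number of poles, so the set of all possible types of bipolar trees is finite; call its size $T$. Lemma~\ref{lem:type-copied} tells us that $\type((\TT_1,\ldots,\TT_{i+1}))$ is a function of $\type((\TT_1,\ldots,\TT_i))$ and $\class(\TT_{i+1})$, so we can view the sequence $\TT_1,\TT_2,\ldots,\TT_k$ as the input to a deterministic finite automaton $\mathcal{A}$ whose states are bipolar-tree types, whose alphabet is the (finite) set of rooted-tree classes, and whose state after reading the length-$i$ prefix is exactly $\type((\TT_1,\ldots,\TT_i))$.

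Set $\Lpump := T + 1$. Assuming $k \geq \Lpump$, consider the $\Lpump$ prefix types $\type((\TT_1)), \type((\TT_1,\TT_2)), \ldots, \type((\TT_1,\ldots,\TT_{\Lpump}))$. By the pigeonhole principle, two of these coincide, so there exist indices $1 \le a < b \le \Lpump$ with
\[
\type((\TT_1,\ldots,\TT_a)) = \type((\TT_1,\ldots,\TT_b)).
\]
Define $x := (\TT_1,\ldots,\TT_a)$, $y := (\TT_{a+1},\ldots,\TT_b)$, and $z := (\TT_{b+1},\ldots,\TT_k)$. Conditions (i) $|xy| = b \le \Lpump$ and (ii) $|y| = b - a \ge 1$ follow immediately.

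For (iii) I would invoke the standard DFA pumping invariant. Because feeding $y$ into $\mathcal{A}$ starting from state $\type(x)$ returns to state $\type(xy) = \type(x)$, a straightforward induction on $j$ yields $\type(x \circ y^j) = \type(x)$ for every $j \ge 0$: the automaton is in state $\type(x)$ both before and after consuming each copy of $y$. Reading the suffix $z$ from this common state then gives $\type(x \circ y^j \circ z) = \type(x \circ y \circ z) = \type(\HH)$, as required. The only mild subtlety is ensuring the automaton is well-defined on the length-$1$ prefix $(\TT_1)$ (a bipolar tree whose two poles coincide at the root of $\TT_1$), but this is immediate from the definition of $\type$. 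No real obstacle arises: once Lemma~\ref{lem:type-copied} is interpreted as a deterministic transition function, the lemma reduces to the classical pumping lemma on a DFA with at most $T$ states.
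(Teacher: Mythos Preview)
Your proposal is correct and follows exactly the approach the paper indicates: the paper itself does not give a detailed proof but simply states that the lemma follows by ``combining Lemma~\ref{lem:type-copied} with the well-known pumping lemma of finite automata,'' which is precisely the DFA-with-types-as-states reduction you carry out. Your write-up just makes that one-line remark explicit.
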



Lemma~\ref{thm:pump-copied} guarantees the existence of the following function $\pump$.

\medskip

\begin{description}
\item[$\pump$]
Let $\HH = (\TT_i)_{i \in [x]}$ be a partially labeled bipolar tree with $x \geq \Lpump$.
The function $\pump(\HH,w)$ produces a partially labeled bipolar tree $\HH' = (\TT_i')_{i \in [x']}$
such that the following conditions are met.
\begin{itemize}
\item The set of rooted trees $\{\TT_1, \TT_2, \ldots, \TT_x\}$ appearing in the tree list of $\HH$ is the same as the set of  rooted trees $\{\TT_1', \TT_2', \ldots, \TT_{x'}'\}$ in the tree list of $\HH'$. 
    \item $\type(\HH) = \type(\HH')$.
\item $x' \in [w,w+\Lpump]$.

\end{itemize}

\end{description}

\subsection{Other Graph Operations}
The following operations $\extend$, $\labelling$, and $\cut$ are taken from~\cite{ChangP17}. The operation $\labelling$ is parameterized by a labeling function $f$, which assigns output labels to the middle part of the input bipolar tree. The operation $\extend$ is parameterized by a number $w$, which indicates the  target length of the output bipolar tree.

\begin{description}
\item[\labelling.]
Let $\HH=(\TT_1, \TT_2,  \ldots, \TT_x)$ be a partially labeled bipolar tree with $x \geq \ell$.
Let $(v_1, v_2, \ldots, v_x)$ be the core path of $\HH$
and $e = \{v_{\lfloor x/2 \rfloor}, v_{\lfloor x/2 \rfloor+1}\}$ be the middle edge of the core path.
It is guaranteed that all vertices in $N^{r-1}(e)$ in $\HH$ are not already assigned output labels.
The partially labeled bipolar tree $\HH' = \labelling(\HH)$ is defined as the result of assigning output labels to vertices in $N^{r-1}(e)$ by the function $f$. 
Note that the neighborhood function is evaluated w.r.t.~$H$.  In particular, the set $N^{r-1}(e)$ contains the vertices $v_{\lfloor x/2 \rfloor-r+1}, \ldots, v_{\floor{x/2} + r}$ of the core path,
and also contains parts of the trees $\TT_{\floor{x/2}-r+1}, \ldots, \TT_{\floor{x/2} + r}$.
\item[\extend.]
Let $\HH=(\TT_1, \TT_2,  \ldots, \TT_x)$ be a partially labeled bipolar tree with $x \in [\ell, 2 \ell]$.
The partially labeled bipolar tree  $\HH' = \extend(\HH)$ is defined as follows.
Consider the decomposition $\HH = \XX \circ \YY \circ \ZZ$, where
$\YY = (\TT_{\lfloor x/2 \rfloor-r+1}, \ldots, \TT_{\lfloor x/2 \rfloor + r})$.
Then $\HH' = \pump(\XX,w) \circ \YY \circ \pump(\ZZ,w)$.
\item[\cut.]
Let $\GG=(G,\LL)$ be a partially labeled graph and $\HH=(H, \LL)$ be a bipolar subtree with poles $(s,t)$.
Suppose that $\HH$ is connected to the rest of $\GG$ via two edges $\{u,s\}$ and $\{v,t\}$.
The partially labeled graph $\GG' = \cut(\GG,\HH)$
is formed by
(i) duplicating $\HH$ and the edges $\{u,s\}$ and $\{v,t\}$ so that $u$ and $v$ are attached to both copies of $\HH$,
(ii) removing the edge that connects $u$ to one copy of $\HH$, and removing the edge from $v$ to the other copy of $\HH$.
\end{description}

The following two lemmas summarize the crucial properties of these graph operations. For notational simplicity,  let $\ext(\HH)$  denote $\extend(\labelling({\HH}))$. 

\begin{lemma} \label{lem:recover-1-copied}
Let $\GG=(G,\LL)$ be a partially labeled graph and $\HH=(H,\LL)$ be a bipolar subtree of $\GG$ with poles $(s,t)$.
Let $\tilde{\HH}$ be another partially labeled bipolar tree with $\type(\tilde{\HH}) = \type(\HH)$
and $\HH' = \ext(\tilde{\HH})$.
If $\GG' = \replace(\GG,\HH,\HH')$ admits a legal labeling $\LL_\diamond'$,
then $\GG$ admits a legal labeling $\LL_\diamond$ such that ${\LL_\diamond}(v) = \LL_\diamond'(v')$ for each vertex $v \in V(G) \setminus V(H)$ and its corresponding vertex $v' \in V(G') \setminus V(H')$.
\end{lemma}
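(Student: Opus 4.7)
The plan is to pull the labeling $\LL_\diamond'$ back from $\GG'$ to $\GG$ through two intermediate partially labeled graphs, invoking Lemma~\ref{lem:replace-copied} at each stage. Define
\[
\GG'' = \replace(\GG,\HH,\labelling(\tilde{\HH})), \qquad \GG''' = \replace(\GG,\HH,\tilde{\HH}),
\]
so that $\GG'$, $\GG''$, $\GG'''$, and $\GG$ agree outside the subtree region and differ only by how this region was constructed from $\tilde{\HH}$ (fully extended, middle-labeled only, or untouched, respectively).

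First I would show that $\extend$ preserves type, so that $\type(\HH') = \type(\labelling(\tilde{\HH}))$. Writing $\labelling(\tilde{\HH}) = \XX \circ \YY \circ \ZZ$, we have $\HH' = \pump(\XX,w) \circ \YY \circ \pump(\ZZ,w)$; the pumping lemma (Lemma~\ref{thm:pump-copied}) gives $\type(\pump(\XX,w)) = \type(\XX)$ and $\type(\pump(\ZZ,w)) = \type(\ZZ)$, and two successive applications of Lemma~\ref{lem:replace-2-copied} (substituting $\XX$ back in for $\pump(\XX,w)$, and similarly for $\ZZ$) yield the desired type equality. Applying Lemma~\ref{lem:replace-copied} to $\GG'$ and $\LL_\diamond'$ with the type-equal pair $(\HH',\labelling(\tilde{\HH}))$ then produces a legal labeling $\LL_\diamond''$ of $\GG''$ that coincides with $\LL_\diamond'$ on every $v \in V(G) \setminus V(H)$.

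Next, I would observe that $\LL_\diamond''$ is already a legal labeling of $\GG'''$. The two graphs share the same underlying unlabeled graph; their partial labelings differ only at the middle vertices of the embedded copy of $\tilde{\HH}$, where $\GG''$ carries the extra labels imposed by $\labelling$. Any complete labeling extending the richer partial labeling of $\GG''$ automatically extends the sparser partial labeling of $\GG'''$, and local consistency depends only on the completed labeling and the LCL constraints, not on which entries were pre-assigned; hence $\LL_\diamond''' := \LL_\diamond''$ is a legal labeling of $\GG'''$.

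Finally, since $\type(\tilde{\HH}) = \type(\HH)$ by hypothesis, one more application of Lemma~\ref{lem:replace-copied} to $\GG'''$ and $\LL_\diamond'''$ with the pair $(\tilde{\HH},\HH)$ produces a legal labeling $\LL_\diamond$ of $\GG$ that agrees with $\LL_\diamond'''$, and therefore with $\LL_\diamond'$, on every vertex of $V(G) \setminus V(H)$. The only real obstacle is the type-preservation check for $\extend$ in the first step; the remaining two steps are essentially bookkeeping once Lemma~\ref{lem:replace-copied} is in hand.
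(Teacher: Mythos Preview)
Your argument is correct. The paper itself does not supply a proof of Lemma~\ref{lem:recover-1-copied}; it is quoted from~\cite{ChangP17} along with the other graph-operation lemmas in Appendix~\ref{sect-operations}, so there is no in-paper proof to compare against directly. That said, your three-stage reduction $\GG' \rightsquigarrow \GG'' \rightsquigarrow \GG''' \rightsquigarrow \GG$ is exactly the natural way to unwind $\ext = \extend \circ \labelling$: first undo the pumping (type-preservation of $\extend$ via Lemma~\ref{thm:pump-copied} and Lemma~\ref{lem:replace-2-copied}, then Lemma~\ref{lem:replace-copied}), then observe that dropping the extra $\labelling$-assigned outputs cannot destroy legality, and finally swap $\tilde{\HH}$ for $\HH$ using the hypothesis $\type(\tilde{\HH}) = \type(\HH)$ and Lemma~\ref{lem:replace-copied} once more. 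One small point worth making explicit in the first step is that the decomposition $\labelling(\tilde{\HH}) = \XX \circ \YY \circ \ZZ$ used by $\extend$ has $\XX$ and $\ZZ$ entirely unlabeled (all labels added by $\labelling$ land in $\YY$), and that $|\XX|, |\ZZ| \geq \Lpump$ because $\ell = 2(r + \Lpump)$; this is what makes $\pump$ applicable and ensures $\type(\pump(\XX,w)) = \type(\XX)$. With that check in place, the proof goes through cleanly.
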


\begin{lemma} \label{lem:recover-2-copied}
Let $\HH = \ext(\tilde{\HH})$ for some partially labeled bipolar tree  $\tilde{\HH}$.
If $\GG' = \cut(\GG,\HH)$ admits a legal labeling $\LL_\diamond'$, then $\GG$ admits a legal labeling $\LL_\diamond$ such that ${\LL_\diamond}(v) = \LL_\diamond'(v')$ for each vertex $v \in V(G) \setminus V(H)$ and its
corresponding vertex $v'$ in $\GG'$.
\end{lemma}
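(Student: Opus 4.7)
The plan is to exploit the defining feature of $\ext = \extend \circ \labelling$: the bipolar tree $\HH = \ext(\tilde\HH)$ carries pre-assigned output labels on every vertex of $N^{r-1}(e)$, where $e = \{v_{\lfloor x/2 \rfloor}, v_{\lfloor x/2 \rfloor + 1}\}$ is the middle edge of the core path of $\HH$. These labels are part of the partial labeling of $\HH$, so both copies of $\HH$ appearing in $\GG' = \cut(\GG,\HH)$ inherit them, and the hypothesized legal labeling $\LL_\diamond'$ must agree with them. Consequently the two copies --- call them $\HH^{(u)}$ (still attached to $u$ via pole $s$) and $\HH^{(v)}$ (still attached to $v$ via pole $t$) --- carry identical $\LL_\diamond'$-labels on every vertex of $N^{r-1}(e)$. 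This matching at the middle is the only ingredient I will need in order to glue the two copies back into a single legal labeling of $\HH$ inside $\GG$.

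Concretely, I split $\HH$ along $e$ into two subtrees $\HH_s$ and $\HH_t$, where $\HH_s$ contains pole $s$ together with $v_{\lfloor x/2 \rfloor}$ and $\HH_t$ contains pole $t$ together with $v_{\lfloor x/2 \rfloor + 1}$. I then define a complete labeling $\LL_\diamond$ of $\GG$ by taking $\LL_\diamond(v) = \LL_\diamond'(v')$ for $v \in V(G) \setminus V(H)$, using the $\LL_\diamond'$-labels inherited from $\HH^{(u)}$ for each vertex of $\HH_s$, and using the $\LL_\diamond'$-labels inherited from $\HH^{(v)}$ for each vertex of $\HH_t$. This is well-defined because $\HH_s$ and $\HH_t$ partition $\HH$, and it extends the partial labeling of $\HH$ because both copies of $\HH$ extend it individually.

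To verify local consistency of $\LL_\diamond$ at each $z \in V(G)$, I distinguish two cases according to whether $N^r_\GG(z)$ crosses the middle edge $e$. If it does not, the long core path of $\HH$ (of length at least $2w + 2r$, guaranteed by the $\pump$ calls inside $\extend$) prevents $N^r_\GG(z)$ from simultaneously reaching the $u$-side and the $v$-side of $\GG \setminus V(H)$; hence $N^r_\GG(z)$ together with its $\LL_\diamond$-labels is isomorphic as an input/output-labeled subgraph to $N^r_{\GG'}(z)$ inside one of the two components of $\GG'$, and local consistency at $z$ follows from the legality of $\LL_\diamond'$. If $N^r_\GG(z)$ does cross $e$, then any shortest path inside $N^r_\GG(z)$ from $z$ to a vertex $y$ on the opposite side has length at most $r$ and is forced to traverse both endpoints of $e$; the decomposition $\dist(z, v_{\lfloor x/2 \rfloor}) + 1 + \dist(v_{\lfloor x/2 \rfloor + 1}, y) \leq r$ then forces both $z$ and $y$ into $N^{r-1}(e)$, i.e., the region pre-labeled by $\labelling$. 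On this region the labels of $\HH^{(u)}$ and $\HH^{(v)}$ coincide by the opening observation, so the labeled neighborhood of $z$ in $\GG$ is identical to the corresponding neighborhood in, say, the $\HH^{(u)}$-component of $\GG'$, which is a legal configuration.

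The main obstacle is this second case: controlling what happens at the seam where $\HH_s$ and $\HH_t$ are reattached. The role of $\labelling$ inside $\ext$ is precisely to pre-label a middle band of radius $r-1$ around $e$, which is exactly wide enough to absorb every radius-$r$ neighborhood crossing the seam; this forces $\LL_\diamond'$ to agree on the two copies wherever such a neighborhood could possibly detect a difference. Once this is established, the correspondence $\LL_\diamond(v) = \LL_\diamond'(v')$ for $v \in V(G) \setminus V(H)$ is immediate from the construction.
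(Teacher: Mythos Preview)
Your argument is correct and is precisely the approach the paper sketches (see the discussion around Figure~\ref{fig:recoverlabel} in Section~\ref{sect-high-level}): split $\HH$ along the pre-labeled middle edge $e$, take the labeling of each half from the copy of $\HH$ in $\GG'$ that shares a pole with that half, and observe that any radius-$r$ ball crossing $e$ is contained in the pre-labeled band $N^{r-1}(e)$ where the two copies are forced to agree. One small notational slip: the edge $e$ is the middle edge of the core path of $\tilde\HH$ (equivalently, of the unpumped segment $\YY$ inside $\HH$), not literally the middle edge of the extended core path of $\HH$; your argument only uses that $N^{r-1}(e)$ is pre-labeled and that each pole of $\HH$ is at distance at least $w$ from $e$, both of which hold for this $e$.
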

\section{Existence of the Complexity Gaps}\label{sect-gap-details}

The goal of this section is to provide a complete proof of Theorem~\ref{thm-gap-main}. We begin with defining the sequence of partially labeled virtual graphs $\RRR{1}, \RRC{1}, \RRR{2}, \RRC{2}, \ldots, \RRR{k}$ as in Section~\ref{sect:gap-extend}.
Unless otherwise stated, all graphs in this section are partially labeled.

Here we need one additional operation $\truncate$ whose definition is deferred. This operation satisfies the following properties, and it will play a critical role in establishing the decidability result in Theorem~\ref{thm-gap-main}.
\begin{itemize}
    \item If $\TT$ is a   rooted tree, then $\truncate(\TT)$ is also a  rooted tree of the same class as $\TT$.
    \item If $\HH$ is a   bipolar tree, then $\truncate(\HH)$ is also a   bipolar tree of the same type as $\HH$.
\end{itemize}

\paragraph{Imaginary and real vertices.} Recall from Section~\ref{sect:gap-extend}  that each of   $\RRR{1}, \RRC{1}, \RRR{2}, \RRC{2}, \ldots, \RRR{k}$ consists of {\em imaginary} and {\em real} parts. 
The graph $\RRR{i}$ has $\GR{i}$ as its real part, and the graph $\RRC{i}$ has $\GC{i}$ as its real part. 
Suppose $v$ is a real vertex in  $\RRR{i}$. We let $\TTR{i}(v)$ denote the rooted subtree of $\RRR{i}$ induced by $v$ and all imaginary subtrees stored in $v$, and so the root of $\TTR{i}(v)$  is $v$. The tree $\TTC{i}(v)$ is defined analogously for each real vertex $v$ in $\RRC{i}$.

Let $\Pset_i$ denote the connected components induced by $\VC{i}$ in the graph  $\RRC{i}$. For each $P \in \Pset_i$,   let $\HH_P$ denote the bipolar subtree of  $\RRC{i}$ that includes $P$ and all imaginary subtrees stored in vertices in $P$. If $P = (v_1, v_2, \ldots, v_s)$, then  $\HH_P = (\TTC{i}(v_1), \TTC{i}(v_2), \ldots, \TTC{i}(v_s))$. The core path of $\HH_P$ is $P$. The two poles of $\HH_P$ are $v_1$ and $v_s$.

\paragraph{Construction of $\RRR{i}$.}
For the base case, define $\RRR{1} = \GR{1} = G$. All vertices in $\RRR{1}$ are real, and they are not assigned output labels.

Let $2 \leq i \leq k$. Assume  $\RRC{i-1}$ has been defined. The graph $\RRR{i}$ is defined by the  following procedure. Initialize $\tilde{\GG} \leftarrow \RRC{i-1}$. For each $P \in \Pset_{i-1}$, we compute $\tilde{\HH} = \truncate(\HH_P)$ and $\HH_P^+ = \ext(\tilde{\HH})$,  update $\tilde{\GG} \leftarrow \replace(\tilde{\GG}, \HH_P, \HH_P^+)$, and then $\tilde{\GG} \leftarrow \cut(\tilde{\GG},\HH_P^+)$. 
After these operations, we set $\RRR{i} \leftarrow \tilde{\GG}$.

Intuitively, what these operations do  to the bipolar subtree $\HH_P$ is that we first change it to another one $\tilde{\HH}$ having the same type as $\HH_P$, and then we fix the output labels of its middle part and extend its length using a pumping lemma, and then we duplicate the resulting bipolar subtree using $\cut$. The purpose of these operations is to deal with the issues  discussed in Section~\ref{sect:gap-extend}.

\paragraph{Construction of $\RRC{i}$.}
Let $1 \leq i \leq k-1$. Assume  $\RRR{i}$ has been defined.  The graph $\RRC{i}$ is defined by the following procedure.
 Initialize $\tilde{\GG} \leftarrow \RRR{i}$ after the removal of each connected component that does not contain a vertex in $V(\GC{i})$. For each $v \in V(\GC{i})$ in the graph $\RRR{i}$, do the following. Let $\TT_v$ be the subtree induced by $v$ and all vertices in $V(\RRR{i}) \setminus V(\GC{i})$ that are reachable to $v$ via vertices in $V(\RRR{i}) \setminus V(\GC{i})$. Here $\TT_v$ is interpreted as a tree rooted at $v$, and $\GC{i}$ is interpreted as a subgraph of $\RRR{i}$.  Let $\TT_v^+ = \truncate(\TT_v)$, and 
  update $\tilde{\GG} \leftarrow \replace(\tilde{\GG}, \TT_v, \TT_v^+)$.
 After these operations, we set  $\RRC{i} \leftarrow \tilde{\GG}$.

 \paragraph{An overview of the proof.}
 Our plan is to show that any $O(n^{1/k})$-round solvable LCL problem $\mathcal{P}$ can be solved in the following canonical way.
 First, construct an  ($\gamma, \ell$)-decomposition, with
 \begin{align*}
     \gamma &= n^{1/k} (\ell /2)^{1 - 1/k} = \Theta(n^{1/k}) \ \  \text{ and}\\
     \ell &= 2(r+\Lpump) = \Theta(1).
 \end{align*}
 This decomposes  the vertex set $V$ into
$
 (\VR{1}, \VC{1}, \VR{2}, \VC{2},  \ldots,   \VR{k})$.
 
Then we construct the  virtual graphs $\RRR{1}, \RRC{1}, \RRR{2}, \RRC{2}, \ldots, \RRR{k}$. Note that the partial labeling in these graphs depend on the labeling function $f$ inside the operation $\labelling$. We will later see that if  $\mathcal{P}$ can be solved in $o(n^{1/(k-1)})$ rounds, then there is a labeling function $f$ guaranteeing that these partial labelings can be completed into a complete legal labeling. 

Remember the real part of $\RRR{k}$ is $\VR{k}$, and so a connected component of $\RRR{k}$ is stored by vertices in a connected component of $\VR{k}$. Since each connected component of $\VR{k}$ has size at most $O(n^{1/k})$, a  complete legal labeling of $\RRR{k}$ can be computed in $O(n^{1/k})$ rounds.

In view of Lemmas~\ref{lem:replace-copied},~\ref{lem:recover-1-copied}, and~\ref{lem:recover-2-copied}  and the recursive construction of $\RRR{i}$ and  $\RRC{i}$, given a legal labeling of $\RRR{k}$, it is possible to obtain a complete legal labeling of $\RRC{k-1}, \RRR{k-1}, \ldots, \RRR{1}=G$, and this can be implemented in the deterministic $\LOCAL$ model in $O(n^{1/k})$ rounds. The round complexity analysis relies on the fact that each connected component of $\VR{i}$ has diameter $O(n^{1/k})$.

To summarize, the goal is to show that  if $\mathcal{P}$ can be solved in $o(n^{1/(k-1)})$ rounds by a randomized $\LOCAL$ algorithm, then there exists a {\em feasible} labeling function $f$ which implies the existence of an $O(n^{1/k})$-round deterministic $\LOCAL$ algorithm for $\mathcal{P}$. Moreover, we will show that whether a feasible function $f$ exists is \emph{decidable}.

\paragraph{Organization.}
 In Section~\ref{sect-feasible-function-}, we formally define what makes a labeling function $f$ feasible. We will show that (i)  if $\mathcal{P}$ can be solved in randomized $o(n^{1/(k-1)})$ rounds, then there exists a feasible function, and (ii) the existence of a feasible function is decidable.
In Section~\ref{sect-alg-details} we give a  detailed description of how we can obtain a complete legal labeling of $\RRC{k-1}, \RRR{k-1}, \ldots, \RRR{1}=G$ from any given legal labeling of $\RRR{k}$ in $O(n^{1/k})$ rounds.

\subsection{Feasible Function}\label{sect-feasible-function-}

We will define a hierarchy of subsets of rooted trees $\Tset_1, \Tset_2, \ldots, \Tset_k$,  rooted trees $\Tset_1^+, \Tset_2^+,\ldots, \Tset_k^+$, bipolar trees $\Hset_1,  \Hset_2,\ldots, \Hset_{k-1}$, and bipolar trees $\Hset_1^+, \Hset_2^+, \ldots, \Hset_{k-1}^+$.  All these trees are  partially labeled. The formal definition of these sets also includes the definition of the $\truncate$ operation.

\begin{description}
\item[$\Tset$ Sets:]
The set $\Tset_1$ consists of unlabeled rooted trees of maximum degree $\Delta$.
For $1 < i \leq k$, we have $\TT \in \Tset_i$ if $\TT$ can be constructed as follows. 
Start with any unlabeled rooted trees $\TT'$ of maximum degree $\Delta$. For each vertex $v$ in  $\TT'$, attach to $v$ at most $\Delta - \deg(v)$ bipolar trees from $\Hset_{i-1}^+$ by adding an edge linking $v$ with  a pole of the bipolar tree.

\item[$\Tset^+$ Sets:]
The set $\Tset_i^+$ is defined by the following construction procedure.
If $i = 1$, we initially set $\Tset_i^+ \leftarrow \emptyset$. If $i > 1$, we initially set $\Tset_i^+ \leftarrow \Tset_{i-1}^+$.
Then, we process each tree $\TT \in \Tset_i$ one by one, prioritizing trees with smaller height. When we process  $\TT$, we first check if there is already a tree $\TT' \in \Tset_i^+$ such that $\class(\TT) = \class(\TT')$. If so, then we define $\truncate(\TT) = \TT'$. Otherwise, we add $\TT$ to the set $\Tset_i^+$, and define $\truncate(\TT) = \TT$.

\item[$\Hset$ Sets:]
The set ${\Hset}_i$ contains all  bipolar trees $\HH = (\TT_j)_{j \in [x]}$
such that $x\in [\ell,2\ell]$,
and for each $j \in [x]$, the tree $\TT_j$ is constructed as follows.
Start with the root vertex $z$, and attach to $z$ at most $\Delta - 2$ trees from ${\Tset}_i^+$ whose root has maximum degree at most $\Delta - 1$.

\item[$\Hset^+$ Sets:]
The set $\Hset^+$ is defined by the following construction procedure.
If $i = 1$, we initially set $\Hset_i^+ \leftarrow \emptyset$. If $i > 1$, we initially set $\Hset_i^+ \leftarrow \Hset_{i-1}^+$.
Then, we process each tree $\HH \in \Hset_i$ one by one in any order.  When we process  $\HH$, we first check if there is already a
 tree $\HH' \in \Hset_i$ such that
$\type({\HH'}) = \type(\HH)$ and $\ext({\HH'}) \in {\Hset}_i^+$.
If so, we define $\truncate(\HH) = \HH'$.
Otherwise, we add $\ext({\HH})$ to the set ${\Hset}_i^+$, and define
 $\truncate(\HH) =\HH$.
\end{description}

\paragraph{Relation to the Virtual Graphs.} Before we proceed, we discuss how the above sets relate to the sequence of virtual graphs
$\RRR{1}, \RRC{1}, \RRR{2}, \RRC{2}, \ldots, \RRR{k}$.
The following statements are straightforward to verify by inspection of the definition of these sets. For $i \in [1, k-1]$, the set $\Tset_i$ is a superset of all rooted trees that might appear as a connected component in $\RRR{i} \setminus \GC{i}$. In particular, for $i=k$, the set $\Tset_k$ represents the a superset  of all possibles connected components that might appear in $\RRR{k}$.
The set $\Tset_i^+$ is a superset of all possible imaginary trees whose parent is a real vertex $v \in V(\GC{i})$ in $\RRC{i}$.
The set $\Hset_i$  is a superset of all possible $\HH_P$, for $P \in \Pset_i$.
The set $\Hset_i^+$ then represents a superset of all possible $\HH_P^+$, for $P \in \Pset_i$.

\paragraph{Feasible Function.} The correctness of our approach   relies on the fact that the partial labeling of $\RRR{k}$ can be completed into a  complete legal labeling. To show that this is true, it suffices to show that each member of $\Tset_k$ admits a legal labeling.
In particular, whether this is true only depends on the set of all classes in $\Tset_k$.

The set $\Tset_k$ is well-defined so long as the labeling function $f$ inside $\labelling$ and the parameter $w$ inside $\extend$ are fixed. Remember that $\ell$ is always defined as $2(r+\Lpump)$. The choice of $w$ represents the target length of the pumping lemma, and it does not affect the set of classes in $\Tset_k$. Hence the only adjustable parameter that matters is $f$. In view of the above, we say that $f$ is {\em feasible} if it leads to a set $\Tset_k$ where all members admit a legal labeling.

\paragraph{Properties of the Sets.}
We  analyze the properties of the four sequences of sets. We aim to prove the following statements.
\begin{itemize}
    \item The set of classes in  $\Tset_k$ is invariant of the parameter $w$. This implies that  feasible function is well-defined, see Lemma~\ref{lem:w-converge-analogous}.
    \item Given an LCL problem $\mathcal{P}$, whether a feasible function $f$ exists is decidable, see Lemma~\ref{lem-feasible-func-decide}.
    \item If there is a randomized $\LOCAL$ algorithm solving $\mathcal{P}$ in $o(n^{1/(k-1)})$ rounds, then there exists a feasible function, see Lemma~\ref{lem:func-exist-analogous}.
\end{itemize}

If $\Tset$ is a set of rooted trees, we define $\class(\Tset) = \{\class(\TT) \;|\; \TT\in \Tset\}$. If $\Hset$ is a set of bipolar trees, we define
  $\type(\Hset) = \{\type(\HH) \;|\; \HH\in\Hset\}$. The {\em height} of a rooted tree is defined as the maximum distance between a vertex and the root.

\begin{lemma}\label{lem:w-converge-analogous}
For each $i \in [k]$,
$\class(\Tset_i)$ does not depend on the parameter $w$ used in the operation $\extend$.
\end{lemma}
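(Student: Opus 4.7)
The plan is to proceed by induction on $i$, establishing simultaneously the following two statements:
(a) $\class(\Tset_i)$ does not depend on $w$, and
(b) $\type(\Hset_i)$ does not depend on $w$.
The parameter $w$ enters only through the $\extend$ step inside $\ext$, which in turn enters only when building the sets $\Hset_j^+$ from $\Hset_j$. So the intuition is that, although the concrete trees grow with $w$, the pumping lemma (Lemma~\ref{thm:pump-copied}) guarantees that pumping does not alter type, so every quantity we actually care about is invariant.

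The base case $i=1$ is trivial: $\Tset_1$ is the set of unlabeled rooted trees of maximum degree $\Delta$ and makes no reference to $w$. For the inductive step, I would first prove a small auxiliary claim: for any bipolar tree $\HH$, the value $\type(\ext(\HH))$ is independent of $w$. Indeed, $\ext(\HH) = \extend(\labelling(\HH))$; the operation $\labelling$ assigns labels via the fixed function $f$ (no $w$), and $\extend$ applies $\pump(\cdot, w)$ to the two halves. By Lemma~\ref{thm:pump-copied} (equivalently, the function $\pump$ as specified), pumping preserves type, hence $\type(\ext(\HH))$ depends only on $\type(\labelling(\HH))$, which is independent of $w$.

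Next, I would argue (a)$\Rightarrow$(b) at level $i$. Since $\Tset_i^+$ is obtained by greedily deleting trees whose class already appears, $\class(\Tset_i^+)=\class(\Tset_i)$, so by induction $\class(\Tset_i^+)$ is independent of $w$. Each rooted subtree $\TT_j$ in a bipolar tree $\HH=(\TT_1,\ldots,\TT_x)\in\Hset_i$ is built by attaching at most $\Delta-2$ trees from $\Tset_i^+$ to a new root, so the set of possible classes $\class(\TT_j)$ is determined by $\class(\Tset_i^+)$ and hence is $w$-independent. By Lemma~\ref{lem:type2class-copied}, $\type(\HH)$ is determined by the list $\class(\TT_1),\ldots,\class(\TT_x)$, so $\type(\Hset_i)$ is independent of $w$. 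This gives (b).

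Finally, I would argue (b)$\Rightarrow$(a) at level $i+1$. Each $\HH^+ \in \Hset_i^+$ equals $\ext(\HH)$ for some $\HH \in \Hset_i$, so by the auxiliary claim together with (b), the set $\type(\Hset_i^+)$ is independent of $w$. A tree $\TT \in \Tset_{i+1}$ is built from an unlabeled rooted tree by attaching elements of $\Hset_i^+$ via one pole; by Lemma~\ref{lem:type2class-copied}, the rooted class contributed by each such attached bipolar tree is determined by its type, and by Lemma~\ref{lem:replace-rootedtree-copied} the class of $\TT$ is determined by the classes of its rooted subtrees. Hence $\class(\Tset_{i+1})$ is determined by $\type(\Hset_i^+)$, which is $w$-independent. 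This closes the induction.

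The main subtlety is bookkeeping: the concrete sets $\Tset_i^+$ and $\Hset_i^+$ genuinely vary with $w$ (the canonical representatives in $\Hset_i^+$ get longer as $w$ grows), so one must be careful never to argue about the sets themselves, only about $\class(\cdot)$ and $\type(\cdot)$. The pumping lemma is the only nontrivial ingredient and it is exactly the tool that severs the dependence on $w$.
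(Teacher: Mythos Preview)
Your proposal is correct and follows essentially the same inductive scheme as the paper: induct on $i$, observe that $\class(\Tset_1)$ is $w$-free by definition, and in the step pass from $\class(\Tset_{i-1})$ to $\type(\Hset_{i-1})$ (via Lemma~\ref{lem:type2class-copied} and Lemma~\ref{lem:replace-rootedtree-copied}), then to $\type(\Hset_{i-1}^+)$ (via the type-invariance of $\extend$ under $w$), then to $\class(\Tset_i)$. Two minor remarks: your claim that every $\HH^+\in\Hset_i^+$ equals $\ext(\HH)$ for some $\HH\in\Hset_i$ is not literally true for $i>1$ since $\Hset_i^+$ is initialized with $\Hset_{i-1}^+$, but this is harmless since $\type(\Hset_{i-1}^+)$ is already $w$-invariant by induction; and in your auxiliary claim, the passage from ``$\pump$ preserves type'' to ``$\type(\ext(\HH))$ depends only on $\type(\labelling(\HH))$'' implicitly uses Lemma~\ref{lem:replace-2-copied} to glue the pumped pieces $\pump(\XX,w)\circ\YY\circ\pump(\ZZ,w)$ back together.
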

\begin{proof}
By its definition, $\class(\Tset_1)$ is invariant of $w$.
Suppose, by induction, that $\class(\Tset_{i-1})$ is invariant of $w$.
Then we claim that 
$\type(\Hset_{i-1})$, $\type(\Hset_{i-1}^+)$,  and $\class(\Tset_{i})$ are also invariant of $w$.

The fact that $\type(\Hset_{i-1})$ is  invariant of $w$ is due to Lemma~\ref{lem:replace-rootedtree-copied}.
The fact that $\type(\Hset_{i-1}^+)$ is  invariant of $w$ is due to 
the fact that
the type of $\HH' = \extend(\HH)$ is invariant over all choices of the parameter $w$.
Given that $\type(\Hset_{i-1}^+)$ is invariant of $w$, we can infer that  $\class(\Tset_{i})$ is invariant of $w$ in view of Lemma~\ref{lem:replace-rootedtree-copied} and Lemma~\ref{lem:type2class-copied}.
\end{proof}

We define $\Tset_i^{j}$ as follows.
Start with any unlabeled rooted trees $\TT'$ of maximum degree $\Delta$ and height at most $j$. For each vertex $v$ in  $\TT'$, attach to $v$ at most $\Delta - \deg(v)$ bipolar trees from $\Hset_{i-1}$ by adding an edge linking $v$ with one pole of the bipolar tree. For the special case of $i=1$, define $\Tset_1^{j}$ as the set of rooted trees of maximum degree $\Delta$ and height at most $j$.
We have 
\[\Tset_i^{0} \subseteq \Tset_i^{1} \subseteq 
\Tset_i^{2} \subseteq \cdots \subseteq  \Tset_i.\]

\begin{lemma}\label{lem-tree-height}
There exists a universal constant $h_0$ depending only on the description of the LCL $\mathcal{P}$ such that $\Tset_i^+ \subseteq \Tset_i^{h_0}$.
\end{lemma}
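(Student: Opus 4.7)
The plan is to exploit the fact that, for fixed LCL $\mathcal{P}$ and maximum degree $\Delta$, the equivalence relation $\simm$ on partially labeled rooted trees has only finitely many equivalence classes; let $N=N(\mathcal{P})$ denote this number and set $h_0 = N+1$. Because every element of $\Tset_i^+$ is, by construction, a tree selected from $\Tset_i$ as the first representative of its class encountered during the height-ordered scan, it suffices to prove the following claim: every class $C\in\class(\Tset_i)$ has a representative in $\Tset_i$ whose underlying unlabeled skeleton has height at most $h_0$.

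For the claim I would run a Myhill--Nerode style pigeonhole/pumping argument. Given any $\TT\in\Tset_i$ with underlying unlabeled skeleton $\TT'$ of height strictly greater than $h_0$, fix a root-to-leaf path $v_0,v_1,\dots,v_d$ in $\TT'$ with $d>N$. For each $0\leq j\leq d$, let $\TT^{(j)}$ denote the rooted subtree of $\TT$ induced by $v_j$ and its descendants, together with all $\Hset_{i-1}^+$ bipolar trees hanging off of them; each such $\TT^{(j)}$ is itself an element of $\Tset_i$, since the construction of $\Tset_i$ is preserved under restriction to descendants. By pigeonhole, there exist indices $a<b$ with $\class(\TT^{(a)})=\class(\TT^{(b)})$. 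Applying Lemma~\ref{lem:replace-rootedtree-copied}, the result $\TT_\star$ of replacing $\TT^{(a)}$ in $\TT$ by $\TT^{(b)}$ still lies in $\Tset_i$, satisfies $\class(\TT_\star)=\class(\TT)$, and has strictly smaller skeleton height as well as no larger total height than $\TT$. Iterating produces a representative of $\class(\TT)$ whose skeleton has height at most $h_0$.

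With the claim in hand, the lemma follows quickly. The construction of $\Tset_i^+$ processes members of $\Tset_i$ in order of increasing height and retains a tree precisely when its class has not previously appeared in $\Tset_i^+$. Thus the tree added for each new class is a minimum-height representative of that class in $\Tset_i$. If such a representative had underlying skeleton of height exceeding $h_0$, the claim above would supply a strictly shorter representative of the same class, contradicting minimality. Hence every tree of $\Tset_i^+$ lies in $\Tset_i^{h_0}$, proving the bound.

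The main obstacle I anticipate is the bookkeeping around the substitution step: one must check that swapping $\TT^{(a)}$ for $\TT^{(b)}$ respects the degree constraints baked into $\Tset_i$ (each vertex has degree at most $\Delta$ after the substitution), that the substitution really does strictly decrease the skeleton height without increasing the overall height, and that the hypotheses of Lemma~\ref{lem:replace-rootedtree-copied} (applied to the one-pole rooted tree $\TT^{(a)}$ viewed as a subgraph of $\TT$) are met. All three are routine from the definitions, but they need to be stated carefully in order to make the pumping loop formally correct.
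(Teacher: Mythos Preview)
Your pumping approach is sound in spirit, but the termination step has a real gap. You assert that replacing $\TT^{(a)}$ by its proper descendant subtree $\TT^{(b)}$ yields a tree of \emph{strictly smaller skeleton height}; this is false whenever the skeleton has another root-to-leaf path, disjoint from the subtree at $v_a$, that also realizes the maximal depth. In that case the substitution shortens only the branch through $v_a$ while the other branch---and hence the skeleton height---is unchanged. The correct invariant is the total number of skeleton vertices, which \emph{does} strictly decrease at every step because $\TT^{(b)}$ is a proper subtree of $\TT^{(a)}$. Termination then follows, and at termination every root-to-leaf skeleton path has length at most $N$ (else another step would apply), so the skeleton height is at most $N<h_0$. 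With this correction your argument goes through; the remaining checks you flag (degree bounds, membership in $\Tset_i$, applicability of Lemma~\ref{lem:replace-rootedtree-copied}) are indeed routine.

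For comparison, the paper takes a different and somewhat cleaner route that avoids per-tree surgery altogether. It shows that $\class(\Tset_i^{j+1})$ is determined solely by $\class(\Tset_i^{j})$: a tree in $\Tset_i^{j+1}$ is obtained by hanging at most $\Delta$ trees from $\Tset_i^{j}$ off a fresh root, and Lemma~\ref{lem:replace-rootedtree-copied} then says the class of the result depends only on the classes of the attached pieces. Since the chain $\class(\Tset_i^{0})\subseteq\class(\Tset_i^{1})\subseteq\cdots$ is nondecreasing and bounded by the finite total number of classes, it stabilizes at some $h_0$, whence $\class(\Tset_i^{h_0})=\class(\Tset_i)$ and every class already has a representative in $\Tset_i^{h_0}$. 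This structural fixed-point argument sidesteps the termination bookkeeping entirely; your pumping argument, once repaired, buys a concrete bound $h_0=N+1$ in terms of the number $N$ of $\simm$-classes, which the paper's version gives only implicitly.
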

\begin{proof}
We claim that $\class(\Tset_i^{j+1})$ depends only on $\class(\Tset_i^{j})$. That is, if $\class(\Tset_i^{h+1}) = \class(\Tset_i^{h})$, then we have $\class(\Tset_i^{h'}) = \class(\Tset_i^{h})$ for all $h' \geq h$, which means that $\class(\Tset_i^{h}) = \class(\Tset_i)$, and so  all trees in $\Tset_i^+$ must come from $\Tset_i^{h}$. Since the total number of classes is a constant,  there exists a  constant $h_0$ such that $\class(\Tset_i^{h_0+1}) = \class(\Tset_i^{h_0})$.

For the rest of the proof, we show that $\class(\Tset_i^{j+1})$ depends only on $\class(\Tset_i^{j})$. The set $\class(\Tset_i^{j+1})$ can be constructed alternatively as follows. Start with the root vertex $z$. Attach to $z$ at most $\Delta$ trees with maximum degree at most $\Delta-1$ from the set $\class(\Tset_i^{j})$.
The set $\class(\Tset_i^{j+1})$ is exactly the set of all trees that can be constructed using the above procedure. In view of Lemma~\ref{lem:replace-rootedtree-copied}, once the set of all classes of the trees of degree at most  $\Delta-1$ in $\Tset_i^{j}$ is fixed, then we can infer  $\class(\Tset_i^{j+1})$.
\end{proof}

\begin{lemma}\label{lem-feasible-func-decide}
Given an LCL problem $\mathcal{P}$, whether a feasible function $f$ exists is decidable.
\end{lemma}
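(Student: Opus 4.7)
The plan is to reduce the existence of a feasible $f$ to a finite search: enumerate the essentially distinct candidate functions $f$, and for each one decide whether the induced set $\Tset_k$ consists entirely of rooted trees admitting a legal labeling. Both the enumeration and the per-candidate check will be finite, so the whole procedure is decidable.

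First, I would argue that only finitely many $f$ need to be considered. By Lemma~\ref{lem-tree-height}, every rooted tree in $\Tset_i^+$ has height at most $h_0$, and since the maximum degree is $\Delta = O(1)$, there are only finitely many such trees up to isomorphism. Because every bipolar tree in $\Hset_i$ has $x \in [\ell, 2\ell]$ and is assembled from members of $\Tset_i^+$ via a bounded number of rooted-tree attachments, the set $\Hset_i$ is also finite. The function $f$ needs to be specified only on $\bigcup_{i=1}^{k-1} \Hset_i$, and its values lie in the finite set of output labelings of the bounded-size neighborhood $N^{r-1}(e)$; hence only finitely many distinct $f$ need be enumerated. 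A subtlety is that $\Hset_i$ itself depends on $f$ through the earlier sets $\Tset_j^+$ with $j<i$; this is resolved by interleaving the enumeration of $f$ with the construction of the sets, one layer at a time.

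Next, for each candidate $f$ I would compute the class/type sets $\class(\Tset_i)$, $\class(\Tset_i^+)$, $\type(\Hset_i)$, $\type(\Hset_i^+)$ in the order given by their recursive definitions. Each such set is determined by the previous ones using Lemma~\ref{lem:replace-rootedtree-copied} and Lemma~\ref{lem:type2class-copied}, and each is finite because $\simm$ has finitely many equivalence classes for any fixed LCL and fixed number of poles. By Lemma~\ref{lem:w-converge-analogous}, the resulting $\class(\Tset_k)$ is independent of the pumping parameter $w$, so we may carry out the computation using bounded-size canonical representatives obtained from the height-$h_0$ cap of Lemma~\ref{lem-tree-height}.

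Finally, feasibility of $f$ reduces to deciding, for each class $C \in \class(\Tset_k)$, whether some representative of $C$ admits a legal labeling; by Theorem~\ref{thm:rel-1-copied}, this is a property of the class alone. Equivalently, I would compute the set of labelable classes by a bottom-up fixed-point iteration (a class is labelable if it contains a rooted tree whose radius-$r$ neighborhoods are all in $\mathcal{C}$ and whose subtree classes are already labelable), which stabilizes in finitely many rounds because the set of classes is finite. Returning YES iff some enumerated $f$ yields $\class(\Tset_k)$ contained in the set of labelable classes decides the existence of a feasible function. The main obstacle is not any individual check but the joint layer-by-layer enumeration of $f$ together with $\Hset_i$, since the two are mutually recursive; handling this cleanly is the crux of the argument.
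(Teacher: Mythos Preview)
Your proposal is correct and follows essentially the same approach as the paper: use Lemma~\ref{lem-tree-height} to replace each infinite $\Tset_i$ by the finite $\Tset_i^{h_0}$, observe that feasibility depends only on $\class(\Tset_k)$ (which is finite and, by Lemma~\ref{lem:w-converge-analogous}, independent of $w$), and bound the number of candidate $f$'s by the finiteness of the relevant domain and range. The only cosmetic difference is that the paper bounds the number of $\labelling$ evaluations directly by the number of types (via the one-representative-per-type rule in the definition of $\Hset_i^+$), whereas you bound it via the finiteness of $\Hset_i$ itself once $w$ is fixed; your explicit handling of the layer-by-layer $f$/$\Hset_i$ recursion and the fixed-point check for labelability are reasonable elaborations the paper leaves implicit.
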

\begin{proof}
By its definition, $f$ is feasible if all members in $\Tset_k$ admit a legal labeling.
Whether a rooted tree admits a legal labeling depends only on its class,  so we can consider the set $\Tset_k^{+}$ instead of $\Tset_k$.

Given a labeling function $f$, we show that it is decidable whether $f$ is feasible.
The only part of the construction of $\Tset_k^{+}$  that involve infinite number of steps is the construction of the sets $\Tset_1,  \Tset_2, \ldots \Tset_k$. However, by Lemma~\ref{lem-tree-height}, we can simply replace each $\Tset_i$ with $\Tset_i^{h_0}$, which can be constructed using a bounded number of steps.

Next, we argue that the number of labeling function $f$ needed to consider is also bounded. In view of the way we define $\Hset_i^{+}$, we only compute $\labelling(\HH)$ for at most a constant number of $\HH$, and this number is upper bounded by the number of types. Therefore, the   number of labeling function $f$ needed to consider is also bounded.
\end{proof}

For the case the locality radius $r = O(1)$ is a constant independent of the  description length $N$ of $\mathcal{P}$, the algorithm implicitly in Lemma~\ref{lem-feasible-func-decide} runs in time $2^{2^{N^{O(1)}}}$.
The reason is that the number of classes and the number of types is $2^{N^{O(1)}}$ when the locality radius $r$ is a constant, see~\cite[Section 3.7]{ChangP17} for the calculation. Remember that in our construction of the four sequences of sets, the number of times we apply the labeling function $f$ is at most the number of types.
For a given bipolar tree $\HH$, the number of ways of labeling its middle part in $\labelling$ is at most $|\LabelOut|^{\Delta^{r-1}+1} < N$. The number of $\HH$ we apply the labeling function is at most the number of types, and hence the total number of possibilities we need to try is at most $N^{2^{N^{O(1)}}} = 2^{2^{N^{O(1)}}}$.
Note that we do not need to explicitly construct these four sequences of sets, and it suffices to maintain the set of types and the set of classes for these sets.

Lemma~\ref{lem-final-tree-size} is a straightforward consequence of Lemma~\ref{lem-tree-height} and the definition of the four sequences of sets.

\begin{lemma}\label{lem-final-tree-size}
Assuming that $k$, $r$, $\Delta$, $\Lpump$ are constants, the number of vertices in a tree in  $\Tset_k^{+}$ is $O(w^{k-1})$.
\end{lemma}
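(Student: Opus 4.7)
The plan is to prove this by a straightforward induction on $i$, using Lemma~\ref{lem-tree-height} as the crucial height-bounding fact. Let $\tau_i$ denote the maximum number of vertices in any rooted tree in $\Tset_i^{+}$, and let $\eta_i$ denote the maximum number of vertices in any bipolar tree in $\Hset_i^{+}$. The goal reduces to proving the recurrence $\tau_{i+1} = O(w \cdot \tau_i)$ together with the base case $\tau_1 = O(1)$, which then yields $\tau_k = O(w^{k-1})$ since $k = O(1)$.

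For the base case, Lemma~\ref{lem-tree-height} gives $\Tset_1^{+} \subseteq \Tset_1^{h_0}$ for a constant $h_0$, so every tree in $\Tset_1^{+}$ has height at most $h_0$ and maximum degree $\Delta$, whence $\tau_1 \leq 1 + \Delta + \Delta(\Delta-1) + \cdots + \Delta(\Delta-1)^{h_0 - 1} = O(1)$. For the inductive step, first bound $\eta_i$ in terms of $\tau_i$. A bipolar tree $\HH \in \Hset_i^{+}$ is obtained as $\ext(\tilde\HH) = \extend(\labelling(\tilde\HH))$ for some $\tilde\HH \in \Hset_i$ with core path of length in $[\ell,2\ell]$; the $\labelling$ step only assigns output labels without altering the underlying graph, while $\extend$ applies $\pump(\cdot,w)$ to the two halves, producing a bipolar tree whose core path has length at most $2(w + \Lpump) + O(\ell) = O(w)$. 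By the defining property of $\pump$, the set of rooted trees appearing along the core path of the pumped bipolar tree is a subset of the rooted trees in $\tilde\HH$, which are in turn rooted at a vertex with at most $\Delta - 2$ subtrees from $\Tset_i^{+}$ attached. Hence every vertex on the core path contributes at most $1 + (\Delta - 2)\tau_i = O(\tau_i)$ vertices, giving $\eta_i = O(w \cdot \tau_i)$.

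Next, bound $\tau_{i+1}$ in terms of $\eta_i$. By Lemma~\ref{lem-tree-height}, any $\TT \in \Tset_{i+1}^{+} \subseteq \Tset_{i+1}^{h_0}$ is obtained by starting from an unlabeled rooted tree $\TT'$ of maximum degree $\Delta$ and height at most $h_0$, and attaching to each vertex $v$ of $\TT'$ at most $\Delta - \deg(v)$ bipolar trees from $\Hset_i^{+}$. The number of vertices in $\TT'$ is $O(1)$, each of which absorbs at most $\Delta = O(1)$ bipolar trees of size at most $\eta_i$, so $\tau_{i+1} \leq |V(\TT')| \cdot (1 + \Delta \cdot \eta_i) = O(\eta_i) = O(w \cdot \tau_i)$.

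Chaining the recurrence $k - 1$ times starting from $\tau_1 = O(1)$ yields $\tau_k = O(w^{k-1})$, as required. I do not expect any serious obstacle: the only slightly delicate point is verifying that the pumped core path has length $O(w)$ rather than something larger, which follows directly from the stated property of $\pump$ that $x' \in [w, w + \Lpump]$ for each of the two halves produced by $\extend$, combined with $\Lpump = O(1)$.
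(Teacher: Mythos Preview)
Your proof is correct and follows exactly the approach the paper intends: the paper does not spell out a proof but merely states that the lemma is ``a straightforward consequence of Lemma~\ref{lem-tree-height} and the definition of the four sequences of sets,'' and your inductive argument via the quantities $\tau_i$ and $\eta_i$ is precisely the straightforward unwinding of those definitions. One very minor omission is that $\Hset_i^{+}$ is initialized with $\Hset_{i-1}^{+}$ (and similarly $\Tset_i^{+}$ with $\Tset_{i-1}^{+}$), so not every element of $\Hset_i^{+}$ arises as $\ext(\tilde\HH)$ for $\tilde\HH\in\Hset_i$; but those inherited elements are already covered by the bound $\eta_{i-1}$ from the previous step, so this does not affect your recurrence.
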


\begin{lemma}\label{lem:func-exist-analogous}
Suppose that there exists a randomized $\LOCAL$ algorithm $\mathcal{A}$ that solves $\mathcal{P}$ in $o(n^{1/(k-1)})$ rounds
on $n$-vertex bounded degree trees, with local probability of failure at most $1/n$.
Then there exists a feasible function $f$.
\end{lemma}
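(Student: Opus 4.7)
The plan is to read $f$ off the behavior of $\mathcal{A}$ and then verify feasibility by combining the locality of $\mathcal{A}$ with the probabilistic method. First I would choose the pumping parameter $w$ inside $\extend$ to be a sufficiently fast-growing function of $n$. By Lemma~\ref{lem-final-tree-size}, every tree in $\Tset_k^+$ then has size $n' = O(w^{k-1})$, and since $\mathcal{A}$ runs in $o(n^{1/(k-1)}) = o(w)$ rounds on such trees, I may assume its runtime $T^\star$ satisfies $T^\star + r < 0.1\,w$. The crucial consequence is that, in every pumped bipolar subtree $\ext(\HH)$ produced during the construction, the core path has length at least $w$, so the radius-$T^\star$ view of any vertex of the middle band $N^{r-1}(e)$ lies entirely inside that subtree.

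To define $f$, I associate to each possible bipolar tree $\HH$ with core length $x \in [\ell,2\ell]$ a canonical extension obtained by applying $\extend$ to the unlabeled $\HH$, embedding the result as a bipolar subtree of some canonical $n'$-vertex host tree, and fixing a canonical assignment of random bits to every vertex within radius $T^\star$ of the middle edge $e$. Then $f(\HH)$ is the labeling that $\mathcal{A}$ outputs on $N^{r-1}(e)$ under this canonical data; by $T^\star$-locality, the output depends only on $\HH$ and the canonical bits in the ball, so $f$ is a well-defined function of $\HH$. To verify feasibility, I would fix an arbitrary $\TT \in \Tset_k^+$ and embed it inside an $n'$-vertex host tree $G^\star$. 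The middle edges of the pumped bipolar subtrees occurring inside $\TT$ are pairwise separated by distance greater than $2T^\star$ because each pumped core has length at least $w$, so the radius-$T^\star$ balls around them are pairwise disjoint, and I can plant the canonical random bits at all of them simultaneously; the remaining random bits in $G^\star$ are drawn independently and uniformly. Since $\mathcal{A}$'s global failure probability on $G^\star$ is at most $1/n' < 1$, a successful execution exists; on such an execution, canonicity forces the output at each middle ball of $\TT$ to equal the corresponding $f(\HH)$, and hence the restriction of the output to $\TT$ is a complete legal labeling extending the partial labeling given by $f$.

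The main obstacle I expect is the bookkeeping around conditioning on the canonical random bits. After pinning the bits inside each planted ball, the residual distribution on the remaining bits of $G^\star$ is still a legitimate family of independent unbiased bits, and $\mathcal{A}$'s correctness is stated per instance of (graph, random bits), so the $1/n'$ failure bound survives the conditioning. A secondary subtlety is that the same $f$ must work for every $\TT \in \Tset_k^+$ simultaneously, not just for one; however, because $\Tset_k^+$ has constant size once $\mathcal{P}$ and $k$ are fixed, and because the same canonical ball data underlies every embedding regardless of the surrounding $\TT$, the single deterministic $f$ constructed above suffices uniformly, and each $\TT \in \Tset_k^+$ inherits a legally extendable partial labeling, completing the argument.
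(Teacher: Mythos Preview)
Your overall architecture is right—choose $w$ large enough that $\mathcal{A}$'s runtime is $\ll w$, observe that the radius-$T^\star$ balls around the middle edges of all pumped bipolar subtrees inside any $\TT\in\Tset_k^+$ are pairwise disjoint, and read $f$ off a single run of $\mathcal{A}$. But the conditioning step is a genuine gap.

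You write that ``$\mathcal{A}$'s correctness is stated per instance of (graph, random bits), so the $1/n'$ failure bound survives the conditioning.'' That is not the correctness guarantee of a randomized $\LOCAL$ algorithm: the bound is over a \emph{uniformly random} bit string, not for every fixed one. Once you pin down a particular canonical bit string inside each planted ball, the conditional failure probability on the remaining bits can be anything up to $1$. Concretely, your canonical bits could be a rare ``bad'' sample on which $\mathcal{A}$ already produces an inconsistent labeling on $N^{r-1}(e)$ itself; then $f(\HH)$ is not locally consistent and no extension to a legal labeling of $\TT$ exists, regardless of how the rest of the bits are drawn. Nothing in your construction rules this out.

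The paper (following \cite[Lemma~15]{ChangP17}) avoids this by defining $f(\HH)$ as the \emph{most probable} labeling of $N^{r-1}(e)$ under $\mathcal{A}$, rather than the outcome under one arbitrary seed. This guarantees $\Pr[\text{output on }N^{r-1}(e)=f(\HH)]\ge 1/\beta$ where $\beta=|\LabelOut|^{\Delta^{r}}$, and is the reason the host tree is taken of size $n'=\beta N+1$ rather than just $N$: the extra factor $\beta$ is exactly what is needed so that the event ``$\mathcal{A}$ agrees with $f$ at a middle ball'' has enough mass to beat the $1/n'$ failure probability when you combine the events. Your proposal has no analogue of this $\beta$ factor, which is another symptom of the same missing idea.
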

\begin{proof}
Define $\beta = |\LabelOut|^{\Delta^{r}}$ to be an upper bound on the number of distinct output labelings of $N^{r-1}(e)$, where $e$ is any edge in any graph of maximum degree $\Delta$.
Define $N$ as the maximum number of vertices of a tree in ${\Tset}_{k}^+$
over {\em all} choices of labeling function $f$.
As $\Delta, r, k, \Lpump$ are all constants, we have $N = O(w^{k-1})$, see Lemma~\ref{lem-final-tree-size}.
Define $t$ to be the runtime of $\mathcal{A}$ on a $(\beta N + 1)$-vertex tree.
Note that $t$ depends on $N$, which depends on $w$.
The rest of the proof follows from the proof of~\cite[Lemma~15]{ChangP17}.
The high-level idea is that we select $w$ to be sufficiently large such that
$w \geq 4(r+t)$. Such a $w$ exists since $\mathcal{A}$ runs in $o(n^{1/(k-1)})$ rounds on an $n$-vertex graph.
The function $f$ will be chosen as the {\em most probable} outcome of simulating $\mathcal{A}$.
Even though $\mathcal{A}$ is a randomized algorithm,  the function $f$ is defined {\em deterministically}. The argument in the proof of~\cite[Lemma~15]{ChangP17} shows that such function $f$ is feasible.
\end{proof}

\subsection{Details of the \texorpdfstring{$O(n^{1/k})$}{O(n to the power of 1/k)}-Round Algorithm}\label{sect-alg-details}

The goal of this section is to prove the following lemma.  This lemma, together with 
Lemma~\ref{lem-feasible-func-decide} and Lemma~\ref{lem:func-exist-analogous}, implies Theorem~\ref{thm-gap-main}.

\begin{lemma}\label{lem:lognalg-analogous}
Let $\mathcal{P}$ be any LCL problem on trees with $\Delta = O(1)$.
Given a feasible function $f$,
the LCL problem $\mathcal{P}$ can be solved in $O(n^{1/k})$ rounds in the deterministic $\LOCAL$ model.
\end{lemma}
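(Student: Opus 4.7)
The plan is to instantiate the pipeline sketched in Section~\ref{sect:gap-extend} and make its round complexity rigorous. First, I would invoke Lemma~\ref{lem-decomp-2} with $\gamma = n^{1/k}(\ell/2)^{1-1/k}$ and $\ell = 2(r+\Lpump)$ to compute a $(\gamma,\ell)$-decomposition $V = \VR{1}\cup\VC{1}\cup\cdots\cup\VR{k}$ in $O(n^{1/k})$ rounds deterministically. Then, for the fixed feasible function $f$ (together with any sufficiently large constant $w$), I would have the vertices of $G$ construct, layer by layer, the partially labeled virtual graphs $\RRR{1},\RRC{1},\RRR{2},\RRC{2},\ldots,\RRR{k}$ defined at the start of Section~\ref{sect-gap-details}. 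The imaginary subtrees have constant size (by Lemma~\ref{lem-final-tree-size} and the fact that $w,\Delta,r,\Lpump,k$ are all constants), so each step of the construction only involves local surgery: every real vertex can simulate any imaginary subtree attached to it, and the operations $\truncate$, $\ext$, and $\cut$ are performed within bipolar subtrees whose core paths correspond to connected components of $\VC{i}$, each of length at most $2\ell = O(1)$.

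The next step is to produce a complete legal labeling of $\RRR{k}$. Because the real part of $\RRR{k}$ is $\VR{k}$, whose connected components have diameter $O(\gamma)=O(n^{1/k})$ by Lemma~\ref{lem:prop-of-vr}, and each real vertex carries an attached bipolar/rooted imaginary subtree of constant size, each connected component of $\RRR{k}$ lies in the distance-$O(n^{1/k})$ neighborhood of a single vertex in $G$. Feasibility of $f$ means every member of $\Tset_k$ admits a legal labeling completing its partial labeling, and this is exactly what each connected component of $\RRR{k}$ looks like. Hence a legal labeling can be found by brute-force information gathering within each component in $O(n^{1/k})$ rounds.

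From a legal labeling of $\RRR{k}$, I would walk the construction backwards: alternately transform legal labelings of $\RRR{i+1}$ into legal labelings of $\RRC{i}$, and of $\RRC{i}$ into legal labelings of $\RRR{i}$, until reaching $\RRR{1}=G$. The two key transport mechanisms are Lemma~\ref{lem:recover-1-copied} (to undo the $\ext$ operation, recovering a legal labeling for the original $\HH_P$ from one on $\HH_P^+$) and Lemma~\ref{lem:recover-2-copied} (to undo the $\cut$ operation, merging the two duplicated copies of $\HH_P^+$ into a single consistent labeling). The middle portion of each pumped bipolar tree was fixed by $\labelling$ before $\cut$ was applied, which is precisely what guarantees that the two duplicated copies agree on the middle and can be stitched back together. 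Since each application of these lemmas only requires information within a single connected component of $\VR{i}$ together with the $O(1)$-sized imaginary subtrees attached to it, and such a component has diameter $O(n^{1/k})$, each recovery step can be executed in $O(n^{1/k})$ rounds in the underlying network $G$.

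The main obstacle I foresee is the careful bookkeeping needed to show that each recovery step respects Lemmas~\ref{lem:recover-1-copied} and~\ref{lem:recover-2-copied}: specifically, verifying that the partially labeled tree $\HH_P^+$ which appears in $\RRR{i+1}$ really equals $\ext(\tilde{\HH})$ for $\tilde{\HH}=\truncate(\HH_P)$ with $\type(\tilde{\HH})=\type(\HH_P)$, so that the hypotheses of the recovery lemmas apply verbatim, and verifying that all surgery is contained in a single connected component of the next layer so the recovery can be done in parallel across components. Once this matching is established, summing the $O(n^{1/k})$ cost of the decomposition, the $O(n^{1/k})$ cost of solving each component of $\RRR{k}$ by brute force, and $k-1 = O(1)$ recovery passes each costing $O(n^{1/k})$ yields the overall $O(n^{1/k})$ round complexity.
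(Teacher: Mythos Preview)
Your proposal is correct and follows essentially the same approach as the paper: compute the $(\gamma,\ell)$-decomposition, build the virtual graphs, brute-force label $\RRR{k}$, and then undo the construction layer by layer. The one piece you leave implicit is the $\RRC{i}\to\RRR{i}$ direction, which is not covered by Lemmas~\ref{lem:recover-1-copied} and~\ref{lem:recover-2-copied} (those handle bipolar subtrees); here the paper invokes Lemma~\ref{lem:replace-copied} to undo the $\truncate$ on the rooted trees $\TT_v$ and separately uses feasibility of $f$ again to label the components of $\RRR{i}$ that contain no vertex of $\GC{i}$ (and were therefore dropped when forming $\RRC{i}$)---precisely the bookkeeping you anticipated.
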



\begin{proof}
Suppose $f$ is a feasible function, and we will use $f$ to design an $O(n^{1/k})$-round deterministic algorithm for $\mathcal{P}$.
Recall that the first step of the $O(n^{1/k})$-round algorithm is to 
 construct an  ($\gamma, \ell$)-decomposition, with $\gamma = n^{1/k}(2\ell)^{1-1/k} = \Theta(n^{1/k})$ and $\ell = 2(r+\Lpump) = \Theta(1)$. This decomposes  the vertex set $V$ into
 \[
 (\VR{1}, \VC{1}, \ldots, \VR{k-1}, \VC{k-1}, \VR{k}).
 \]
 
Then we construct the  virtual graphs $\RRR{1}, \RRC{1}, \RRR{2}, \RRC{2}, \ldots, \RRR{k}$. Since $f$ is feasible,  the  graph $\RRR{k}$ admits a legal labeling, and this legal labeling can be found in  $O(n^{1/k})$ rounds, as each connected component of $\VR{k}$, i.e., the real part of $\RRR{k}$, has diameter  $O(n^{1/k})$.
For the rest of the proof, we show that (i) given a legal labeling of $\RRR{i}$, we can construct a legal labeling of $\RRC{i-1}$ in $O(1)$ rounds, and (ii) given a legal labeling of $\RRC{i}$, we can construct a legal labeling of $\RRR{i}$ in $O(n^{1/k})$ rounds. Applying these constructions iteratively, we can obtain a legal labeling of $G = \RRR{1}$ in $O(n^{1/k})$ rounds, as $k = O(1)$.

\paragraph{From Legal Labeling of $\RRR{i}$ to Legal Labeling of  $\RRC{i-1}$.}
Let $1 < i \leq k$.  Recall that the graph $\RRR{i}$ is defined by the  following procedure. 
\begin{enumerate}
    \item Initialize $\tilde{\GG} \leftarrow \RRC{i-1}$. 
    \item For each $P \in \Pset_{i-1}$, we compute $\tilde{\HH} = \truncate(\HH_P)$ and $\HH_P^+ = \ext(\tilde{\HH})$,  update $\tilde{\GG} \leftarrow \replace(\tilde{\GG}, \HH_P, \HH_P^+)$, and then $\tilde{\GG} \leftarrow \cut(\tilde{\GG},\HH_P^+)$. 
    \item Set $\RRR{i} \leftarrow \tilde{\GG}$.
\end{enumerate}
Let $\GG_a$ be the graph $\tilde{\GG}$ in the middle of the above procedure when we are about to process a path $P \in \Pset_{i-1}$. Define 
$\GG_b = \replace(\tilde{\GG}, \HH_P, \tilde{\HH})$, where $\tilde{\HH} = \truncate(\HH_P)$, and 
$\GG_c = \replace(\GG_b, \tilde{\HH}, \HH_P^+)$,
and $\GG_d = \cut(\tilde{\GG},\HH_P^+)$.
Suppose we have a legal labeling $\LL$ of $\GG_d$. We show that there is a legal labeling $\LL'$ of $\GG_a$  such that the labeling $\LL'$  of $\GG_a$ is identical the the labeling $\LL$  of $\GG_d$ everywhere except $\HH_P$.

In view of  Lemmas~\ref{lem:recover-2-copied},~\ref{lem:recover-1-copied}, and~\ref{lem:replace-copied}, given a legal labeling of $\GG_d$, we can obtain a legal labeling of $\GG_c$ (using Lemma~\ref{lem:recover-2-copied}), $\GG_b$ (using Lemma~\ref{lem:recover-1-copied}), and $\GG_a$ (using Lemma~\ref{lem:replace-copied}). Remember that the operation $\tilde{\HH} = \truncate(\HH_P)$ guarantees that $\tilde{\HH}$ and $\HH_P$ have the same type. The legal labeling of $\GG_a$ is the same as the legal labeling of $\GG_d$ everywhere except $\HH_P$. 

In the virtual graph $\RRC{i-1}$, the bipolar subtree $\HH_P$ is stored in the path $P$ in the underlying graph $G$. To construct a legal labeling of $\RRC{i-1}$ from a legal labeling of $\RRR{i}$ we can apply the above procedure in parallel for each  $P \in \Pset_{i-1}$, since the computation for the labeling of $\HH_P$  is independent for each  $P \in \Pset_{i-1}$. The round complexity is $O(1)$, since the length of each $P \in \Pset_{i-1}$ is $O(1)$.

\paragraph{From Legal Labeling of $\RRC{i}$ to Legal Labeling of  $\RRR{i}$.}
Let $1 \leq i < k$. The graph $\RRC{i}$ is defined by the following procedure.
\begin{enumerate}
\item Initialize $\tilde{\GG} \leftarrow \RRR{i}$ after the removal of each connected component that does not contain a vertex in $V(\GC{i})$. \item For each $v \in V(\GC{i})$ in the graph $\RRR{i}$, do the following. Let $\TT_v$ be the subtree induced by $v$ and all vertices in $V(\RRR{i}) \setminus V(\GC{i})$ that are reachable to $v$ via vertices in $V(\RRR{i}) \setminus V(\GC{i})$. Here $\TT_v$ is interpreted as a tree rooted at $v$, and $\GC{i}$ is interpreted as a subgraph of $\RRR{i}$.  Let $\TT_v^+ = \truncate(\TT_v)$, and 
  update $\tilde{\GG} \leftarrow \replace(\tilde{\GG}, \TT_v, \TT_v^+)$.
  \item
 After these operations, we set  $\RRC{i} \leftarrow \tilde{\GG}$.
 \end{enumerate}
 
 We write $\GG'$ to denote the intermediate graph $\tilde{G}$ at the beginning of the above Step~2. Given a legal labeling of $\GG'$, to compute a legal labeling of $\RRR{i}$, it suffices to show that each connected component of  $\RRR{i}$ that does not contain a vertex in $V(\GC{i})$ admits a legal labeling. This is true, since $f$ is feasible and any such component is a tree in $\Tset_i$. Remember that each connected component of $\RRR{i}$ is stored in a connected component of $\VR{i}$, which is of diameter  $O(n^{1/k})$. Therefore, given a legal labeling of $\GG'$, we can obtain a legal labeling of $\RRR{i}$   in $O(n^{1/k})$ rounds.
 
 Now we show how to obtain a legal labeling of $\GG'$  from a given legal labeling of $\RRC{i}$. In view of how $\RRC{i}$ is constructed from $\GG'$ and Lemma~\ref{lem:replace-copied}, the desired legal labeling of $\GG'$  can be obtained by letting each  $\TT_v$ locally compute a labeling from the given labeling of $\TT_v^+$. Remember that $\TT_v^+ = \truncate(\TT_v)$ has the same class of $\TT_v$. The real part of $\TT_v$ in $\RRR{i}$ is connected and has diameter $O(n^{1/k})$, and so the computation takes $O(n^{1/k})$ rounds.
 \end{proof}

\newpage

\bibliographystyle{plain}
\bibliography{reference}

\end{document}